 \newcounter{remark}
 \theoremstyle{definition}
 \theoremstyle{plain}
 \newtheorem{lemma}{Lemma}
 \pgfplotsset{compat=1.7}
 \newcolumntype{L}[1]{>{\raggedright\let\newline\\\arraybackslash\hspace{0pt}}m{#1}}
 \newcolumntype{C}[1]{>{\centering\let\newline\\\arraybackslash\hspace{0pt}}m{#1}}
 \newcolumntype{R}[1]{>{\raggedleft\let\newline\\\arraybackslash\hspace{0pt}}m{#1}}
 \newcommand{\plot}[3]{\pgfplotstableread[col sep=comma,]{#1}\datatable
 	\begin{tikzpicture}
 		\begin{axis}[
 			width=5.8cm,
 			height=3.8cm,
 			x tick label style={font=\small, /pgf/number format/.cd,
 				precision=1,
 				/tikz/.cd},
 			xtick=data,
 			xticklabels from table={\datatable}{n},
 			legend style={at={(1.6,1)},anchor=north east},
 			y tick label style={font=\small,
 			}
 			]    
 			\addplot [mark=x, orange, opacity=0.8 ] table [x expr=\coordindex, y={RandomDictatorship}]{\datatable};
 			\addlegendentry{$RD$};
 			
 			\addplot [name path = C1ML, mark=x, blue!80, opacity=0.8] table [x expr=\coordindex, y={C1ML}]{\datatable};
 			\addlegendentry{$C1ML$};
 			
 			\addplot [name path = C2ML, mark=x, red!95!black, opacity=0.8] table [x expr=\coordindex, y={C2ML}]{\datatable};
 			\addlegendentry{$C2ML$};

 			\addplot [name path = CRWW, mark=x, green!50!black, opacity=0.8] table [x expr=\coordindex, y={CRWW}]{\datatable};
 			\addlegendentry{$CRWW$};

 			\addplot [name path = PluralityVeto, mark=x, purple!60!black, opacity=0.8 ] table [x expr=\coordindex, y={PluralityVeto}]{\datatable};
 			\addlegendentry{$RPV$};
 			#3
 		\end{axis}
 		\node at(2.125,-0.6) {\small #2};
 	\end{tikzpicture}
 }
 \title{\LARGE The Metric Distortion of Randomized Social Choice Functions:
 	\LARGE C1 Maximal Lottery Rules and Simulations}
 \date{}
 \author{
 	\large Fabian Frank, TU Munich, fabian\_w.frank@tum.de\\
 	\large Patrick Lederer, UNSW Sydney, p.lederer@unsw.edu.au
 }
 \newcommand{\shin}{{\in}}
\begin{document}
 	
 	\maketitle
 	
 	\begin{abstract}
 		The metric distortion of a randomized social choice function (RSCF) quantifies its worst-case approximation ratio to the optimal social cost when the voters' costs for alternatives are given by distances in a metric space. This notion has recently attracted significant attention as numerous RSCFs that aim to minimize the metric distortion have been suggested. Since such tailored voting rules have, however, little normative appeal other than their low metric distortion, we will study the metric distortion of well-established RSCFs. Specifically, we first show that C1 maximal lottery rules, a well-known class of RSCFs, have a metric distortion of $4$, which is optimal within the class of majoritarian RSCFs. Secondly, we conduct extensive computer experiments on the metric distortion of RSCFs to obtain insights into their average-case performance. These computer experiments are based on a new linear program for computing the metric distortion of a lottery and reveal that the average-case metric distortion of some classical RSCFs is often only slightly worse than that of RSCFs tailored to minimize the metric distortion. 
 		Finally, we also analytically study the expected metric distortion of RSCFs for the impartial culture distribution.  Specifically, we show that, under this distribution, every reasonable RSCF has an expected metric distortion close to $2$ when the number of voters is large. 
 	\end{abstract}

\section{Introduction}
An important challenge in multi-agent systems is collective decision-making: given the possibly conflicting preferences of a group of agents over some alternatives, a joint decision has to be made. To address this problem, researchers in the field of social choice theory try to identify desirable mechanisms to aggregate the agents' preferences. In more detail, social choice theory is mainly concerned with \emph{social choice functions (SCFs)} and \emph{randomized social choice functions (RSCFs)}, which formalize deterministic and randomized voting rules: an SCF maps the voters' preferences (expressed as linear rankings of the alternatives) to a single winner, and an RSCF returns a probability distribution over the alternatives from which the final winner will eventually be chosen \citep{ASS11a,BCE+15a}.

In an attempt to quantitatively measure the quality of SCFs and RSCFs, \citet{PrRo06a} introduced the distortion of voting rules. The idea of this notion is that voters have latent cardinal utilities over the alternatives and that voting rules should try to select alternatives with high social welfare. However, (R)SCFs do not have access to the voters' utilities, and the \emph{distortion} of a voting rule thus quantifies the worst-case ratio between the (expected) social welfare of the selected alternative and that of the optimal alternative. A prominent variant of this problem has been suggested by \citet{ABP15a}: in the metric distortion setting, voters and alternatives are located in a metric space and the distance between a voter and an alternative specifies the cost incurred to a voter when an alternative is elected. Voting rules should then try to select an alternative with low social cost but, since voters only report ordinal preferences, they can only approximate the optimal social cost. The \emph{metric distortion} of an SCF (resp. RSCF) is hence the worst-case ratio between the (expected) social cost of the selected alternative and of the optimal alternative, where the worst-case is taken over all preference profiles and all metric spaces that are consistent with the given profile. 

The metric distortion of SCFs and RSCFs has recently attained significant attention \citep[see, e.g.,][]{AFSV21b}. In particular, after \citet{ABP15a} and \citet{AnPo17a} have shown that no SCF (resp. RSCF) has a metric distortion of less than $3$ (resp. $2$), numerous authors tried to find voting rules with minimal metric distortion \citep[e.g.,][]{ABE+18a,Kemp20a,KiKe22a,KiKe23a,CRWW23a}. However, many of the suggested voting rules are specifically tailored to minimize the metric distortion and have otherwise little normative appeal. 
For example, the recently proposed Plurality-Veto rule \citep{KiKe22a} is not even anonymous and its latest variant called Simultaneous-Veto \citep{KiKe23a} fails Pareto-optimality. We thus find it noteworthy that some well-established RSCFs also have a low metric distortion. For instance, the uniform random dictatorship and C2 maximal lottery (C2ML) rules, two of the most prominent RSCFs in the literature, both have a metric distortion of $3$ \citep{FFG16a,AnPo17a,CRWW23a}. Since such established RSCFs satisfy numerous desirable properties, we will study their metric distortion in more detail, even though voting rules with lower metric distortion are known.

\paragraph{Our Contribution.} The goal of this paper is to enhance the understanding of the metric distortion of established RSCFs. We will contribute to this end in three ways.
Firstly, we investigate the metric distortion of C1 maximal lottery (C1ML) rules, a class of RSCFs that is well-known for being robust to small changes in the voters' preferences 
\citep{LLL93b,Hoan17a,BBS20a}. C1ML rules intuitively choose randomized Condorcet winners: these rules return a lottery $p$ such that, for every lottery $q$, it is at least as likely that a majority of the voters prefers an outcome drawn from $p$ to an outcome drawn from $q$ than vice versa. As our first result, we show that every C1ML rule has a metric distortion of at most $4$ and give a lower bound on the metric distortion of all majoritarian RSCFs (which only depend on the majority relation) that converges to $4$ as the number of alternatives increases. Since C1ML rules are majoritarian, this proves that they minimize the metric distortion within the class of majoritarian RSCFs when the number of alternatives is unbounded. 

Secondly, we conduct extensive computer experiments on the metric distortion of five RSCFs: the uniform random dictatorship, 
C1 and C2 maximal lottery rules, a randomized variant of the Plurality-Veto rule \citep{KiKe22a}, and the CRWW rules suggested by \citet{CRWW23a}, which have the 
best currently known metric distortion. In more detail, we sample preference profiles from numerous distributions, 
compute the lotteries chosen by our RSCFs, and then compute the worst-case metric distortion for the given lotteries and profiles. Moreover, we conduct an analogous experiment also with real-world data taken from PrefLib \citep{MaWa13a}. Our simulations show that the average metric distortion of all RSCFs is rather similar and significantly better than their worst-case guarantees. In particular, for many ``structured'' distributions 
C1ML and C2ML rules are only slightly worse than CRWW rules, which typically have the best metric distortion in our experiments. In light of their normative appeal, this gives a strong argument for using a C1ML or C2ML rule instead of an RSCF designed to minimize the metric distortion. 

Our computer experiments rely on a new linear program for computing the metric distortion of a lottery for a given profile, which we believe to be of independent interest. Specifically, our LP has only $\mathcal O(nm^2)$ constraints, where $n$ is the number of voters and $m$ the number of alternatives, and thus allows us to compute the metric distortion of a lottery even for large profiles. 

Finally, we complement our simulations with an analytical study of the expected metric distortion of RSCFs when preference profiles are sampled from the impartial culture distribution. For this setting, we show that the expected metric distortion of every reasonable RSCF converges to a value between $2$ and $2+\frac{1}{m-1}$ (where $m$ is the number of alternatives) when the number of voters goes to infinity. This result aligns with our simulations for the impartial culture model and shows that, at least under the simplistic impartial culture distribution, the choice of the voting rule has surprisingly little effect on the expected metric distortion.

\paragraph{Related Work.}

\begin{table}
        \centering
    \caption{Overview of the best known upper and lower bounds on the metric distortion in various classes of voting rules. Each row together with the labels ``RSCF'' and ``SCF'' determines a class of voting rules. The columns labeled ``LB'' and ``UB'' show the best known lower and upper bounds for the metric distortion of rules within the given class when there is an unbounded number of alternatives. The bold numbers are proven in this paper. 
    }
    \begin{tabular}{|c | c c | c c|}
        \hline
             & \multicolumn{2}{c |}{RSCF} & \multicolumn{2}{c|}{SCF}  \\
             & LB & UB & LB & UB \\
             \hline
        All & $2.112$ & $2.753$ & $\quad3\quad$ & $\quad3\quad$\\
        Tops-only & $\quad3\quad$ & $\quad3\quad$ & $\infty$ & $\infty$\\
        Pairwise & $3$  & $3$ & $3$ & $2+\sqrt{5}$\\
        Majoritarian & $\mathbf{4}$ & $\mathbf{4}$ & $5$ & $5$\\\hline
        \end{tabular}
    \label{tab:distortion}
\end{table}

We will next review the most relevant results on the metric distortion of voting rules and refer to the survey by \citet{AFSV21b} for more details. An overview of the upper and lower bounds for the met\-ric distortion of various classes of voting rules is given in \Cref{tab:distortion}.
The study of the metric distortion of deterministic SCFs was initiated by \citet{ABP15a} who have, e.g., shown that no SCF has a metric distortion of less than~$3$. Inspired by this work, numerous researchers tried to find voting rules with a metric distortion of $3$ \citep{GKM17a,SkEl17a,ABE+18a,AFP22b}, but it was only in a recent line of work that such SCFs have been designed \citep{MuWa19a,Kemp20a,GHS20a,KiKe22a,KiKe23a}. In particular, these works culminated in the Plurality-Veto rule, a simple SCF with a metric distortion of $3$ \citep{KiKe22a}.
Interestingly, \citet{KiKe23a} recently also aimed to design a normatively more appealing SCF with optimal metric distortion.

As an alternative approach to minimize the metric distortion, researchers also studied RSCFs. In particular, \citet{AnPo17a} have shown that no RSCF has a metric distortion of less than $2$ and that the uniform random dictatorship has a metric distortion of~$3$. Moreover, \citet{GAX17a} have proven that all tops-only RSCFs (i.e., RSCFs that can only access the voters' favorite alternatives) have a metric distortion of at least $3-\frac{2}{m}$ when there are $m$ alternatives. Similarly, \citet{CRWW23a} have shown that C2 maximal lottery rules have a metric distortion of $3$ and it is known that all pairwise RSCFs (i.e., RSCFs that only depend on the numbers of voters that prefer $x$ to $y$ for all alternatives $x,y$) have a metric distortion of at least $3-\frac{2}{m}$ \citep{GKM17a}. Thus, when the number of alternatives is unbounded, the uniform random dictatorship minimizes the metric distortion within the class of tops-only RSCFs and C2 maximal lottery rules within the class of pairwise RSCFs.
Moreover, several RSCFs have been designed with the goal to minimize the metric distortion \citep[][]{GAX17a,FGMP19a,GHS20a}, but none of them guarantees a metric distortion of less than $3$. It was hence only recently that both the upper and lower bound of the metric distortion of RSCFs has been improved: \citet{ChRa22a} have shown that every RSCF has a metric distortion of at least $2.112$ and \citet{CRWW23a} designed the CRWW rules with a metric distortion of at most~$2.753$.

Finally, our work is related to several papers \citep{CDK17a,CDK18a,GKP+23a,CaFe24a} that analyze the expected distortion of voting rules when the voters' utilities are drawn from a distribution. By contrast, we study the expected metric distortion of voting rules for the worst-case metrics of randomly drawn profiles, i.e., we consider realistic profiles without imposing any additional structure on the voters' utilities. Furthermore, we note that \citet{EFLS24a} suggested an improved linear program for computing the non-metric distortion of voting rules, which can be seen as a mathematically unrelated analog of our new linear program for computing the metric distortion of RSCFs.

\section{Model}\label{sec:model}

Let $V_n=\{v_1,\dots, v_n\}$ denote a finite set of $n\geq 1$ voters and $X_m=\{x_1,\dots, x_m\}$ a finite set of $m\geq 1$ alternatives. We suppose that every voter $v\in V_n$ reports a \emph{preference relation $\succ_v$}, which is formally a complete, transitive, and anti-symmetric binary relation over $X_m$. The set of all preference relations over $X_m$ is denoted by $\mathcal{R}(X_m)$. A \emph{preference profile $R$} is the collection of the preference relations of all voters in $V_n$
and the set of all preference profiles over an electorate $V_n$ and a set of alternatives $X_m$ is given by $\mathcal{R}(X_m)^{V_n}$. In this paper, we will allow for both varying sets of voters and alternatives. The set of all preference profiles is hence given by $\mathcal{R}^*=\bigcup_{n,m\in\mathbb{N}} R(X_m)^{V_n}$. Moreover, $\mathcal{R}^*_m$ is the set of all profiles on $m$ alternatives, i.e., $\mathcal{R}^*_m=\bigcup_{n\in \mathbb{N}} R(X_m)^{V_n}$. Given a profile $R$, we will denote by $V_R$ and $X_R$ the sets of voters and alternatives that are present in the profile $R$, and by $n_R$ and $m_R$ the sizes of these sets. 

Next, we introduce additional notation for preference profiles. In particular, we define $t_R(x)=|\{v\in N_R\colon \forall y\in X_R\setminus \{x\}\colon x\succ_v y\}|$ as the number of voters that top-rank alternative $x$ in the profile~$R$. Furthermore, we let the \emph{support} $n_{xy}(R)=|\{v\in V_R\colon x\succ_v y\}|$ for $x$ against $y$ denote the number of voters who prefer $x$ to $y$ in $R$. Finally, the \emph{majority relation $\succsim_R$} of a profile $R$ is defined by $x\succsim_R y$ if and only if $n_{xy}(R)\geq n_{yx}(R)$. That is, $x\succsim_R y$ if at least as many voters prefer $x$ to $y$ than vice versa. Following the literature, $\succ_R$ denotes the strict part of $\succsim_R$ (i.e., $x\succ_R y$ iff $x\succsim_R y$ and not $y\succsim_R x$) and $\sim_R$ the indifference part (i.e., $x\sim_R y$ iff $x\succsim_R y$ and $y\succsim_R x$).

\subsection{Randomized Social Choice Functions}\label{subsec:RSCF}

The study objects of this paper are randomized social choice functions which are voting rules that may use chance to determine the winner of the election. To formalize this, we define \emph{lotteries} as probability distributions over the set of alternatives $X_R$: a lottery is a function $p:X_R\rightarrow [0,1]$ such that $\sum_{x\in X_R} p(x)=1$. We furthermore denote by $\Delta(X_R)$ the set of all lotteries over $X_R$. A \emph{randomized social choice function (RSCF)} $f$ is then a function that maps every preference profile $R\in\mathcal{R}^*$ to a lottery $p\in \Delta(X_R)$. We denote by $f(R,x)$ the probability that $f$ assigns to alternative $x$ in the profile~$R$ and next introduce five (classes of) RSCFs:

\paragraph{Uniform random dictatorship.} The {uniform random dictatorship $f_\mathit{RD}$} picks a voter $v\in V_R$ uniformly at random and implements his favorite alternative as the winner of the election. More formally, the probability that an alternative $x$ is selected in a profile $R$ by the uniform random dictatorship is 
$f_\mathit{RD}(R,x)=\frac{t_R(x)}{n_R}$.

\paragraph{Randomized Plurality-Veto.} \citet{KiKe23a} suggested the Plurality-Veto rule as a deterministic SCF with the optimal metric distortion of $3$. For this rule, we first fix a sequence of the voters $(v_1,\dots, v_n)$ and assign a score $s(x)$ to each alternative that is initially equal to $t_R(x)$. Then, we iterate through the voters according to the given sequence, ask each voter for his worst alternative with positive score, and reduce the score of this alternative by $1$. Finally, the winner of this rule is the last alternative with positive score. Since the winner of Plurality-Veto rule depends on the order over the voters, we denote by $PV(R)$ the set of alternatives that can be chosen for some order. Furthermore, we define the \emph{randomized Plurality-Veto rule} $f_\mathit{RPV}$ as the RSCF that picks an alternative from $PV(R)$ uniformly at random. The set $PV(R)$ and hence $f_\mathit{RPV}$ can be efficiently computed by solving $m$ matching problems \cite{KiKe22a,KiKe23a}.

\paragraph{C2ML rules.} {C2 maximal lottery (C2ML) rules}, which have been suggested by \citet{Fish84a} and recently promoted by, e.g., \citet{Bran13a}, compute a randomized Condorcet winner: these rules select a lottery $p$ such that, for all lotteries~$q$, the expected number of voters that prefer the outcome chosen from $p$ to the outcome chosen from $q$ is at least as large as the expected number of voters that prefer the outcome chosen from $q$ to the outcome chosen from $p$. To formalize this, we extend the support $n_{xy}(R)$ to lotteries $p$, $q$ by defining $n_{pq}(R)=\sum_{x,y\in A}p(x)q(y)n_{xy}(R)$. Then, the set of C2 maximal lotteries is given by 
$\mathit{C2ML}(R)=\{{p\in \Delta(X_R)\colon}{\forall q\in \Delta(X_R)\colon} n_{pq}(R)\geq n_{qp}(R)\}$.
The set of C2 maximal lotteries is always non-empty by the minimax theorem and almost always a singleton \citep{LLL97a,LeBr05a}. Finally, an RSCF is a C2ML rule if $f(R)\in\mathit{C2ML}(R)$ for every profile $R\in\mathcal{R}^*$.

\paragraph{C1ML rules.} C1 maximal lottery (C1ML) rules, which go back to \citet{Fish84a}, also choose a randomized Condorcet winner but in a different sense: C1ML rules select a lottery $p$ such that, for all lotteries $q$, it is at least as likely that a majority of the voters prefers the outcome chosen from $p$ to the outcome chosen from $q$ than vice versa. To formalize this, we extend the majority relation to lotteries $p$, $q$ by defining that $p\succsim_R q$ if and only if $\sum_{x,y\in A\colon x\succ_R y} p(x)q(y)\geq \sum_{x,y\in A\colon x\succ_R y} p(y)q(x)$. The set of C1 maximal lotteries is then 
$\mathit{C1ML}(R)=\{p\in\Delta(X_R)\colon \forall q\in \Delta(X_R)\colon p\succsim_R q\}$. Just as for C2 maximal lotteries, this set is always non-empty and almost always a singleton. In particular, if the number of voters is odd, there are unique C1 and C2 maximal lotteries. An RSCF is a {C1ML rule} if $f(R)\in \mathit{C1ML}(R)$ for all profiles $R\in\mathcal{R}^*$. 

\paragraph{CRWW rules.} Finally, we introduce the RSCFs suggested by \citet{CRWW23a}, which we refer to as CRWW rules. As a subroutine, these rules rely on another RSCF called $f_{\beta-\mathit{radius}}$. To define this RSCF, we say $x$ $\beta$-covers $y$ in a profile $R$ for some $\beta\in [0,1]$ if $n_{xy}(R)\geq \beta n_R$ and $n_{zx}(R)\geq \beta n_R$ implies $n_{zy}(R)\geq \beta n_R$ for all $z\in X_R$. Moreover, we define $U_\beta(R)$ as the set of alternatives that are not $\beta$-covered in $R$ and $R|_{U_\beta(R)}$ as the profile that arises from $R$ by removing all alternatives not in $U_\beta(R)$. Then, $f_{\beta-\mathit{radius}}$ computes the uniform random dictatorship on $R|_{U_\beta(R)}$, i.e., $f_{\beta-\mathit{radius}}(R)=f_\mathit{RD}(R|_{U_\beta(R)})$. Based on this subroutine, constants $B=0.876353$, $p=\frac{1}{1+\int_{0.5}^{B}\frac{1}{1-x^2} dx}\approx 0.552327$, and the distribution $\rho(\beta)=\frac{p}{(1-p)(1-\beta^2)}$ on the interval $(\frac{1}{2},B)$, CRWW rules are defined as follows: with probability $p$, we execute a C2ML rule and with probability ${1-p}$, we sample a value $\beta\in (0.5, B)$ from the distribution $\rho(\beta)$ and return $f_{\beta-\mathit{radius}}(R)$. Hence, an RSCF $f$ is a CRWW rule if there is a C2ML rule $f'$ such that
\[f(R)=pf'(R)+{(1-p)}\int_{0.5}^B \rho(\beta) f_{\beta-\mathit{radius}}(R)d\beta.\]\smallskip

The uniform random dictatorship $f_\mathit{RD}$, C2ML rules, and C1ML rules are well-established in the literature. For example, $f_\mathit{RD}$ is known to be strategyproof \citep{Gibb77a}, whereas both C2ML rules and C1ML rules satisfy, e.g., Condorcet-consistency and composition-consistency \citep{Bran13a}. By contrast, the randomized Plurality-Veto rule and the CRWW rules are designed to minimize the metric distortion and only known to satisfy basic further axioms. Moreover, we note that the uniform random dictatorship $f_\mathit{RD}$, C2ML rules, and C1ML rules belong to important classes of RSCFs: $f_\mathit{RD}$ is a \emph{tops-only} RSCF as it only accesses the voters' favorite alternatives, C2ML rules are \emph{pairwise} as they only access the supports $n_{xy}(R)$ for all $x,y\in X_R$, and C1ML rules are \emph{majoritarian} as they only depend on the majority relation $\succsim_R$. In more detail, an RSCF $f$ is majoritarian if $f(R)=f(R')$ for all profiles $R,R'\in\mathcal{R}^*$ with ${\succsim_R}={\succsim_{R'}}$. 
    
\subsection{Metric Distortion}\label{subsec:metdist}

In order to assess the quality of RSCFs, we analyze their metric distortion. The idea of this approach is that voters and alternatives are embedded in a metric space and that the distance between a voter $v$ and an alternative $x$ specifies the cost that $v$ experiences when $x$ is selected. Following the utilitarian approach, the optimal alternative is then the one that minimizes the total distance to all voters. However, since voters only report ordinal preferences over the alternatives instead of their cardinal costs, we cannot determine the best alternative. The goal of metric distortion is hence to select a lottery that approximates the optimal social cost well for every metric space that is consistent with the voters' preferences.

To formalize this, we call a function $d:(V_R\cup X_R)^2\rightarrow \mathbb{R}_{\geq 0}$ a \emph{metric} if it satisfies for all $x,y,z\in V_R\cup X_R$ that \emph{i)} ${d(x,x)=0}$, \emph{ii)} $d(x,y)=d(y,x)$, and \emph{iii)} $d(x,z)\leq d(x,y)+d(y,z)$. We note that some definitions of metrics also require that $d(x,y)>0$ if $x\neq y$, but the literature on metric distortion typically omits this condition since it does not affect the results. The distance $d(v,x)$ states the cost incurred to voter $v$ when alternative $x$ is selected. The \emph{social cost} of an alternative $x$ is thus $sc(x,d)=\sum_{v\in V_R} d(v,x)$ and the social cost of lottery $p$ is $sc(p,d)=\sum_{x\in X_R} p(x) sc(x,d)$. Finally, a metric~$d$ is \emph{consistent} with a profile $R$ if $x\succ_v y$ implies $d(v,x)\leq d(v,y)$ for all voters $v\in V_R$ and alternatives $x,y\in X_R$. We denote by $D(R)$ the set of metrics that are consistent with~$R$. 

Given a profile $R$, the goal of metric distortion is to find a lottery whose social cost is close to the optimal social cost for all metrics that are consistent with~$R$. We thus define the metric distortion of a lottery $p$ in a profile $R$ as $\mathit{dist}(p,R)=\sup_{d\in D(R)} \frac{sc(p,d)}{\min_{x\in X_R} sc(x,d)}$. Note that $\min_{x\in X_R} sc(x,d)$ might be $0$; we hence define $\frac{0}{0}=1$ and $\frac{z}{0}=\infty$ for $z>0$. For the ease of presentation, we will use in our results that $\infty>x$ for all $x\in \mathbb{R}$ and $y+z\cdot \infty=\infty$ for all $y\in\mathbb{R}$, $z\in \mathbb{R}_{>0}$.
Next, the \emph{metric distortion} $\mathit{dist}(f)$ of an RSCF~$f$ is its worst-case metric distortion over all possible profiles, i.e., $\mathit{dist}(f)=\sup_{R\in\mathcal{R}^*} \mathit{dist}(f(R), R)$. To allow for a more fine-grained analysis, we further define $\mathit{dist}_m(f)=\sup_{R\in\mathcal{R}_m^*} \mathit{dist}(f(R), R)$ as the metric distortion of $f$ when only profiles on $m$ alternatives are considered. We note that $\mathit{dist}(f)=\infty$ and $\mathit{dist}_m(f)=\infty$ if the respective suprema are unbounded.

We recall here that the uniform random dictatorship $f_\mathit{RD}$, the randomized Plurality-Veto rule $f_\mathit{RPV}$, C2ML rules $f_\mathit{C2ML}$, and CRWW rules $f_\mathit{CRWW}$ have a metric distortion of $\mathit{dist}(f_\mathit{RD})=3$, $\mathit{dist}(f_\mathit{RPV})=3$, $\mathit{dist}(f_\mathit{C2ML})=3$, and $\mathit{dist}(f_\mathit{CRWW})\leq2.753$, respectively. By contrast, the metric distortion of C1ML rules is unknown.

\section{Analysis of C1ML Rules}\label{subsec:C1ML}

As our first contribution, we will show that C1ML rules have a metric distortion of $4$ and that no other majoritarian RSCF has a lower metric distortion when the number of alternatives is unbounded. Thus, our results show that C1ML rules minimize the metric distortion among majoritarian RSCFs. Our analysis of the C1ML rule is further motivated by the fact that maximal lottery rules have been repeatedly recommended for practical usage \citep{Bran13a,BBS20a}. 
All missing proofs can be found in the appendix. 

To prove our results, we first show a strong relation between the metric distortion of majoritarian RSCFs and distances in the majority relation. To this end, we define the \emph{majority distance $\mathit{md}(x,y,{\succsim_R})$} as the length of the shortest path from $x$ to $y$ in the majority relation $\succsim_R$. In particular, $\mathit{md}(x,x,{\succsim_R})=0$, $\mathit{md}(x,y,{\succsim_R})=1$ if $x\succsim_R y$, and $\mathit{md}(x,y,{\succsim_R})=\infty$ if there is no path from $x$ to $y$ in $\succsim_R$. We extend this notion also to lotteries $p$ by defining $\mathit{md}(p,y,{\succsim_R})= \sum_{x\in X_R} p(x) \mathit{md}(x,y,{\succsim_R})$ and note that $\mathit{md}(p,y,{\succsim_R})=\infty$ if and only if there is an alternative $x$ with $p(x)>0$ and $\mathit{md}(x,y,{\succsim_R})=\infty$. 

\begin{restatable}{proposition}{pathlength}\label{prop:pathlength}
It holds for all majoritarian RSCFs $f$ and preference profiles $R$ on $m\geq 3$ alternatives that
    \begin{enumerate}[label=(\arabic*),leftmargin=*]
        \item $\mathit{dist}(f(R), R)\leq 1+2\max_{x\in X_R} \mathit{md}(f(R), x, {\succsim_R})$. 
        \item $\mathit{dist}_m(f)\geq 1+2\max_{x\in X_R} \mathit{md}(f(R), x, {\succsim_R})$. 
    \end{enumerate}
\end{restatable}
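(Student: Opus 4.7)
I would fix an arbitrary consistent metric $d\in D(R)$ and let $x^*\in X_R$ be an alternative minimising $sc(x,d)$. Since $x^*$ is among the alternatives entering the maximum on the right-hand side, it suffices to prove the sharper per-metric inequality $sc(f(R),d)\leq(1+2\,\mathit{md}(f(R),x^*,{\succsim_R}))\,sc(x^*,d)$. By linearity of social cost in the lottery, this in turn reduces to the per-alternative claim that $sc(y,d)\leq(1+2\,\mathit{md}(y,x^*,{\succsim_R}))\,sc(x^*,d)$ for every $y\in X_R$. I would prove the latter by taking a shortest path $y=z_0\succsim_R z_1\succsim_R\cdots\succsim_R z_k=x^*$ of length $k=\mathit{md}(y,x^*,{\succsim_R})$ in the majority graph and telescoping
\[sc(y,d)-sc(x^*,d)=\sum_{i=0}^{k-1}\bigl(sc(z_i,d)-sc(z_{i+1},d)\bigr).\]
Each summand is controlled by splitting voters by the majority set $V_i=\{v\in V_R\colon z_i\succ_v z_{i+1}\}$ (which has size at least $n_R/2$): voters in $V_i$ contribute nonpositively since $d(v,z_i)\leq d(v,z_{i+1})$, while each of the remaining at most $n_R/2$ voters contributes at most $d(z_i,z_{i+1})$ by the triangle inequality. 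The edge length itself is bounded by averaging $d(z_i,z_{i+1})\leq 2d(v,z_{i+1})$ over $v\in V_i$ and routing $d(v,z_{i+1})$ through $x^*$ via $d(v,z_{i+1})\leq d(v,x^*)+d(x^*,z_{i+1})$. A careful global accounting of these bounds along the whole path yields the linear factor $1+2k$.

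\textbf{Plan for (2).} Given $R$, let $x^*\in X_R$ achieve the maximum and set $L=\mathit{md}(f(R),x^*,{\succsim_R})$. Using McGarvey-style voter additions, I would construct a profile $R^\dagger$ on the same $m$ alternatives with $\succsim_{R^\dagger}={\succsim_R}$, so that $f(R^\dagger)=f(R)$ by majoritarianness, together with a consistent metric $d\in D(R^\dagger)$ in which $x^*$ sits at the origin, a large mass of voters is placed close to $x^*$ (making $sc(x^*,d)$ arbitrarily small), and each alternative $y$ is positioned so that $sc(y,d)/sc(x^*,d)$ approaches $1+2\,\mathit{md}(y,x^*,{\succsim_R})$. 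Additional ``structural'' voters are positioned along the shortest paths to $x^*$ in $\succsim_R$ to realise each required majority $V_i$ without disturbing the cost accounting. Averaging over $y$ weighted by $f(R,y)$ and passing to the limit drives $\mathit{dist}(f(R^\dagger),R^\dagger)$ arbitrarily close to $1+2L$, so that $\mathit{dist}_m(f)\geq 1+2L$.

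\textbf{Main obstacle.} The delicate step in (1) is to avoid an exponential blow-up in the path length: naively iterating the single-edge bound $sc(a,d)\leq 3\,sc(b,d)$ (which holds whenever $a\succsim_R b$) along the length-$k$ path gives $3^k$, which is far worse than $1+2k$. Getting the linear dependence requires a global per-voter accounting in which the triangle-inequality losses sum (since each voter is on the minority side of only a minority of the edges) rather than compound, and simultaneously controlling the auxiliary terms $2d(x^*,z_{i+1})$ that appear in the per-edge bound along the entire path. For (2), the main challenge is to realise the prescribed majority relation $\succsim_R$ by a metric-consistent profile while achieving the tight cost ratio for every $y$ in the support of $f(R)$; balancing ``mass'' voters near $x^*$ with ``structural'' voters along the paths in a suitable limit is the technical heart of the construction.
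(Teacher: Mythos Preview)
Your reduction in (1) to the per-alternative inequality $sc(y,d)\leq (1+2\,\mathit{md}(y,x^*,{\succsim_R}))\,sc(x^*,d)$ and the telescoping along a shortest path are the same as in the paper. The gap is in the per-edge bound. Bounding $sc(z_i,d)-sc(z_{i+1},d)$ by $|V_i^c|\cdot d(z_i,z_{i+1})$ and then routing $d(z_i,z_{i+1})$ through $x^*$ leaves you with residual terms proportional to $|V_i^c|\,d(x^*,z_{i+1})$; summing these along the path and controlling $d(x^*,z_{i+1})$ via $n_R\,d(x^*,z_{i+1})\leq sc(x^*,d)+sc(z_{i+1},d)$ together with induction on $sc(z_{i+1},d)$ gives a bound of order $k^2$, not $k$. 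Your parenthetical justification---``each voter is on the minority side of only a minority of the edges''---is false: a single voter's preference can reverse every edge of the path. The paper avoids this entirely. For each edge $x\succsim_R z$ it fixes an injection $s$ from the minority $N_{zx}$ into the majority $N_{xz}$ and, for every minority voter $v$, routes
\[
d(v,x)\leq d(v,x^*)+d(x^*,s(v))+d(s(v),x)\leq 2d(v,x^*)+2d(s(v),x^*)+d(v,z),
\]
using $d(s(v),x)\leq d(s(v),z)$ and a second triangle through $x^*$. Summing over all voters yields $sc(x,d)\leq sc(z,d)+2\,sc(x^*,d)$ \emph{directly}, with no term involving $d(z_i,z_{i+1})$ or $d(x^*,z_{i+1})$; telescoping then gives the linear factor $1+2k$ immediately. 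The injection/matching step is the missing idea in your plan.

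For (2) your outline is broadly right but underspecified. The paper's construction is concrete: it partitions $X_R$ into level sets $D^k=\{y:\mathit{md}(y,x^*,{\succsim_R})=k\}$, uses two large voter blocks $I_1,I_2$ of size $\lceil 1/\epsilon\rceil$ whose preferences over the $D^k$'s are reversed mirrors of one another (forcing all the ``long'' majority edges while cancelling within levels), and a small McGarvey block $I_3$ to set the remaining comparisons. The metric places alternatives in $D^k$ at $\pm k$ on a line (alternating by parity), $I_1$ at $0$, $I_2$ at $1$, and $I_3$ far away; this makes $sc(y,d)/sc(x^*,d)\to 1+2\,\mathit{md}(y,x^*,{\succsim_R})$ as $\epsilon\to 0$. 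You should also handle separately the case $\max_x \mathit{md}(f(R),x,{\succsim_R})=\infty$, where the paper exhibits a profile and consistent metric with $sc(x^*,d)=0$ but $sc(f(R),d)>0$.
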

\begin{proof}[Proof sketch]
    For Claim (1), we first note that there is nothing to show if $\max_{x\in X_R} \mathit{md}(f(R), x, {\succsim_R})=\infty$ and we hence suppose that $\mathit{md}(f(R), x, {\succsim_R})<\infty$ for all $x\in X_R$. We then prove that $sc(x,d)\leq (1+2\mathit{md}(x,y,{\succsim_R}))sc(y,d)$ for all $x,y\in X_R$ and $d\in D(R)$ by an induction on the majority distance between $x$ and $y$. This insight implies Claim~(1) as $\mathit{dist}(f(R), R)=\sup_{d\in D(R)}\frac{\sum_{x\in X_R} f(R,x) sc(x,d)}{\min_{y\in X_R} sc(y,d)}$. For Claim (2), we show that there is for every $\epsilon>0$ a preference profile $R^\epsilon$ and a metric space $d\in D(R^\epsilon)$ such that ${\succsim_{R^\epsilon}}={\succsim_R}$ and $\frac{sc(f(R), d)}{\min_{y\in X_R} sc(y,d)}\geq 1+2\max_{x\in X_R} \mathit{md}(f(R), x, \succsim_R) -\epsilon$. Since $f(R^\epsilon)=f(R)$ as $f$ is majoritarian, we then infer Claim~(2) by letting $\epsilon$ go to~$0$. 
\end{proof}

Claims related to \Cref{prop:pathlength} have been shown by \citet[][Lemma 6]{ABE+18a} and \citet[][Corollary 5.1]{Kemp20a}, but these results lack the lower bound given in (2). Based on our proposition, we will next compute the metric distortion of C1ML~rules. In particular, our subsequent theorem shows that C1ML rules have a metric distortion of at most $4$ and that no majoritarian RSCF has a lower metric distortion if the number of alternatives $m$ is unbounded. 

\begin{restatable}{theorem}{maj}\label{thm:maj}
The following claims are true:
\begin{enumerate}[leftmargin=*, label=(\arabic*)]
    \item It holds for all C1ML rules $f$ and $m\geq 3$ that $\mathit{dist}_m(f)\leq 4$ and $\mathit{dist}_m(f)\geq 4-(\frac{1}{3})^{\lfloor\frac{m-3}{2}\rfloor}$. 
    \item It holds for all majoritarian RSCFs $f$ that $\mathit{dist}_m(f)\geq 4-\frac{3}{m}$ if $m\geq 3$ is odd and $\mathit{dist}_m(f)\geq 4-\frac{3}{m-1}$ if $m\geq 3$ is even.
\end{enumerate}
\end{restatable}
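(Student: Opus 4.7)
The plan is to derive both claims from \Cref{prop:pathlength}: the upper bound in~(1) will follow once I show $\mathit{md}(p, y, \succsim_R) \leq \tfrac{3}{2}$ for every C1 maximal lottery $p = f(R)$ and every $y \in X_R$, while the lower bounds in both claims require explicit profiles on which $\max_{y \in X_R} \mathit{md}(f(R), y, \succsim_R)$ is correspondingly large.

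For the upper bound in~(1), I plan to combine two facts. First, instantiating the C1ML defining inequality $p \succsim_R q$ with $q$ equal to the point mass on $y$ gives $p(\{x : x \succ_R y\}) \geq p(\{x : y \succ_R x\})$, and since these two sets are disjoint this forces $p(\{x : y \succ_R x\}) \leq \tfrac{1}{2}$; writing $L_1 = \{x : x \succsim_R y\}$, this rewrites as $p(X_R \setminus L_1) \leq \tfrac{1}{2}$. Second, I will invoke the standard fact that the support of any C1ML is contained in the uncovered set of $R$, which implies $\mathit{md}(x, y, \succsim_R) \leq 2$ for every $x \in \mathrm{supp}(p)$ and every $y \in X_R$: indeed, if $y \succ_R x$, the uncoveredness of $x$ yields some $w$ with $x \succ_R w$ and $w \succsim_R y$, producing a length-$2$ path from $x$ to $y$. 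Combining,
\[
\mathit{md}(p, y, \succsim_R) \;\leq\; p(L_1 \setminus \{y\}) + 2\,p(X_R \setminus L_1) \;\leq\; 1 + p(X_R \setminus L_1) \;\leq\; \tfrac{3}{2},
\]
so \Cref{prop:pathlength}(1) gives $\mathit{dist}_m(f) \leq 1 + 2 \cdot \tfrac{3}{2} = 4$.

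For Claim~(2), I will construct explicit profiles. For odd $m \geq 3$, take any profile whose majority relation is the cyclic regular tournament $x_i \succ_R x_j \iff (j - i) \bmod m \in \{1, \ldots, \tfrac{m-1}{2}\}$, realizable by McGarvey's theorem. A direct verification shows $\mathit{md}(x_i, x_j, \succsim_R) = 1$ for $\tfrac{m-1}{2}$ values of $j$, equals $2$ for another $\tfrac{m-1}{2}$, and equals $0$ for $j = i$, so $\sum_j \mathit{md}(x_i, x_j, \succsim_R) = \tfrac{3(m-1)}{2}$ for every $i$. By linearity $\sum_j \mathit{md}(p, x_j, \succsim_R) = \tfrac{3(m-1)}{2}$ for every lottery $p$, hence $\max_j \mathit{md}(p, x_j, \succsim_R) \geq \tfrac{3(m-1)}{2m}$, and \Cref{prop:pathlength}(2) yields $\mathit{dist}_m(f) \geq 4 - \tfrac{3}{m}$. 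For even $m \geq 4$, extend the profile on $m-1$ alternatives by a Condorcet-loser alternative $x_m$ ranked last by every voter: any mass $f$ places on $x_m$ forces $\mathit{md}(f(R), x_j, \succsim_R) = \infty$ for $j < m$ since $x_m$ has no outgoing edges in $\succsim_R$; otherwise the previous argument applied to $m-1$ alternatives yields $\mathit{dist}_m(f) \geq 4 - \tfrac{3}{m-1}$.

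The main obstacle, and the remaining task, is the lower bound in~(1): since C1ML rules form a strict subclass of majoritarian RSCFs, I cannot simply symmetrize over arbitrary lotteries as in~(2) but must force the C1ML itself to place close to half of its mass at majority distance exactly $2$ from some target $y$. I plan a recursive construction on $m = 2k + 3$ alternatives that adds two new alternatives per induction step so that the linear system characterizing the (essentially unique) C1ML propagates a $1/3$-contraction at each level; the base case $k = 0$ is the $3$-cycle (yielding $\max_y \mathit{md}(p, y, \succsim_R) = 1$), and for $k = 1$ a specific $5$-alternative tournament gives $\max_y \mathit{md}(p, y, \succsim_R) = \tfrac{4}{3}$, matching $3/2 - \tfrac{1}{2}\cdot\tfrac{1}{3}$. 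Verifying that the contraction factor is exactly $1/3$ at each induction level by explicitly solving the C1ML equations of the constructed tournaments is expected to be the technically most delicate step.
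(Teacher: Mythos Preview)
Your proposal is correct and follows essentially the same route as the paper for all three pieces: the upper bound in~(1) via the uncovered-set containment (the paper cites this as a result of Dutta and Laslier) plus the $p(X^{-})\le 1/2$ bound, and Claim~(2) via the cyclic regular tournament with a symmetrization/averaging argument and a Condorcet loser for even $m$, match the paper's arguments almost verbatim.

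For the lower bound in~(1), which you leave as a plan, the paper's explicit construction confirms your intuition: for odd $m$ take the strict tournament on $X_m$ defined by $x_{k+1}\succ x_k$, $x_k\succ x_j$, and $x_j\succ x_{k+1}$ for every odd $k<m$ and every $j\ge k+2$; the unique C1 maximal lottery is then $p(x_k)=p(x_{k+1})=(1/3)^{(k+1)/2}$ for odd $k<m$ and $p(x_m)=(1/3)^{(m-1)/2}$, which one can verify directly from the defining inequalities, and $\mathit{md}(p,x_m,\succsim)=\tfrac{3}{2}-\tfrac{1}{2}(1/3)^{(m-3)/2}$ since every even-indexed alternative has majority distance $2$ to $x_m$ and every odd-indexed one distance $1$. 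So the ``$1/3$-contraction'' you anticipated is exactly what happens, and the verification is a short linear-algebra check rather than a delicate induction.
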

\begin{proof}
We will only prove Claim (1) here and give a proof sketch for Claim (2). The full proof of Claim (2) is deferred to the appendix.\smallskip

\textbf{Claim (1), upper bound:} Let $f$ denote a C1ML rule, let $R$ denote a profile, and define $p=f(R)$. It follows from a result by \citet{DuLa99a} that $p(x)>0$ implies $\mathit{md}(x,y,{\succsim_R})\leq 2$ for all $x,y\in X_R$. Based on this insight, we will show that $\mathit{md}(p,z,{\succsim_R})\leq \frac{3}{2}$ for all $z\in X_R$ as Claim (1) of \Cref{prop:pathlength} then proves that $\mathit{dist}(p,R)\leq 4$. We thus fix an alternative $z\in X_R $ and let $q$ denote the lottery with $q(z)=1$. Further, we define $X^+=\{x\in X_R\colon x\succ_R z\}$ and $X^-=\{x\in X_R\colon z\succ_R x\}$. By the definition of C1ML rules, it holds that $p\succsim_R q$, which implies that $\sum_{x\in X^+} p(x)\geq \sum_{x\in X^-} p(x)$. This means that $\sum_{x\in X^-} p(x)\leq \frac{1}{2}$. Next, it holds for all $x\in X_R$ with $p(x)>0$ that $\mathit{md}(x,z,{\succsim_R})=1$ if $x\succsim_R z$ and $\mathit{md}(x,z,{\succsim_R})=2$ if $z\succ_R x$ due to our previous observation. Hence, we infer that $\mathit{md}(p,z,{\succsim_R})\leq \sum_{x\in X_R\colon x\succsim_R z} p(x) + 2\sum_{x\in X_R\colon z\succ_R x} p(x)=1+\sum_{x\in X^-} p(x)\leq \frac{3}{2}$. Finally, Claim (1) of \Cref{prop:pathlength} shows that $\mathit{dist}(p,R)\leq 4$.\medskip

\textbf{Claim (1), lower bound:} For proving our lower bound, we recall that C1ML rules are majoritarian and that $|\mathit{C1ML}(R)|=1$ if the majority relation of $R$ is strict \citep{LLL97a}. Moreover, by McGarvey's construction \citep{McGa53a}, there is for every complete binary relation $\succsim$ on $X_m$ a profile $R$ with ${\succsim_R}={\succsim}$. Due to Claim (2) of \Cref{prop:pathlength}, we can hence show the lower bound by constructing a complete and anti-symmetric binary relation $\succsim^*$ for every $X_m$ with $m\geq 3$ such that $\max_{x\in X_R} \mathit{md}(p, x, {\succsim^*})=\frac{3}{2}-\frac{1}{2}\cdot (\frac{1}{3})^{\lfloor \frac{m-3}{2}\rfloor}$, where $p$ is the unique C1 maximal lottery of a profile $R$ with ${\succsim_R}={\succsim^*}$. We first suppose that $m\geq 3$ is odd and consider the following relation $\succsim^*$ on $X_m$: for all odd $k<m$ and all $j$ with $k+2\leq j\leq m$, it holds that $x_{k+1}\succ^* x_{k}$, $x_{k}\succ^* x_{j}$, and $x_j\succ^* x_{k+1}$. It can be checked that the unique C1 maximal lottery $p$ for this relation is defined by $p(x_k)=p(x_{k+1})=(\frac{1}{3})^{\frac{k+1}{2}}$ for all odd $k<m$ and $p(x_m)=(\frac{1}{3})^{\frac{m-1}{2}}$. This means that $\sum_{x\in X^o} p(x)=\sum_{x\in X^e} p(x_k)=\frac{1}{2}-\frac{1}{2}p(x_m)$ for the sets $X^o=\{x_1,x_3,\dots, x_{m-2}\}$ and $X^e=\{x_2, x_4, \dots, x_{m-1}\}$. Next, by definition of $\succsim^*$, it holds for all odd $k<m$ that $\mathit{md}(x_k, x_m, {\succsim^*})=1$ and $\mathit{md}(x_{k+1}, x_m, {\succsim^*})=2$. Hence, $\mathit{md}(p,x_m,{\succsim^*})= \sum_{x\in X^o}p(x) + 2\sum_{x\in X^e} p(x)=3\left(\frac{1}{2}-\frac{1}{2}p(x_m)\right)=\frac{3}{2}-\frac{1}{2}\cdot (\frac{1}{3})^{\frac{m-3}{2}}$. \Cref{prop:pathlength} then shows that $\mathit{dist}_m(f)\geq 4-(\frac{1}{3})^{\frac{m-3}{2}}$. Finally, to extend this result to even $m$, we add a new alternative to $\succsim^*$ that loses all majority comparisons. Every C1ML rule will assign probability $0$ to this alternative and it does hence not affect our analysis.\medskip

\textbf{Claim (2):} In this proof sketch, we assume that $m\geq 3$ is odd. To prove the theorem in this case, we will again use Claim (2) of \Cref{prop:pathlength} and hence construct a profile $R$ such that $\max_{x\in X_R}\mathit{md}(p,x,{\succsim_R})\geq \frac{3}{2}-\frac{3}{2m}$ for every lottery $p$. Next, McGarvey's theorem \citep{McGa53a} allows us to focus on complete binary relations on $X_m$. The theorem then follows by proving that $\max_{x\in X_R}\mathit{md}(p,x,{\succsim})\geq \frac{3}{2}-\frac{3}{2m}$ for all lotteries $p$ and the ``cyclic'' relation $\succsim$ given by $x_i\succ x_{i+_m k}$ for all $i\,\shin\, \{1,\dots,m\}$, $k\,\shin\, \{1,\dots, \frac{m-1}{2}\}$ (where $i+_m k=i+k$ if $i+k\leq m$ and $i+_m k=i+k-m$ else).
\end{proof}

\paragraph{Remark 1.} The upper bound in Claim (1) of \Cref{thm:maj} is tight as there are C1ML rules $f$ with $\mathit{dist}(f)=4$. To see this, consider the lottery $p$ given by $p(a)=p(c)=\frac{1}{2}$ and a profile $R$ with $X_R=\{a,b,c\}$, $a\succ_R b$, $b\succ_R c$, and $c\sim_R a$. Since $p$ is C1 maximal in $R$ and $\mathit{md}(p,b,{\succsim_R})=\frac{3}{2}$, \Cref{prop:pathlength} shows that $\mathit{dist}(f)=4$ for all C1ML rules $f$ with $f(R)=p$. By contrast, the lower bound for C1ML rules is not tight. It can be shown that every C1ML rule has a metric distortion of at least $4-3\gamma_m$, where $\gamma_m$ denotes the minimal non-zero probability that a C1ML rule assigns to an alternative in a profile with $m$ alternatives and an odd number of voters. However, the probabilities~$\gamma_m$ are not well-understood \citep{FiRy95b}, so we cannot use them to improve our lower bound for C1ML rules. 

\paragraph{Remark 2.} \Cref{prop:pathlength} allows us to identify the majoritarian RSCF that minimizes $\mathit{dist}_m(f)$ for a fixed number of alternatives $m$: this RSCF $f^*$ chooses for each profile $R$ a lottery $p$ that minimizes $\max_{x\in X_R} \mathit{md}(p,x, {\succsim_R})$. Based on a computer-aided approach, we have shown that $\mathit{dist}_m(f^*)=4-\frac{3}{m}$ for all odd $m\leq 9$, which proves that Claim (2) of \Cref{thm:maj} is tight in these cases.

\paragraph{Remark 3.} \Cref{prop:pathlength} recovers known bounds on the metric distortion of majoritarian SCFs. For instance, this proposition implies that every alternative in the uncovered set has a metric distortion of $5$ because the uncovered set is the set of alternatives that can reach every other alternative in at most two steps. This result has been first shown by \citet{ABE+18a}.

\section{Simulations}\label{sec:simulations}

As our second contribution, we conduct extensive computer experiments to gain insights into the average-case metric distortion of the RSCFs in \Cref{subsec:RSCF}. To this end, we first derive a linear program that efficiently computes the metric distortion of a lottery for a profile (\Cref{subsec:compute}), and then explain the setup and results of our experiments (\Cref{subsec:simulations,subsec:realworld}). The code for our experiments is publicly available on Zenodo \citep{FrLe25b}.

\subsection{Computing the Metric Distortion}\label{subsec:compute}

The main challenge for our experiments is to compute the metric distortion $\mathit{dist}(p, R)$ for a given lottery~$p$ and profile~$R$. To this end, we note that it suffices to compute the term $\mathit{dist}(p,R,x)=\sup_{d\in D(R)} \frac{sc(p,d)}{sc(x,d)}$ for all alternatives $x\in X_R$ because $\mathit{dist}(p, R)=\max_{x\in X_R} \mathit{dist}(p,R,x)$. Moreover, we can assume that $sc(x,d)=1$ since the term $\frac{sc(p,d)}{sc(x,d)}$ is invariant under scaling $d$. Hence, we only need to find for every alternative $x$ the metric $d_x$ that maximizes $sc(p,d_x)$ subject to $d_x\in D(R)$ and $sc(x,d_x)=1$. While this can be done by linear programs that use the distances $d(x,v)$ as variables and encode that $d\in D(R)$ and $sc(x,d)=1$, this straightforward approach is too slow for our experiments as we need $\mathcal{O}((n+m)^3)$ constraints to formalize the triangle inequalities for metrics. 

To derive a more efficient method to compute $\mathit{dist}(p,R,x)$, we will use that the metric distortion of a lottery $p$ for a profile $R$ can be computed by only considering the biased metrics of \citet{ChRa22a}. To define these metrics, we let $\succeq_v$ denote the relation given by $x\succeq_v y$ if and only if $x\succ_v y$ or $x=y$ for all $x,y\in X_R$. Then, a metric $d$ is \emph{biased} for a profile $R$ if there is an alternative $x^*\in X_R$ and a function $t:X_R\rightarrow \mathbb{R}_{\geq 0}$ such that \emph{(i)} $t(x^*)=0$, \emph{(ii)} $d(x^*,v)=\frac{1}{2}\max_{x,y\in X_R\colon x\succeq_v y} t(x)-t(y)$ for all $v\in V_R$, and \emph{(iii)} $d(x,v)=d(x^*,v)+\min_{y\in X_R\colon x\succeq_v y} t(y)$ for all $v\in V_R$ and all $x\in X_R\setminus \{x^*\}$. Unfortunately, due to the maxima and minima in the definition of these metrics, we cannot directly use them to compute $\mathit{dist}(p, R)$. We thus adapt the idea of biased metrics to construct a linear program that efficiently computes this value. In more detail, we will show that the following LP (called \ref{LP}), which uses variables $d(x,v)$ and $t(x)$ for $x\in X_R$ and $v\in V_R$, computes $\mathit{dist}(p,R,x^*)$ for every lottery $p$, profile~$R$, and alternative~$x^*$.  
\noindent\begin{equation*}
\hspace{-0.13cm}
    \begin{array}{lll}
    \text{max}\!\!\!\!\!\!\!\!\! & \sum\limits_{x\in X_R} p(x)\sum\limits_{v\in V_R} d(x,v)\\
    \text{s.t.}\!\!\!\! & t({x^*})=0 &\\ 
          & t(x)\geq 0 & \forall x\!\in\! X_R\\
          & d(x^*, v)\geq \frac{1}{2}(t(x)-t(y)) & \forall v\!\in\! V_R, \,x, y\!\in\! X_R\colon x\succeq_v y\!\!\!\!\!\\
          &d(x, v)\leq d(x^*, v)+t(y) & \forall v\!\in\! V_R, \,x, y\!\in\! X_R\colon x\succeq_v y\!\!\!\!\!\\
          & d(x, v)+d(x^*, v)\geq t(x) & \forall v\!\in\! V_R, \,x\!\in\! X_R\\
          & \sum_{v\in V_R} d(x^*, v)=1 &
    \end{array}
    \tag{LP 1}\label{LP}
\end{equation*}

\begin{restatable}{proposition}{LP}\label{prop:LPcorrect}
Fix a profile $R$, a lottery $p$, and~an alternative~$x^*$. If the optimal objective value $o_{LP}^*$ of \ref{LP} is bounded, then $\mathit{dist}(p,R,x^*)=o_{LP}^*$ and otherwise $\mathit{dist}(p,R,x^*)=\infty$.
\end{restatable}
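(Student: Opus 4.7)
The plan is to establish the equality via two matching inequalities. I first show that every metric $d' \in D(R)$ with $sc(x^*, d') = 1$ yields a feasible LP solution whose objective equals $sc(p, d')$, giving $o_{LP}^* \geq \mathit{dist}(p, R, x^*)$; then I show that every feasible LP solution $(d, t)$ can be converted into a metric $d' \in D(R)$ with $sc(p, d')/sc(x^*, d') \geq$ LP objective, giving the reverse inequality. The unbounded case is then handled by applying the second construction to a sequence of feasible LP solutions whose objective tends to infinity, and the first construction to a sequence of metrics whose cost ratio diverges.

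For the first direction, given $d' \in D(R)$ with $sc(x^*, d') = 1$, I set $d(x, v) := d'(x, v)$ and $t(x) := d'(x, x^*)$. The constraints $t(x^*) = 0$, $t(x) \geq 0$, the triangle bound $d(x, v) + d(x^*, v) \geq t(x)$, the inequality $d(x, v) \leq d(x^*, v) + t(y)$ for $x \succeq_v y$, and the normalization $\sum_v d(x^*, v) = 1$ all follow immediately from the triangle inequality and consistency (which says $x \succeq_v y$ implies $d'(x, v) \leq d'(y, v)$). The slightly more subtle constraint $d(x^*, v) \geq \tfrac{1}{2}(t(x) - t(y))$ for $x \succeq_v y$ is obtained by adding the triangle inequalities $d'(x, x^*) \leq d'(x, v) + d'(v, x^*)$ and $d'(v, y) \leq d'(v, x^*) + d'(x^*, y)$ and then cancelling using consistency. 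The objective then evaluates to $sc(p, d')$.

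For the reverse direction, given a feasible LP solution $(d, t)$, I construct the biased metric of \citet{ChRa22a} from the $t$-values by setting $d'(x^*, v) := \tfrac{1}{2} \max_{x \succeq_v y}(t(x) - t(y))$ and $d'(x, v) := d'(x^*, v) + \min_{y\colon x \succeq_v y} t(y)$ for $x \neq x^*$, and extending to alternative--alternative and voter--voter pairs as in their construction so that $d' \in D(R)$. The LP constraints on $d(x^*, v)$ and $d(x, v)$ then directly yield $sc(x^*, d') \leq \sum_v d(x^*, v) = 1$ and LP objective $\leq 1 + T$, where $T := \sum_v \sum_x p(x) \min_{y\colon x \succeq_v y} t(y) \geq 0$. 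A short computation using $t(x^*) = 0$ shows that $sc(p, d') = sc(x^*, d') + T$, whence the ratio $sc(p, d')/sc(x^*, d') = 1 + T/sc(x^*, d') \geq 1 + T$ is at least the LP objective. Degenerate cases with $sc(x^*, d') = 0$ are covered by the stated conventions $0/0 = 1$ and $z/0 = \infty$.

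The main obstacle is verifying that the biased metric built from the LP $t$-values is genuinely a metric in $D(R)$, which requires checking the triangle inequality over every four-tuple of voters and alternatives and preserving the order of each voter when the voter--voter and alternative--alternative distances are filled in. I would handle this step by following the construction and verification of \citet{ChRa22a}, deferring the explicit check to the appendix. Once this is in place, the unbounded case is immediate: an unbounded LP produces metrics of arbitrarily large cost ratio via the second construction, and an infinite distortion produces LP solutions of arbitrarily large objective via the first construction.
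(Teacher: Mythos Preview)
Your proof is correct, and the two directions are handled somewhat differently from the paper's proof, in a way that is worth noting.

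For the direction metric $\to$ LP-feasible, the paper first passes through the biased-metric construction of \citet{ChRa22a}: starting from an arbitrary optimal metric, it builds the associated biased metric, argues that the biased metric has at least as good a ratio, and then checks that the biased metric (together with its defining $t$) satisfies the LP constraints. Your argument is more direct: you plug the original metric $d'$ straight into the LP with $t(x)=d'(x,x^*)$, and all constraints follow from a couple of triangle-inequality manipulations together with consistency. This is strictly simpler and avoids the intermediate optimality comparison with the biased metric.

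For the direction LP-feasible $\to$ metric, the roles are reversed. The paper keeps the LP values $d_{LP}(x,v)$, massages them in two monotone steps (first raising everything to at least $d_{LP}(x^*,v)$, then enforcing consistency with $R$), and finally extends to a full metric by shortest-path completion, verifying the triangle inequality by hand. You instead discard the $d$-variables entirely and rebuild the biased metric from the $t$-variables alone; the LP constraints then give $sc(x^*,d')\le 1$ and LP objective $\le 1+T$, while the identity $sc(p,d')=sc(x^*,d')+T$ yields $sc(p,d')/sc(x^*,d')\ge 1+T$. This is clean and short, but it outsources the ``is a metric in $D(R)$'' verification to \citet{ChRa22a}, whereas the paper's construction is self-contained. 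One small point you leave implicit: when $\mathit{dist}(p,R,x^*)=\infty$ is witnessed by a metric with $sc(x^*,d)=0$, your first construction cannot be applied directly (no normalization); you should note that a convex combination with any fixed metric in $D(R)$ having positive $sc(x^*,\cdot)$ produces a sequence of normalizable metrics whose ratio still diverges.
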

\begin{proof}[Proof sketch]
    Let $R$ denote a profile, $p$ a lottery, and $x^*$ an alternative. It can be checked that every biased metric $d$ together with its inducing function $t$ satisfies the conditions of \ref{LP},
    so the optimal objective value $o_{LP}^*$ of our LP is lower bounded by $\mathit{dist}(p,R,x^*)$. Specifically, the constraints in the first four lines follow directly from the definition of $d$ and $t$, the fifth line follows by substituting the definitions of $d(x,v)$ and $d(x^*,v)$, and the constraint that $\sum_{v\in V_R} d(x^*, v)=1$ can be enforced by scaling $d$ and $t$ without affecting $\mathit{dist}(p,R,x^*)$.
    Conversely, to show that $\mathit{dist}(p,R,x^*)\geq o_{LP}^*$, we prove that every feasible solution of \ref{LP} with objective value $o_{LP}$ can be transformed into a metric $d\in D(R)$ such that $\frac{sc(p,d)}{sc(x^*,d)}\geq o_{LP}$. We note that this direction is independent of the work of \citet{CRWW23a} as we need to reason about our constraints to turn a feasible solution of our LP into a metric. 
\end{proof}

Given a profile $R$ on $n$ voters and $m$ alternatives, \ref{LP} has $\mathcal{O}(nm^2)$ constraints and it is thus very fast to solve this LP. For example, based on \ref{LP}, we need in average roughly $20$ seconds on a single core of an Apple M1 Ultra chip to compute the metric distortion of a lottery for a profile with $201$ voters and $15$ alternatives.

\subsection{Simulations with Synthetic Data}\label{subsec:simulations}

As our first computer experiment, we conduct extensive simulations based on synthetically generated preference profiles. In more detail, we generate $1000$ preference profiles on $m$ alternatives and $n$ voters for $14$ distributions over preference profiles and all combinations of $(m,n)\in \{5,10,15\}\times \{11,21,\dots, 201\}$. For each of the sampled preference profiles $R$, we then compute the lottery $f(R)$ chosen by the five RSCFs discussed in \Cref{subsec:RSCF} and their respective metric distortion $\mathit{dist}(f(R),R)$. Finally, for each $m\in \{5,10,15\}$ and each distribution, we plot the average metric distortion of each RSCF as a function depending on the number of voters $n$. Due to space restrictions, we show the results of these simulations only for two exemplary distributions, namely the impartial culture model and the $3$-dimensional Euclidean cube model. The plots for the other distributions (the Mallow's model, the P\'olya-Eggenberg urn model, and the $t$-dimensional Euclidean cube and ball models with various parameterizations) as well as further statics can be found in the appendix. In particular, our experiments cover all major models used in the ``map of elections'' \citep{SFS+20a,BBF+21a,BFL+24a}. We next define the impartial culture and the Euclidean cube models.

\paragraph{Impartial Culture (IC)} In this model, each voter is assigned a preference relation independently and uniformly at random. Hence, for each voter $v\in V_n$ and preference relation ${\succ}\in\mathcal{R}(X_m)$, the probability that $\succ$ is assigned to $v$ is $\frac{1}{m!}$.

\paragraph{$t$-Dimensional Euclidean Cube ($t$EC)} In this model, we assign voters and alternatives independently and uniformly at random to points in the $t$-dimensional cube $[-1,1]^t$. The voters' preference relations are then given by their distances to the alternatives: a voter $v$ prefers alternative $x$ to alternative $y$ if $|p_v-p_x|_2<|p_v-p_y|_2$ where $p_v$, $p_x$, and $p_y$ denote the points of $v$, $x$, and $y$ in the hypercube. In the main body, we use this model with $t=3$ dimensions.\medskip

The results of our simulations for these two models are shown in \Cref{fig:results}. We first note that, in most experiments, the measured variance is rather lower, with typical values lying between $0.05$ to $0.01$ (see the appendix for details). Moreover, in all experiments, the average metric distortion of the considered RSCFs is significantly smaller than their worst-case metric distortion, thus indicating that such worst-case bounds are too pessimistic for more realistic profiles. In particular, the average metric distortion of all RSCFs is usually in the interval $[2,2.5]$, which also shows that the choice of a particular rule has only limited effect. This is especially striking when comparing C1ML and C2ML rules, which are almost indistinguishable in our experiments even though the worst-case metric distortion is $3$ for C2ML rules and $4$ for C1ML rules. 

Beyond these general observations, there are several interesting trends in our experiments that can be observed for most of the distributions. We explain these trends for each RSCF individually. 

\paragraph{CRWW rule} In most of our simulations and especially when $m\in \{10,15\}$, the CRWW rule $f_\mathit{CRWW}$ has the lowest average metric distortion among the tested rules. In particular, for effectively all distributions and all numbers of voters, the average metric distortion of this rule lies between $2$ and $2.15$, thus demonstrating a strong resilience against the used sampling model. These results suggest that, when the metric distortion is the central factor for deciding on the RSCF, we should use the CRWW rule as it has both the best worst-case and average-case metric distortion.

\paragraph{Randomized Plurality-Veto} The randomized Plurality-Veto rule $f_\mathit{RPV}$ has often a very low average metric distortion if there are only $m=5$ alternatives, but it becomes worse as $m$ increases. For instance, in the $3$-dimensional Euclidean cube model, it has for most values of $n$ an average metric distortion of less than $2$, but its average metric distortion increases to over $2.2$ when $m=15$. By contrast, in the impartial culture model, the average metric distortion of $f_\mathit{RPV}$ depends significantly on the number of voters and roughly converges against $2+\frac{1}{m-1}$. We believe the reason for this is that, in our simulations, $f_\mathit{RPV}$ randomizes over larger sets of alternatives when $m$ increases. This is beneficial for the metric distortion if preference profiles are sufficiently close to uniform, but detrimental if there is an alternative that every voter appreciates. Consequently, the randomized Plurality-Veto rule performs worse than, e.g., the C1ML and C2ML rules when using the Euclidean Cube model and $m\in \{10,15\}$ alternatives.

\paragraph{Uniform random dictatorship} The average metric distortion of the uniform random dictatorship $f_\mathit{RD}$ becomes smaller as the number of voters increases when using distributions that are close to uniform. For instance, for the impartial culture model and all $m\in \{5,10,15\}$, the average metric distortion of $f_\mathit{RD}$ converges to a value close to $2$ as the number of voters $n$ increases. By contrast, if the voters' preferences are more structured (e.g., in the Euclidean cube model), the average metric distortion is largely independent of the number of voters and significantly worse than that of the other rules. A possible explanation for this is that $f_\mathit{RD}$ only considers the voters' favorite alternatives and thus fails to identify strong compromise alternatives. In particular, such strong alternatives are likely to exist for structured distributions, but typically do not exist if $n$ is large and the distribution over profiles is close to uniform.

\paragraph{C1ML and C2ML rules.} The average metric distortion of C1ML and C2ML rules is rather high for the impartial culture model with $m=5$ (close to $2.25$ when $n\geq 100$), but it is close to that of the CRWW rule for more structured distributions (e.g., the $3$-dimensional Euclidean cube model). Moreover, for most distributions, the average metric distortion of these rules decreases when the number of alternatives increases. Our explanation for this is that C1ML and C2ML rules use very little randomization as they select randomized Condorcet winners. This behavior results in a low metric distortion if there are alternatives that severely beat all other alternatives in a pairwise comparison, but it is detrimental if all alternatives are roughly equally good or if their is a Condorcet winner that only narrowly beats the other alternatives.
\begin{figure}[t]
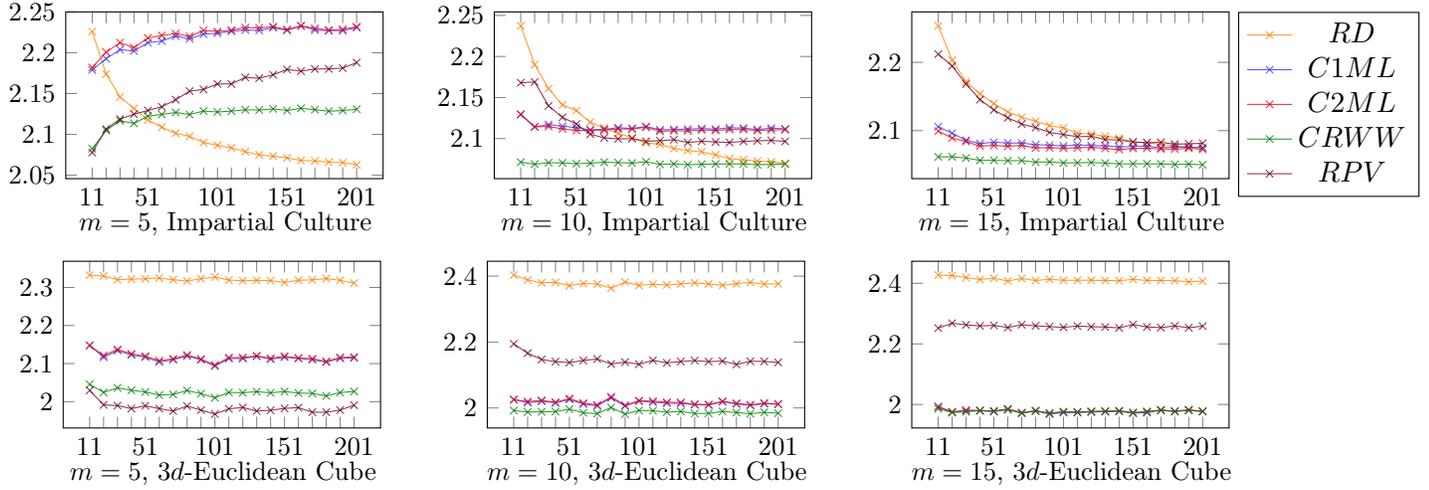

    \centering
\begin{tabular}{rrl}
    \hspace{-0.5cm}\plot{CSV/5candidates/ImpartialCultureAverage.csv}{$m=5$, Impartial Culture}{\legend{};} 
    & 
    \plot{CSV/10candidates/ImpartialCultureAverage.csv}{$m=10$, Impartial Culture}{\legend{};} & 
    \plot{CSV/15candidates/ImpartialCultureAverage.csv}{$m=15$, Impartial Culture}{} \\
\plot{CSV/5candidates/EuclideanCube3Average.csv}{$m=5$, $3d$-Euclidean Cube}{\legend{};} & 
\plot{CSV/10candidates/EuclideanCube3Average.csv}{$m=10$, $3d$-Euclidean Cube}{\legend{};} & 
\plot{CSV/15candidates/EuclideanCube3Average.csv}{$m=15$, $3d$-Euclidean Cube}{\legend{};}
\end{tabular}\vspace{-0.3cm}
\caption{Results of our simulations for the impartial culture and $3$-dimensional Euclidean cube models. For both models and $m\!\in\!\{5,10,15\}$ alternatives, we plot the average metric distortion ($y$-axis) of the uniform random dictatorship, the C2ML and C1ML rules, the randomized Plurality-Veto rule, and the CRWW rule subject to the number of voters $n\!\in\! \{11,21,\dots,201\}$ ($x$-axis).}
\label{fig:results}
\end{figure}
\begin{figure}[t]
    \centering
\pgfplotstableread[col sep=comma,]{CSV/SpotifyDataAverage.csv}\datatable
\begin{tikzpicture}
\begin{axis}[
    width=9cm,
    height=5cm,
    x tick label style={font=\small},
    xtick=data,
    xticklabels from table={\datatable}{n},
    legend style={at={(1.4,1,5)},anchor=north east},
    y tick label style={font=\small,
    }
    ]    
    \addplot [mark=x, orange, opacity=0.8   ] table [x expr=\coordindex, y={RandomDictatorship}]{\datatable};
    \addlegendentry{$RD$};
    
    \addplot [name path = C1ML, mark=x, blue!80, yshift=0.4, opacity=0.8] table [x expr=\coordindex, y={C1ML}]{\datatable};
    \addlegendentry{$C1ML$};

    \addplot [name path = C2ML, mark=x, red!95!black, opacity=0.8] table [x expr=\coordindex, y={C2ML}]{\datatable};
    \addlegendentry{$C2ML$};

    \addplot [name path = CRWW, mark=x, opacity=0.8, green!50!black ] table [x expr=\coordindex, y={CRWW}]{\datatable};
    \addlegendentry{$CRWW$};

    \addplot [name path = PluralityVeto, mark=x, opacity=0.8, purple!60!black ] table [x expr=\coordindex, y={PluralityVeto}]{\datatable};
    \addlegendentry{$RPV$};
\end{axis}
\draw[line width = 0.03cm, color = black!40] (0.45,0) -- (0.45, 0.2); 
\draw[line width = 0.03cm, color = black!40] (1.575,0) -- (1.575, 0.2); 
\draw[line width = 0.03cm, color = black!40] (2.7,0) -- (2.7, 0.2);
\draw[line width = 0.03cm, color = black!40] (3.84,0) -- (3.84, 0.2);
\end{tikzpicture}\vspace{-0.2cm}
\caption{Results of our simulations with the Spotify Daily dataset. Each data point presents the average metric distortion of one of our RSCFs over 14 days (e.g., the first data point averages the metric distortion from January 01 to January 14, the second one from January 15 to January 28, etc.).}
\label{fig:realworld}
\end{figure}
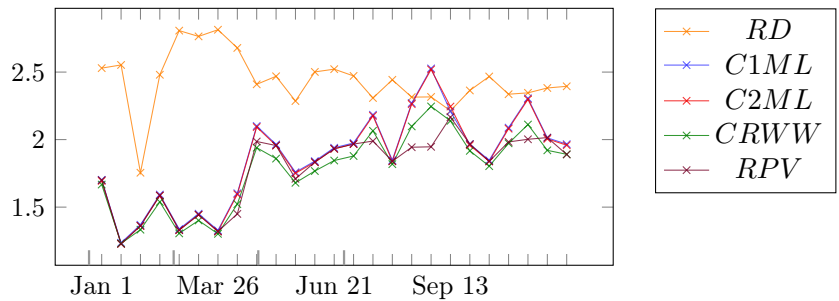

\subsection{Simulations with Real-world Data}\label{subsec:realworld}

We also conduct computer experiments on the average metric distortion of our RSCFs on real-world data from PrefLib \citep{MaWa13a}. In more detail, we use the \emph{Spotify Daily dataset} provided by \citet{BoSa23a} for our experiments. This dataset contains the rankings of the $200$ most listened songs on Spotify for $53$ countries and every day in $2017$, and the rankings of each day form an election. Since not every country ranks the same songs among their top-$200$, \citet{BoSa23a} have identified maximal complete subelections for each day, which typically contain between $40$ and $50$ voters and around $20$ alternatives. These maximal subelections are very well-suited for our computer experiments because they contain a reasonable number of voters and alternatives and all voters rank all alternatives. Due to these characteristics, we can directly compute our five RSCFs and their metric distortion on these maximal subelections, i.e., we conduct our experiments on the real-world data without any modifications other than those made by \citet{BoSa23a}. This also means that it suffices to compute the metric distortion of our RSCFs for each subelection once since no randomization is used in the generation of preference profiles.

The results of our experiments with the Spotify Daily dataset are shown in \Cref{fig:realworld}, where we display the average metric distortion of each RSCF in a biweekly rhythm, i.e., each data point is the average of $14$ days. Additional statistics can again be found in the apendix.
We note that the simulations on the Spotify Daily dataset roughly agree with our computer experiments based on, e.g., the Euclidean models or Mallow's model (see the appendix for more details). In more detail, \Cref{fig:realworld} shows that the metric distortion of the randomized Plurality-Veto and CRWW rules is typically only slightly better than that of the C1ML and C2ML rules, whereas the uniform random dictatorship often performs significantly worse. There are, however, two central differences between our simulations with synthetic data and the Spotify Daily dataset. Firstly, in the first three month, there is often a very dominant alternative in the elections from Spotify, which results in a very low metric distortion for all tested RSCFs but $f_\mathit{RD}$. Such profiles do typically not appear in our synthetic data, which may hint at the fact that our computer experiments are still too pessimistic. However, after the first three month, this effect vanishes and our RSCFs take similar values as in the simulation with synthetic data. Secondly, the randomized Plurality-Veto rule $f_\mathit{RPV}$ performs surprisingly well on the real-world data, in particular in light of the large number of alternatives. The reason for this may be that, while there are many alternatives, only few of them are first-ranked in the preference profiles and only such alternatives have a chance to win under~$f_\mathit{RPV}$.

To summarize our computer experiments with both synthetic and real-world data, we believe that they firstly show that that tailored RSCFs, such as the CRWW rule and the randomized Plurality-Veto rule, typically also have the smallest average-case metric distortion. However, especially when preference profiles are sufficiently structured, C1ML and C2ML rules are only slightly worse, thus providing an argument in favor of these rules. Lastly, our simulations show that the uniform random dictatorship is not suitable to minimize the metric distortion in practice, especially when we expect strong alternatives to exist.

\section{Theoretical Average-case Analysis}\label{sec:IC}

Lastly, we will analytically examine the average-case metric distortion of our RSCFs by calculating their expected metric distortion for a randomly drawn profile. In particular, the results in this section can be seen as a rigorous mathematical counterpart to the simulations in \Cref{sec:simulations}. For mathematical feasibility, we will restrict our attention to the impartial culture model and write $\mathit{IC}(m,n)$ for the respective probability distribution over profiles with $m$ alternatives and $n$ voters. While the impartial culture model is somewhat unrealistic, it is frequently analyzed as it is often seen as a good starting point for average-case analyses \citep[e.g.,][]{GeFi76b,GeFi78a,PaWe78a,PrWi09a,BCH+15a,KaXi21a}. 

As we show next, if the number of voters goes to infinity, the expected metric distortion of every RSCF with bounded metric distortion converges to a value between $2$ and $2+\frac{1}{m-1}$ under the impartial culture model. This means that the choice of the voting rule has only a small effect on the expected metric distortion if there is a large number of voters and alternatives. 

\begin{restatable}{theorem}{ML}\label{thm:ML}
    Let $m\geq 3$. It holds for every RSCF $f$ with $\mathit{dist}_m(f)<\infty$ and $z=\liminf_{n\rightarrow\infty } \mathbb{P}_{R\sim IC(m,n)}[\exists x\in X_R\colon f(R,x)=0]$ that 
    \begin{enumerate}[label=(\arabic*),leftmargin=*]
        \item $\limsup_{n\rightarrow \infty} \mathbb{E}_{R\sim IC(m,n)}[\mathit{dist}(f(R),R)]\leq 2+\frac{1}{m-1}$
        \item  $\liminf_{n\rightarrow \infty} \mathbb{E}_{R\sim IC(m,n)}[\mathit{dist}(f(R),R)]\geq  2+\frac{z}{m-1}$.
    \end{enumerate}
\end{restatable}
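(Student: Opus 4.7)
The plan is to tie both bounds to a single family of \emph{extremal} biased metrics. For each $x^* \in X_R$, let $d^{x^*}$ denote the biased metric (as defined immediately before \ref{LP}) induced by $t(x^*) = 0$ and $t(y) = 1$ for every $y \neq x^*$. Direct expansion of the biased-metric formulas yields $sc(x^*, d^{x^*}) = \tfrac{1}{2}(n - t_R(x^*))$ and $sc(y, d^{x^*}) - sc(x^*, d^{x^*}) = n_{x^* y}(R)$ for every $y \neq x^*$. Under $\mathit{IC}(m,n)$, the strong law gives $t_R(x^*)/n \to 1/m$ and $n_{x^* y}/n \to 1/2$ in probability, uniformly over $x^*$ and $y$, so $sc(y, d^{x^*})/sc(x^*, d^{x^*}) \to (2m-1)/(m-1) = 2 + 1/(m-1)$.

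For Claim (2), I set $x^* = \arg\min_y p(y)$ for $p = f(R)$ and use $\mathit{dist}(p, R) \geq sc(p, d^{x^*})/sc(x^*, d^{x^*})$ (since $x^*$ minimizes social cost under $d^{x^*}$). A short computation rewrites the right-hand side as $1 + (1 - p(x^*))\,m/(m-1) + o_{\mathbb{P}}(1)$, so taking expectations gives $\mathbb{E}[\mathit{dist}(f(R), R)] \geq 2 + \tfrac{1}{m-1} - \tfrac{m}{m-1}\,\mathbb{E}[\min_y p(y)] - o(1)$. Since $\min_y p(y) \leq 1/m$ always and vanishes on an event of probability $z_n := \mathbb{P}[\exists y : f(R, y) = 0]$, we have $\mathbb{E}[\min_y p(y)] \leq (1 - z_n)/m$, hence $\mathbb{E}[\mathit{dist}] \geq 2 + z_n/(m-1) - o(1)$; taking $\liminf_n$ yields (2).

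For Claim (1), I invoke \Cref{prop:LPcorrect} and study \ref{LP} for each base $x^*$. Writing $a_v(t) := \sum_{x \neq x^*} p(x)\,\min_{y : x \succeq_v y} t(y)$ and $M_v(t) := \max_{a \succeq_v b}(t(a) - t(b))$, the LP value equals $1 + 2\,\sup_{t} \bigl(\sum_v a_v(t)/\sum_v M_v(t)\bigr)$ by homogeneity in $t$. The crux is the \emph{Key Lemma}: for any $t \geq 0$ with $t(x^*) = 0$, any $x \neq x^*$, and a uniformly random linear order $v$ of $X_R$, $\mathbb{E}[\min_{y : x \succeq_v y} t(y)] \leq \tfrac{m}{2(m-1)}\,\mathbb{E}[M_v(t)]$, with equality attained at $t(y) = c$ for all $y \neq x^*$ (the choice underlying $d^{x^*}$). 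Weighting by $p(x)$ and summing over $x \neq x^*$ yields $2\,\mathbb{E}[a_v(t)]/\mathbb{E}[M_v(t)] \leq (1 - p(x^*))\,m/(m-1)$ for every $t$. Converting expectations to empirical sums under IC via a uniform LLN, maximizing over $x^*$, and using $\min_y p(y) \geq 0$, we obtain $\mathit{dist}(p, R) \leq 2 + 1/(m-1) + o_{\mathbb{P}}(1)$ on a high-probability event. Since $\mathit{dist}_m(f) < \infty$, the low-probability complement contributes at most $o(1)$ to the expectation, giving $\limsup_n \mathbb{E}[\mathit{dist}] \leq 2 + 1/(m-1)$.

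The two main obstacles are the Key Lemma and the uniform-LLN step. For the lemma, both $\mathbb{E}[\min_{y:x\succeq_v y} t(y)]$ and $\mathbb{E}[M_v(t)]$ are invariant under permutations of $X_R \setminus \{x, x^*\}$; since $\min$ is concave and $\max$ is convex in $t$, symmetrizing $t$ over these permutations weakly increases the ratio, reducing the problem to the two-parameter family $t(x) = a$, $t(y) = s$ for $y \notin \{x, x^*\}$, $t(x^*) = 0$, which I verify by an elementary case analysis on the ordering of $a$ and $s$. For the uniform LLN, after normalizing $\max t \leq 1$ we may restrict $t$ to a compact polytope on which the map $t \mapsto (a_v(t), M_v(t))$ is piecewise linear and bounded, so a standard covering argument promotes the pointwise LLN (available since voters are i.i.d. uniform rankings under IC) to the uniform convergence of $\sum_v a_v(t)/\sum_v M_v(t)$ to $\mathbb{E}[a_v(t)]/\mathbb{E}[M_v(t)]$.
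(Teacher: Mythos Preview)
Your lower bound (Claim~(2)) is essentially the paper's argument: pick the biased metric $d^{x^*}$ with $t\equiv\mathrm{const}$ on $X_R\setminus\{x^*\}$, compute $sc(x^*,d^{x^*})$ and $sc(y,d^{x^*})$, concentrate under IC, and bound $\mathbb{E}[\min_y p(y)]$ via the event $\{\exists y: p(y)=0\}$. The only thing to tighten is the passage from $o_{\mathbb{P}}(1)$ to $o(1)$ in expectation; this works because the error term is uniformly bounded below (by $-m/(m-1)$), so splitting on a high-probability concentration event suffices.

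Your upper bound (Claim~(1)) takes a genuinely different route. The paper does not use a uniform LLN over $t$; instead it proves a deterministic ``subprofile continuity'' lemma---if $R'$ is obtained from $R$ by deleting an $\alpha$-fraction of voters, then $\mathit{dist}(p,R)\le \mathit{dist}(p,R')+\alpha(\mathit{dist}(p,R)+1)$---and combines it with an exact computation of $\mathit{dist}(p,R^*)=2+\tfrac{1}{m-1}-\tfrac{m}{m-1}\min_x p(x)$ on the perfectly uniform profile $R^*$. Under IC, a random profile contains a uniform subprofile on a $(1-\alpha)$-fraction of voters with high probability, so the two pieces glue together without any uniformity-over-$t$ argument. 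Your approach instead rewrites the LP value as $1+2\sup_t(\sum_v a_v/\sum_v M_v)$ and controls the limiting ratio via the Key Lemma, then transfers to the empirical ratio via a uniform LLN. Both approaches ultimately hinge on the same two-parameter computation: after your symmetrization (or the paper's LP-averaging over permutations fixing $x$ and $x^*$), one must verify that among $t$ with $t(x^*)=0$, $t(x)=a$, $t(y)=s$ for $y\notin\{x,x^*\}$, the constant choice $a=s$ maximizes the relevant ratio. The paper carries this out explicitly (it is not entirely trivial---there are separate cases $a\le s$ and $a>s$, and the latter requires comparing two inequalities derived from $sc(x^*,\cdot)$ and $sc(\hat x,\cdot)$); you defer it to ``elementary case analysis.'' Your symmetrization via concavity of $\min$ and convexity of $\max$ is arguably cleaner than the paper's construction of symmetric LP solutions, but the trade-off is that you must then make the uniform LLN rigorous (Lipschitz in $t$ on the normalized simplex, denominator bounded away from zero on the concentration event), whereas the paper's subprofile lemma sidesteps this entirely with a one-line triangle-inequality estimate.
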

\begin{proof}[Proof sketch]
    The basic idea of this proof is that, as $n$ grows larger, a profile drawn from $IC(m,n)$ is with high probability close to the profile $R^*$ where each preference relation is reported by the same amount of voters. We thus start by analyzing the profile $R^*$ and show with the help of \ref{LP} that the metric distortion of a lottery $p$ on $R^*$ is $2+\frac{1}{m-1}-\frac{m}{m-1}\cdot \min_{x\in X_m} p(x)$. Next, we prove that the metric distortion of a lottery $p$ for a profile $R$ that contains $R^*$ as a large subprofile can be bounded based on the metric distortion of $p$ for $R^*$. Based on these two insights, we then fix a number of voters $n$, set $\alpha=\frac{1}{\sqrt[3]{n}}$, and define $T^\alpha$ as the set of profiles for $n$ voters such that each preference relation is reported by more than $(1-\alpha)\frac{n}{m!}$ voters. Using the law of total probability, we derive that 
    \begin{align*}
        \mathbb{E}[\mathit{dist}(f(R), R)]&=\mathbb{P}[R\not\in T^\alpha]\cdot \mathbb{E}[\mathit{dist}(f(R), R)|R\not\in T^\alpha] \\
        &+ \mathbb{P}[R\in T^\alpha]\cdot \mathbb{E}[\mathit{dist}(f(R), R)|R\in T^\alpha]
    \end{align*}
    Finally, for our upper bound, we use standard concentration bounds to show that $\mathbb{P}[R\not\in T^\alpha]$ goes to $0$ as $n$ increases. Since $\mathbb{E}[\mathit{dist}(f(R), R)|R\not\in T^\alpha]\leq \mathit{dist}_m(f)<\infty$, it hence follows that $\mathbb{E}[\mathit{dist}(f(R), R)]$ converges to $\mathbb{E}[\mathit{dist}(f(R), R)|R\in T^\alpha]$ when $n$ goes to infinity, and using our previous insights, we can bound this expectation by $2+\frac{1}{m-1}$. Similarly, for our lower bound, we use that $\mathbb{E}[\mathit{dist}(f(R), R)]\geq \mathbb{P}[R\in T^\alpha]\cdot \mathbb{E}[\mathit{dist}(f(R), R)|R\in T^\alpha]$ and derive a lower bound on $\mathbb{E}[\mathit{dist}(f(R), R)|R\in T^\alpha]$.
\end{proof}

Based on a similar approach as in \Cref{thm:ML}, we will next precisely compute the expected metric distortion of the uniform random dictatorship $f_\mathit{RD}$ and the randomized Plurality-Veto rule~$f_\mathit{RPV}$. In particular, we will show that, in the limit, $f_\mathit{RD}$ has an optimal expected metric distortion of $2$ under the impartial culture model, whereas the expected metric distortion of $f_\mathit{RPV}$ is $2+\frac{1}{m-1}$.

\begin{restatable}{theorem}{expectedRD}\label{prop:expecteddistRD} 
It holds for every $m\geq 3$ that 
\begin{enumerate}[leftmargin=*, label=(\arabic*)]
    \item $\lim_{n\rightarrow\infty}\mathbb{E}_{R\sim IC(m,n)}[\mathit{dist}(f_{RD}(R),R)]=2$.
    \item $\lim_{n\rightarrow\infty}\mathbb{E}_{R\sim IC(m,n)}[\mathit{dist}(f_{RPV}(R),R)]=2+\frac{1}{m-1}$.
\end{enumerate}
\end{restatable}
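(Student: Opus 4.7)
The plan is to sharpen the bounds of Theorem~\ref{thm:ML} for the two concrete RSCFs, reusing its scaffolding. Let $R^*$ denote the symmetric profile (where each preference relation is reported the same number of times) and $T^\alpha$ the set of profiles where each preference is reported by more than $(1-\alpha)\tfrac{n}{m!}$ voters; for $\alpha=1/\sqrt[3]{n}$, standard concentration yields $\mathbb{P}_{R\sim IC(m,n)}[R\in T^\alpha]\to 1$. The key analytic ingredient from the sketch of Theorem~\ref{thm:ML} is that on $R^*$ the metric distortion of a lottery $p$ equals $2+\tfrac{1}{m-1}-\tfrac{m}{m-1}\min_{x\in X_R}p(x)$, and this equality extends up to an $O(\alpha)$ perturbation to every $R\in T^\alpha$.

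For Claim~(1), I would obtain the upper bound by noting that on any $R\in T^\alpha$ every alternative is top-ranked by at least $(1-\alpha)n/m$ voters, so $\min_{x\in X_R} f_\mathit{RD}(R,x)\geq (1-\alpha)/m$. Plugging into the distortion formula yields $\mathit{dist}(f_\mathit{RD}(R),R)\leq 2+O(\alpha)$ on $T^\alpha$; together with the law of total expectation and the worst-case bound $\mathit{dist}_m(f_\mathit{RD})=3<\infty$ on the complement of $T^\alpha$, this gives $\limsup_n\mathbb{E}[\mathit{dist}(f_\mathit{RD}(R),R)]\leq 2$. The matching lower bound is immediate from Theorem~\ref{thm:ML}(2) with $z\geq 0$.

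For Claim~(2), the upper bound $2+\tfrac{1}{m-1}$ is inherited from Theorem~\ref{thm:ML}(1); for the lower bound, Theorem~\ref{thm:ML}(2) reduces the task to showing
\[
z\;=\;\liminf_n \mathbb{P}_{R\sim IC(m,n)}\big[\exists x\in X_R:f_\mathit{RPV}(R,x)=0\big]\;=\;1,
\]
equivalently that $|PV(R)|<m$ with probability tending to $1$. The crux is a combinatorial lemma: writing $b_R(x)$ for the number of voters who bottom-rank $x$, I claim $b_R(x)>t_R(x)$ implies $x\notin PV(R)$. Indeed, for $x$ to be the last alternative with positive score under some voter order, $x$ must be vetoed exactly $t_R(x)$ times in total with the last veto occurring at round $n$, so $x$ has positive score throughout rounds $1,\dots,n$. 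But every voter who bottom-ranks $x$ vetoes $x$ as long as $x$ has positive score, so all $b_R(x)$ such voters veto $x$; this forces $t_R(x)\geq b_R(x)$, contradicting $b_R(x)>t_R(x)$.

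Given this lemma, it remains to show that some alternative with $b_R(x)>t_R(x)$ exists with probability tending to $1$ under $IC(m,n)$. This follows from the identity $\sum_{x\in X_R} t_R(x)=\sum_{x\in X_R}b_R(x)=n$: if $t_R(x)\geq b_R(x)$ for every $x$, the non-negative differences sum to zero and must all vanish, so $\mathbb{P}[\forall x:t_R(x)\geq b_R(x)]=\mathbb{P}[\forall x:t_R(x)=b_R(x)]$. For any fixed $x$, the random variable $t_R(x)-b_R(x)=\sum_v(\mathbf{1}_{\mathrm{top}_v=x}-\mathbf{1}_{\mathrm{bot}_v=x})$ is a sum of i.i.d.\ mean-zero increments in $\{-1,0,1\}$ with variance $2/m$, so the local central limit theorem gives $\mathbb{P}[t_R(x)=b_R(x)]=O(1/\sqrt{n})\to 0$. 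Thus $z=1$ and the two bounds match. I expect the combinatorial lemma to be the main obstacle, as one must carefully rule out that voters bottom-ranking $x$ could ``avoid'' vetoing $x$ by waiting for other alternatives to hit zero; the probabilistic component is then a clean consequence of the zero-sum identity combined with the local CLT.
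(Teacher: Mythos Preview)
Your proposal is correct and follows essentially the same route as the paper: for Claim~(1) you combine the near-uniformity of $f_{\mathit{RD}}$ on $T^\alpha$ with the distortion formula on the symmetric subprofile and the merge-type perturbation bound, and for Claim~(2) you reduce to showing $PV(R)\neq X_R$ with high probability via the same combinatorial observation that $b_R(x)>t_R(x)$ forces $x\notin PV(R)$, followed by the zero-sum identity $\sum_x t_R(x)=\sum_x b_R(x)$. The only cosmetic difference is that you invoke the local CLT directly for $\mathbb{P}[t_R(x)=b_R(x)]=O(1/\sqrt{n})$, whereas the paper obtains this from Berry--Esseen together with the symmetry $\mathbb{P}[Z\leq 0]=\mathbb{P}[Z\geq 0]$.
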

\begin{proof}[Proof Sketch]
For Claim (1), we show that $f_\mathit{RD}(R)$ returns with high probability a lottery close to the uniform one when $n$ is large and $R$ is drawn from $IC(m,n)$. From this insight, we then derive that $\limsup_{n\rightarrow \infty} \mathbb{E}_{R\sim IC(m,n)}[\mathit{dist}(f(R),R)]\leq 2$.
Combined with the lower bound in \Cref{thm:ML}, this proves Claim (1). For the claim on $f_\mathit{RPV}$, we prove that this rule only randomizes over all alternatives when each alternative is top-ranked and bottom-ranked by the same number of voters. Since this happens with probability $0$ when $n$ goes to infinity, we infer that $\liminf_{n\rightarrow\infty } \mathbb{P}_{R\sim IC(m,n)}[\exists x\in X_R\colon f_\mathit{RPV}(R,x)=0]=1$, and Claim (2) follows from \Cref{thm:ML}.
\end{proof}

\paragraph{Remark 4.} We leave an analogous result to \Cref{prop:expecteddistRD} for C1ML and C2ML rules open because completely reversed preference relations cancel each other out for these RSCFs. Thus, a small part of the profile may determine the outcome, which severely complicates the analysis of these rules. However, computer experiments by \citet{BBS20a} suggest that the probability $\mathbb{P}_{R\sim IC(m,n)}[\exists x\in X_R\colon f(R,x)=0]$ is close to $1$ for C1ML and C2ML rules. Hence, \Cref{thm:ML} implies that the expected metric distortion of these RSCFs under the IC model is close to $2+\frac{1}{m-1}$ when $n$ goes to~$\infty$.

\paragraph{Remark 5.} \Cref{thm:ML} shows that, under the impartial culture distribution, \emph{every} deterministic SCF $f$ with $\mathit{dist}_m(f)<\infty$ has an expected metric distortion of $2+\frac{1}{m-1}$ when $n$ goes to $\infty$. This holds because every SCF is an RSCF that always assigns probability $1$ to a single alternative, so the value $z$ in \Cref{thm:ML} is~$1$.

\section{Conclusion}

In this paper, we study the metric distortion of randomized social choice functions, with a particular focus on well-established RSCFs such as the uniform random dictatorship, C1ML rules, and C2ML rules. Specifically, we first show that every C1ML rule has a metric distortion of at most $4$, and we give a lower bound on the metric distortion of all majoritarian RSCFs that converges to $4$ as $m$ increases. This means that C1ML rules minimize the metric distortion within the class of majoritarian RSCFs when the number of alternatives is unbounded. 
Secondly, we conduct extensive computer experiments on the metric distortion of these three classical RSCFs as well as two RSCFs designed to minimize the metric distortion.
These experiments show that, while RSCFs designed to minimize the metric distortion have also the best average-case metric distortion, C1ML and C2ML rules are often only slightly worse. 
Finally, we also conduct an analytical average-case analysis for the impartial culture model and, surprisingly, derive that the exact choice of voting rule has only a negligible influence on the expected metric distortion if the number of voters is large. 
In summary, we believe that these results demonstrate that established RSCFs, such as C1ML and C2ML rules, are also appealing when studied through the lens of metric distortion as they have a reasonable worst-case metric distortion and their average-case metric distortion is only slightly worse than that of RSCFs that are tailored to minimize the metric distortion. 

\section*{Acknowledgments}

We thank three anonymous reviewers for their valuable feedback. This work was supported by the Deutsche Forschungsgemeinschaft under grants BR 2312/11-2 and BR 2312/12-1, and by the NSF-CSIRO grant on "Fair Sequential Collective Decision-Making" (RG230833).



\clearpage
\appendix

\includepdf[pages=-]{./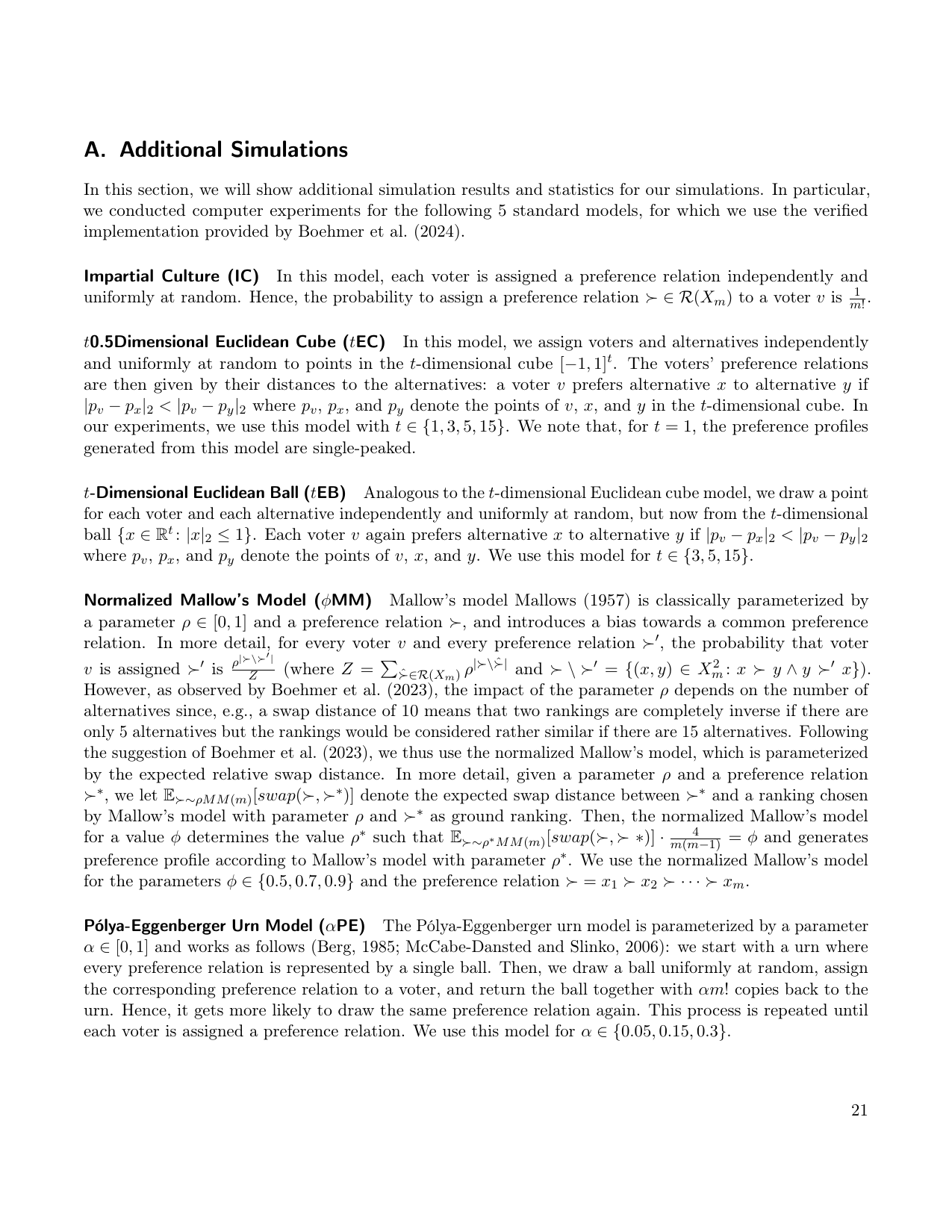}
\setcounter{section}{1}

\section{Proofs from Section 3}

In this section, we present the proofs omitted from \Cref{subsec:C1ML}.
We start by showing \Cref{prop:pathlength}.

\pathlength*
\begin{proof}
	Let $f$ denote a majoritarian RSCF, $R$ an arbitrary profile, and $\succsim_R$ the corresponding majority relation. We will show the two claims of this proposition independently.\medskip
	
	\textbf{Proof of (1):} Our first goal is to show that $\mathit{dist}(f(R), R)\leq 1+2\max_{x\in X_R}\mathit{md}(f(R), x, {\succsim_R})$. To this end, we first note that, if $\max_{x\in X_R}\mathit{md}(f(R),x,{\succsim_R})=\infty$, there is nothing to show as $\mathit{dist}(f(R), R)\leq 1+2\max_{x\in X_R} \mathit{md}(f(R), x, {\succsim_R})=\infty$ holds trivially in this case. We hence assume that $\mathit{md}(f(R),x,{\succsim_R})<\infty$ for all $x\in X_R$, and we will show that $sc(x,d)\leq (1+2\mathit{md}(x,y,{\succsim_R})) sc(y,d)$ for every metric $d\in D(R)$ and all alternatives $x,y\in X_R$ such that $\mathit{md}(x,y,{\succsim_R})\neq\infty$. Since $f(R,x)>0$ implies that $\max_{y\in X_R}\mathit{md}(x,y,{\succsim_R})<\infty$, it then follows that 
	$\frac{\sum_{x\in X_R} f(R,x) sc(x,d)}{sc(y,d)}\leq \frac{\sum_{x\in A} f(R,x) (1+2\mathit{md}(x,y,{\succsim_R}))sc(y,d)}{sc(y,d)}=1+2\mathit{md}(f(R), y, {\succsim_R})$ for all metrics $d\in D(R)$, so $\mathit{dist}(f(R),R)\leq 1+2\max_{x\in X_R} \mathit{md}(f(R), x, {\succsim_R})$.
	
	To prove that $sc(x,d)\leq (1+2\mathit{md}(x,y,{\succsim_R})) sc(y,d)$ for all alternatives $x,y\in X_R$ with $\mathit{md}(x,y,{\succsim_R})\neq\infty$ and all metrics $d\in D(R)$, we use an induction on the majority distance between $x$ and $y$ in $\succsim_R$. First, if $\mathit{md}(x,y,{\succsim_R})=0$, then it clearly holds that $\frac{sc(x,d)}{sc(y,d)}=1$ as $\mathit{md}(x,y,{\succsim_R})=0$ only holds if $x=y$. Next, we assume for the induction hypothesis that there is some $k\in\mathbb{N}$ such that $sc(x',d)\leq (1+2\mathit{md}(x',y',{\succsim_R})) sc(y',d)$ for all metrics $d\in D(R)$ and alternatives $x',y'\in X_R$ with $\mathit{md}(x',y',{\succsim_R})\leq k$. For the induction step, we consider two alternatives $x,y\in X_R$ with $\mathit{md}(x,y,{\succsim_R})=k+1$ and an arbitrary metric $d\in D(R)$. Our goal is to show that $sc(x,d)\leq (1+2(k+1)) sc(y,d)$. To this end, let $z$ denote the successor of $x$ on a shortest path from $x$ to $y$ in $\succsim_R$, which means that $x\succsim_R z$ and ${\mathit{md}(z,y,{\succsim_R})=k}$. By the induction hypothesis, we can thus conclude that $sc(z,d)\leq (1+2k)sc(y,d)$. Next, we partition the voters $v\in N_R$ into the sets $N_{xz}=\{v\in N_R\colon x\succ_v z\}$ and $N_{zx}=\{v\in N_R\colon z\succ_v x\}$. Since $d\in D(R)$, it follows for all voters $v\in N_{xz}$ that $d(v,x)\leq d(v,z)$. Moreover, using the triangle inequality, we can show the following inequality for the voters $v\in N_{zx}$, where $v'$ is a voter in $N_{xz}$. 
	\begin{align*}
		d(v,x)&\leq d(v, y)+d(y,v')+d(v',x)\\
		&\leq d(v, y)+d(y,v')+d(v',z)\\
		&\leq d(v,y) + d(y,v')+d(v',y)+d(y,v)+d(v,z)\\
		&=2d(v,y)+2(v',y)+d(v,z)
	\end{align*}
	
	Finally, we observe that $|N_{xz}|\geq |N_{zx}|$ since $x\succsim_R z$, so there is an injective function $s$ from $N_{zx}$ to $N_{xz}$. Putting everything together, we infer the following inequality.
	\begin{align*}
		\sum_{v\in N_R} d(v,x)&=\sum_{v\in N_{xz}} d(v,x)+\sum_{v\in N_{zx}} d(v,x)\\
		&\leq \sum_{v\in N_{xz}} d(v,z)+\sum_{v\in N_{zx}} 2d(v,y)+2d(s(v),y)+d(v,z)\\
		&\leq \sum_{v\in N_R} d(v,z) + 2d(v,y)\\
		&=sc(z,d)+2sc(y,d)\\
		&\leq (1+2(k+1))sc(y,d)
	\end{align*}
	
	The first inequality follows from our bounds on $d(v,x)$ for $v\in N_{xz}$ and $v\in N_{zx}$, the second one simply reorganizes the terms and uses that $s$ is an injective function, and the last inequality follows by the induction hypothesis. This inequality proves the induction step, so it follows that $sc(x,d)\leq (1+2\mathit{md}(x,y,{\succsim_R})) sc(y,d)$ for all alternatives $x,y\in X_R$ with $\mathit{md}(x,y,{\succsim_R})<\infty$ and metrics $d\in D(R)$. This completes the proof of Claim (1).\medskip
	
	\textbf{Proof of (2):} As the second point, we will show that $\mathit{dist}_m(f)\geq 1+2\max_{x\in X_R} \mathit{md}(f(R),x,{\succsim_R})$. To this end, we use a case distinction with respect to whether $\max_{x\in X_R} \mathit{md}(f(R),x,{\succsim_R})<\infty$ or $\max_{x\in X_R} \mathit{md}(f(R),x,{\succsim_R})=\infty$.\smallskip
	
	\emph{Case 1:} First, we suppose that $\mathit{md}(f(R),x,{\succsim_R})<\infty$ for every alternative $x\in X_R$ and show that $\mathit{dist}_m(f)\geq 1+2\max_{x\in X_R} \mathit{md}(f(R),x,{\succsim_R})$. For this, we fix an arbitrary alternative $x^*\in X_R$; we will construct a family of profiles $R^\epsilon$ (where $\epsilon$ is a parameter in $(0,1)$) such that ${{\succsim_R}={\succsim_{R^\epsilon}}}$ for every $\epsilon\in (0,1)$ and $\lim_{\epsilon\rightarrow0}\mathit{dist}(f(R^\epsilon), R^{\epsilon})= 1+2\mathit{md}(f(R),x^*,{\succsim_R})$. To this end, let $D^k=\{{x\in X_R}\colon \mathit{md}(x,x^*,{\succsim_R})=k\}$ denote the set of alternatives that has a majority distance of $k$ to $x^*$. Moreover, we define $D^0=\{x^*\}$ and $D^m=\{y\in X_R\colon \mathit{md}(y,x^*,{\succsim_R})=\infty\}$ denotes the set of alternatives that have no path to $x^*$ in $\succsim_R$. We note that $x\succ_R y$ for all $x\in D^j$, $y\in D^{j'}$ such that $j+2\leq j'<m$ as otherwise, $y$ would have a path to $x^*$ of length $j+1< j'$ by going to $x$. Furthermore, $x\succ_R y$ for all $x\in X_R\setminus D^m$, $y\in D^m$ as there is a path from $x$ to $x^*$ in $\succsim_R$, but no such path exists for $y$. Based on this observation, we construct the following profile $R^\epsilon$ for $\epsilon \in (0,1)$, where $D^i\succ_v D^j$ denotes that voter $v$ prefers all alternatives in $D^i$ to all alternatives in $D^j$: 
	\begin{enumerate}
		\item There is a set of voters $I_1$ such that $|I_1|=\lceil\frac{1}{\epsilon}\rceil$ and $D^0\succ_v D^2\succ_v D^1\succ_v D^4\succ_v D^3\succ_v D^6\succ_v D^5\succ_v\dots \succ_v D^m$ for each $v\in I_1$. The alternatives within each set $D^i$ are ordered lexicographically.
		\item There is a set of voters $I_2$ such that $|I_2|=\lceil\frac{1}{\epsilon}\rceil$ and $D^1\succ_i D^0\succ_v D^3\succ_v D^2\succ_v D^5\succ_v D^4 \succ_v\dots\succ_v D^m$ for each $v\in I_2$. The alternatives within each set $D^i$ are ordered inverse lexicographically.
		\item For each pair of alternatives $x,y$ such that $x\succ_R y$ and $x\in D^j$, $y\in D^{j'}$ for $|j-j'|\leq 1$, we add two voters $v, v'$ with preferences $x\succ_v y\succ_v z_1\succ_v\dots\succ_v z_{m-2}$ and $z_{m-2}\succ_{v'}\dots\succ_{v'} z_1\succ_{v'} x\succ_{v'} y$. The set of these voters is called $I_3$ and we note that $|I_3|\leq m(m-1)$.
	\end{enumerate}
	
	We first note that the profile $R^{\epsilon}$ has indeed the same majority relation as $R$: the voters in $I_1$ and $I_2$ together enforce that a majority of voters prefers every alternative in $D^j$ to every alternative in $D^{j'}$ for all $j\in\mathbb{N}$, $j'\in\mathbb{N}\cup\{\infty\}$ with $j+2\leq j'$ and cancel each other out with respect to the majority comparison between every other pair of alternatives. Hence, the voters in $I^3$ set these majority comparisons in the same way as in $\succsim_R$, so ${\succsim_R}={\succsim_{R^\epsilon}}$. 
	
	Next, we define the following (partial) metric $d$ that is consistent with $R^{\epsilon}$: 
	\[d(v,x)=\begin{cases}
		2\lceil \frac{k}{2}\rceil &\text{if $v\in I_1$ and $x\in D^k$}\\
		1+2\lfloor \frac{k}{2}\rfloor\qquad &\text{if $v\in I_2$ and $x\in D^k$}\\
		m &\text{if $v\in I_3$}
	\end{cases}\]
	
	It can be checked that $d$ can be extended to a full metric on $V_{R^\epsilon}\cup X_{R^\epsilon}$. For instance, we may assume that the voters and alternatives are placed in a two-dimensional space such that every alternative $x\in D^k$ lies at $(-k,0)$ if $k$ is even and at $(k+1,0)$ if $k$ is odd. Moreover, the voters $i\in I_1$ all lie at $(0,0)$, the voters $i\in I_2$ lie at $(1,0)$, and the voters $i\in I_3$ lie at $(0,m)$. Then, $d$ corresponds to the $|\cdot|_\infty$ norm, which is known to be a metric.
	
	Finally, we can compute the social cost of our alternatives and the distortion of $f$. To this end, we note that $sc(y,d)=2\lceil{\frac{k}{2}}\rceil |I_1| + (1+2\lfloor{\frac{k}{2}}\rfloor) |I_2|+m|I_3|=(2k+1)\lceil\frac{1}{\epsilon}\rceil+m|I_3|$ for every alternative $y\in D^k$ and every $k$. In particular, this means that $sc(x^*,d)=\lceil\frac{1}{\epsilon}\rceil+m|I_3|$. Moreover, it holds that $f(R^\epsilon)=f(R)$ since ${\succsim_R}={\succsim_{R^\epsilon}}$ and $f$ is majoritarian. Next, because $\mathit{md}(f(R),x,{\succsim_R})<\infty$ for all $x\in X_R$, we can compute  for every $\epsilon\in(0,1)$ that 
	\begin{align*}
		\mathit{dist}_m(f)&\geq \mathit{dist}(f(R^\epsilon),R^\epsilon)\\
		&\geq \frac{\sum\limits_{y\in X_R} \!\!\!\!f(R,y) (1+2\mathit{md}(y,x,{\succsim_{R}}))\lceil{\frac{1}{\epsilon}}\rceil + m|I_3|}{\lceil{\frac{1}{\epsilon}}\rceil+m|I_3|}\\
		&=\frac{(1+2\mathit{md}(f(R), x, {\succsim_R}))\lceil{\frac{1}{\epsilon}}\rceil+m|I_3|}{\lceil{\frac{1}{\epsilon}}\rceil+m|I_3|}.
	\end{align*}
	
	It is easy to see that, when $\epsilon$ goes to $0$, the right side converges to $1+2\mathit{md}(f(R), x, {\succsim_R})$ as $m|I_3|$ is a constant. Finally, since $x$ is chosen arbitrarily, we thus infer that $\mathit{dist}_m\geq 1+2\max_{x\in X_R} \mathit{md}(f(R),x,{\succsim_R})$.\smallskip
	
	\emph{Case 2:} As the second case, we assume that $\max_{x\in X_R}\mathit{md}(f(R),x,{\succsim_R})=\infty$ and we will show that $\mathit{dist}_m(f)=\infty$, too. To this end, we let $x$ denote an alternative such that $\mathit{md}(f(R),x,{\succsim_R})=\infty$ and we define the sets $A=\{y\in X_R\colon \mathit{md}(y,x,{\succsim_R})< \infty\}$ and $B=\{y\in X_R\colon \mathit{md}(y,x,{\succsim_R}=\infty\}$. By the definition of the sets $A$ and $B$, it holds that $y\succ_R z$ for all $y\in A$ and $z\in B$. We will next use this observation to construct a profile $R'$ with ${\succsim_R}={\succsim_{R'}}$ such that $f$ has unbounded distortion in $R'$. To this end, we use a variant of McGarvey's construction \citet{McGa53a}: for all pairs of alternatives $y,z\in A$ or $y,z\in B$ with $y\succ_R z$, we add two voters who \emph{i)} both prefer all alternatives in $A$ to all alternatives in $B$, \emph{ii)} both prefer $y$ to $z$, and \emph{iii)} order all remaining pairs of alternatives exactly inverse. It can be checked that each pair of voters only ensures that $y\succ_R z$ for its respective pair of alternatives $y,z$, and that $x'\succ_R y'$ for all $x'\in A$, $y'\in B$. Hence, it is easy to see that ${\succsim_R}={\succsim_R'}$, which implies that $f(R')=f(R)$ as $f$ is majoritarian. Finally, consider the metric $d\in D(R')$ given by $d(v,x)=0$ and $d(v,y)=1$ for all $v\in V_{R'}$, $x\in A$, $y\in B$. It is easy to verify that every alternative $x\in A$ has a social cost $sc(x,d)=0$. By contrast, $sc(f(R'),d)=sc(f(R),d)>0$ as $f(R,y)>0$ for some alternative $y\in B$. Hence, $\mathit{dist}(f(R'),R')=\infty$, which proves this case. 
\end{proof}

Next, we turn to the proof of \Cref{thm:maj}. In particular, we note that Claim (1) of this theorem was proven in the main body, so we focus here only on Claim (2). 

\maj*
\begin{proof}
	To prove Claim (2) of this theorem, we will rely on Claim (2) of \Cref{prop:pathlength} and thus aim to construct a profile $R$ such that every lottery $p$ has a large expected majority distance $\mathit{md}(p,x,{\succsim_R})$ for some alternative $x$. To this end, we note that is suffices to construct a complete binary relation $\succsim$ on $X_m$ as we can find for every such relation a profile $R$ with ${\succsim_R}={\succsim}$ \citep{McGa53a}. 
	
	We first focus on the case that $m\geq 3$ is odd and consider in this case the ``cyclic'' majority relation defined by $x_i\succ x_{i+_m k}$ for all $i\in \{1,\dots, m\}$ and $k\in \{1,\dots, \frac{m-1}{2}\}$, where $i+_m k=i+k$ if $i+k\leq m$ and $i+_m k=i+k-m$ if $i+k>m$. Our goal is to show that $\max_{x\in A} \mathit{md}(p,x,{\succsim})\geq \frac{3}{2}-\frac{3}{2m}$ as Claim (2) in \Cref{prop:pathlength} then implies the theorem. We thus assume for contradiction that there is a lottery $p$ such that $\max_{x\in A}\mathit{md}(p,x,{\succsim})< \frac{3}{2}-\frac{3}{2m}$. Moreover, we define the lotteries $p^k$ by $p^k(x_i)=p(x_{i+_mk})$ for all $i,k\in \{1,\dots, m\}$ and first aim to show that $\max_{x\in A} \mathit{md}(p^k,x,{\succsim})<\frac{3}{2}-\frac{3}{2m}$, too. For this, we note that the symmetry of $\succsim$ implies that $\mathit{md}(x_i,x_j,{\succsim})=\mathit{md}(x_{i+_m k}, x_{j+_m k}, {\succsim})$ for all $i,j,k\in \{1,\dots, m\}$. Consequently, it holds that $\mathit{md}(p^k,x_i,{\succsim})=\mathit{md}(p,x_{i+_m k}, {\succsim})$ as $p^k(x_j)=p(x_{j +_m k})$ and $\mathit{md}(x_j, x_i, {\succsim})=\mathit{md}(x_{j+_m k}, x_{i+_m k}, {\succsim})$ for all $x_i, x_j\in X_m$. This implies that $\max_{x\in A} \mathit{md}(p^k,x,{\succsim})=\max_{x\in A} \mathit{md}(p,x,{\succsim})$. Finally, we consider the lottery $p^*$ defined by $p^*(x)=\frac{1}{m}\sum_{k\in \{1,\dots,m\}} p^k(x)$ for all $x\in X_m$ and observe that $\mathit{md}(p^*,x_i, {\succsim})=\frac{1}{m}\sum_{k\in \{1,\dots, m\}} \mathit{md}(p^k,x_i,{\succsim})<\frac{3}{2}-\frac{3}{2m}$ for all~$x_i$. However, $p^*(x_i)=\frac{1}{m}\sum_{k\in \{1,\dots, m\}} p^k(x_i)=\frac{1}{m}\sum_{k\in \{1,\dots, m\}} p(x_{i+_m k})=\frac{1}{m}$ for all $x_i$. Since $\mathit{md}(x_1,x_j,{\succsim})=2$ for all $j\in \{2,\dots, \frac{m+1}{2}\}$ and $\mathit{md}(x_1,x_j,{\succsim})=1$ for all $j\in \{\frac{m+3}{2},\dots, m\}$, we can thus compute that $\mathit{md}(p^*,x_1,{\succsim})= \frac{1}{m} \sum_{x_i\in X_R} \mathit{md}(x_i, x_1, {\succsim})=\frac{m-1}{2m}+\frac{2(m-1)}{2m}=\frac{3}{2}-\frac{3}{2m}$. This contradicts that $\mathit{md}(p^*,x_i, {\succsim})<\frac{3}{2}-\frac{3}{2m}$ for all $x_i$, so the initial assumption that there is a lottery $p$ with $\max_{x\in A}\mathit{md}(p,x,{\succsim})<\frac{3}{2}-\frac{3}{2m}$ is wrong. Hence, $\max_{x\in A} \mathit{md}(p,x,{\succsim})\geq \frac{3}{2}-\frac{3}{2m}$ for every lottery $p$ and \Cref{prop:pathlength} shows Claim (2) of this theorem for odd~$m\geq 3$.
	
	Finally, to extend the result also to even $m$, we can add an alternative $x^*$ that loses all majority comparisons. Based on Claim (2) in \Cref{prop:pathlength}, the metric distortion of a majoritarian RSCF is unbounded if it assigns positive probability to $x^*$. On the other side, we can apply the same analysis as for the case that $m$ is odd if $p(x^*)=0$ and hence infer our lower bound. 
\end{proof}

\section{Proof of Proposition 2}

Next, we will present the proof of \Cref{prop:LPcorrect}.

\LP*
\begin{proof}
	Let $R$ denote an arbitrary profile, $p$ a lottery, and $x^*$ denote an arbitrary alternative. We will prove the proposition in two steps: we first show that $\mathit{dist}(p,R,x^*)\geq o_{LP}$ for the objective value $o_{LP}$ of every feasible solution of \ref{LP} and then that $\mathit{dist}(p,R,x^*)\leq o_{LP}^*$ where $o_{LP^*}$ denotes the optimal objective value of \ref{LP} if this value is bounded and $o_{LP}^*=\infty$ otherwise. From the first insight, it follows immediately that $\mathit{dist}(p,R,x^*)=\infty$ if \ref{LP} is unbounded as we can find for every $x\in\mathbb{R}$ a feasible solution with higher objective value. On the other hand, combining the first and the second insight imply that $\mathit{dist}(p,R,x^*)= o_{LP}^*$ if the optimal objective value of \ref{LP} is bounded.\medskip
	
	\textbf{Claim (1): $\mathit{dist}(p,R,x^*)\geq o_{LP}$ for the objective value $o_{LP}$ of every feasible solution of \ref{LP}.}
	
	Let $d_{LP}$, $t_{LP}$ denote a feasible solution of \ref{LP} and let $o_{LP}$ denote its objective value. 
	To prove that $\mathit{dist}(p,R,x^*)\geq o_{LP}$, we will infer a metric $d\in D(R)$ from $d_{LP}$ such that $d(x^*,v)=d_{LP}(x^*,v)$ for all $v\in V_R$ and $d(x,v)\geq d_{LP}(x,v)$ for all $v\in v_R$ and $x\in X_R\setminus \{x^*\}$. 
	Since $\sum_{v\in V_R} d_{LP}(x^*,v)=1$, we can then infer that 
	\begin{align*}
		o_{LP}&=\sum_{x\in X_R} p(x)\sum_{v\in V_R} d_{LP}(x,v)
		\leq \frac{sc(p,d)}{sc(x^*,d)}\leq \max_{d\in D(R)} \frac{sc(p,d)}{sc(x^*,d)}
		=\mathit{dist}(p,R,x^*).
	\end{align*}
	
	Towards proving this claim, we will first construct another feasible solution $d_{LP}'$, $t_{LP}'$ with corresponding objective value $o'_{LP}$ that satisfies that $d_{LP}'(x,v)\geq d'_{LP}(x^*,v)$ for all $x\in X_R$, $v\in V_R$ and $o'_{LP}\geq o_{LP}$. Now, if $d_{LP}$ satisfies these conditions, we can simply set $d'_{LP}=d_{LP}$ and $t'_{LP}=t_{LP}$. 
	We thus assume that there is an alternative $x$ and a voter $v$ such that $d_{LP}(x,v)<d_{LP}(x^*,v)$. In this case, we consider the solution $\bar d_{LP}$ derived from $d_{LP}$ by setting $\bar d_{LP}(x,v)=d_{LP}(x^*,v)$. First, it is easy to verify that $\bar d_{LP}$ combined with the function $\bar t_{LP}=t_{LP}$ is still a feasible solution. Indeed, the only upper bounds on $\bar d_{LP}(x,v)$ are of the form $\bar d_{LP}(x,v)\leq \bar d_{LP}(x^*,v)+t(y)$, which are true since $\bar d_{LP}(x,v)= \bar d_{LP}(x^*,v)$ and $t(y)\geq 0$ for all $y\in X_R$. Moreover, it is straightforward that increasing the value of $d_{LP}(x,v)$ does not decrease the objective value. Hence, $\bar o_{LP}\geq o_{LP}$, and by repeating this step, we will arrive at a feasible solution $d_{LP}'$, $t_{LP}'$ such that $d_{LP}'(x,v) \geq d_{LP}'(x^*,v)$ for all alternatives $x\in X_R$ and voters $v\in V_R$. 
	
	As second step, we will again construct a feasible solution $d''_{LP}$, $t''_{LP}$ of \ref{LP} such that $o_{LP}''\geq o_{LP}$ and $d_{LP}''(x,v)\leq d_{LP}''(y,v)$ for all voters $v\in V_R$ and alternatives $x,y\in X_R$ with $x\succ_v y$. If $d'_{LP}$ satisfies this condition, we are immediately done and we hence suppose that there is a voter $v$ and two distinct alternatives $x$, $y$ such that $x\succ_j y$ and $d_{LP}'(x,v)> d_{LP}'(y,v)$. Note first that this is not possible if $y=x^*$ because the fourth condition of \ref{LP} ensures in this case that $d_{LP}'(x,v)\leq d'_{LP}(x^*,v)+t'_{LP}(x^*)=d'_{LP}(x^*,v)$. We hence assume from now on that $y\neq x^*$. In this case, we consider the solution $\bar d_{LP}$, $\bar t_{LP}$ derived from $d_{LP}'$, $t_{LP}'$ by setting $\bar d_{LP}(y,v)=d'_{LP}(x,v)$. First, we note that this solution is feasible as the only upper bounds on $\bar d_{LP}(y,v)$ are given by $\bar d_{LP}(y,v)\leq \bar d_{LP}(x^*,v)+\bar t(z)=d'_{LP}(x^*,v)+t'_{LP}(z)$ for $z\in X_R$ with $y\succeq_v z$. Moreover, it holds that $\bar d_{LP}(x,v)=d'_{LP}(x,v)\leq d_{LP}'(x_{i^*},v)+t'_{LP}(z)$ for all $z\in X_R$ with $x\succeq_v z$ since $d_{LP}'$, $t'_{LP}$ is a feasible solution of \ref{LP}. Finally, since $x\succeq_v y$, it therefore follows that $\bar d_{LP}$ is a feasible solution, too. Moreover, it is again straightforward that we did not decrease the objective value because we only increased the value of variables. Now, by repeating this step, it is easy to see that we will eventually arrive at a feasible solution $d''_{LP}$ and $t''_{LP}=t_{LP}$ such that $o''_{LP}\geq o'_{LP}$ and $d''_{LP}(x,v)\leq d''(y,v)$ for all $v\in V_R$ and $x, y\in X_R$ with $x\succ_v y$. Moreover, $d''_{LP}$ still satisfies that $d''_{LP}(x,v)\geq d''_{LP}(x^*,v)$ for all $v\in V_R$ and $x_i\in X_R$ as we only increase the distances for alternatives $x\neq x^*$. 
	
	Finally, based on the solution $d''_{LP}$, $t''_{LP}$, we will construct a metric $d$ that satisfies all our criteria. In particular, we define:
	\begin{enumerate}
		\item $d(x,v)=d(v,x)=d''_{LP}(x,v)$ for all $x\in X_R$ and $v\in V_R$.
		\item $d(x, x)=0$ for all $x\in X_R$ and $d(v,v)=0$ for all $v\in V_R$. 
		\item $d(x,y)=\min_{v\in V_R} d''_{LP}(x,v)+d''_{LP}(y,v)$ for all distinct $x,y\in X_R$.
		\item $d(v,w)=\min_{x\in X_R} d''_{LP}(x,v)+d''_{LP}(x,w)$ for all distinct $v,w\in V_R$.
	\end{enumerate}
	
	By its definition, it is straightforward that $d$ is symmetric and that $d(z,z)=0$ for all $z\in X_R\cup V_R$. Moreover, because $d''_{LP}$ is consistent with $R$, the same holds for $d$. Hence, we only need to verify the triangle inequality, for which we start by an auxiliary observation: we will show that $d(x,v)\leq d(x,w)+d(y,w)+d(y,v)$ for all $x,y\in X_R$, $v,w\in V_R$. By the definition of $d$, this is equivalent to proving the same for $d''_{LP}$. We thus observe that 
	\begin{align*}
		d''_{LP}(x,v)&\leq d''_{LP}(x^*,v)+t''_{LP}(x)\\
		&\leq d''_{LP}(x^*,v)+d''_{LP}(x^*,w)+d''_{LP}(x,w)\\
		&\leq d''_{LP}(y,v)+d''_{LP}(y,w)+d''_{LP}(x,w).
	\end{align*}
	The first and second inequality directly use the third and fifth constraint of our LP. The last inequality uses that, by construction of $d''_{LP}$, it holds that $d''_{LP}(x^*,v)\leq d''_{LP}(y,v)$ and $d''_{LP}(x^*,w)\leq d''_{LP}(y,w)$. 
	
	Finally, we are ready to show that $d$ satisfies the triangle inequality. To this end, consider three distinct elements $x,y,z\in X_R\cup V_R$. We will show that $d(x,z)\leq d(x,y)+d(y,z)$ by considering three cases:
	\begin{itemize}
		\item $x,y,z\in X_R$: Let $v,w\in V_R$ denote the voters that minimize $d(x,v)+d(v,y)$ and $d(y,w)+d(w,z)$, respectively. By our auxiliary claim, it holds that $d(x,z)=\min_{v'\in V_R} d(x,v')+d(v',z)\leq d(x,v)+d(z,v)\leq d(x,v)+d(z,w)+d(w,y)+d(y,v)=\min_{v'\in V_R} d(x,v')+d(v',y)+ \min_{v'\in V_R} d(y,v')+d(v',z)=d(x,y)+d(y,z)$. An analogous argument works if $x,y,z\in V_R$.
		\item $x,y\in X_R$, $z\in V_R$: Let $v$ denote the voter that minimizes $d(x,v)+d(v,y)$. By our auxiliary claim, it holds that $d(x,z)\leq d(x,v)+d(v,y)+d(y,z)=d(x,y)+d(y,z)$. The cases that $y,z\in X_R$, $x\in V_R$; $x,y\in V_R$, $z\in X_R$; and $y,z\in V_R$, $x\in X_R$ are symmetric. 
		\item $x,z\in X_R$, $y\in V_R$: It holds that $d(x,z)=\min_{v\in N} d(x,v)+d(v,y)\leq d(x,y)+d(y,z)$. The case that $x,z\in V_R$, $y\in X_R$ is symmetric. 
	\end{itemize}
	
	This proves that $d$ is indeed a metric that is consistent with $R$. We can therefore conclude that $\mathit{dist}(p,R,x^*)\geq \frac{sc(p,d)}{sc(x^*,d)}=o''_{LP}\geq o_{LP}$ holds for all feasible solutions $d_{LP}$, $t_{LP}$ with objective value $o_{LP}$.\medskip
	
	\textbf{Claim (2): $\mathit{dist}(p,R,x_{i^*})\leq o_{LP}^*$ where $o_{LP}^*$ is the optimal objective value off \ref{LP}.}
	
	We will next show that $\mathit{dist}(p,R,x_{i^*})\leq o_{LP}^*$. To this end, we note that this is trivial if $o_{LP}^*=\infty$, so we focus on the case that the optimal objective value of \ref{LP} is bounded. To this end, let $d\in D(R)$ denote a metric that maximizes $\frac{sc(p,d)}{sc(x^*, d)}$. We will next construct a biased metric $d^*\in D(R)$ that satisfies $\frac{sc(p,d^*)}{sc(x^*, d^*)}\geq \frac{sc(p,d)}{sc(x^*, d)}$. As second step, we will then derive a feasible solution $d_{LP}$, $t_{LP}$ of \ref{LP} with objective value $o_{LP}=\frac{sc(p,d^*)}{sc(x^*, d^*)}$. This clearly proves the claim. 
	
	Following the proof of \citet{ChRa22a}, we define the function $t(x)$ for all $X_R$ by $t(x)=d(x,x^*)$. The biased metric $d^*$ is then defined by 
	\begin{align*}
		d^*(x^*,v)&=\frac{1}{2}\max_{x,y\in X_R\colon x\succeq_v y} t(x)-t(y)\\
		d^*(x,v)&=d^*(x^*,v)+\min_{y\in X_R\colon x\succeq_v y} t(y).
	\end{align*}
	
	We first note that $d^*$ can be extended to a metric that is consistent with $R$ due to Proposition 5.1 of \citet{ChRa22a}. Hence, it only remains to show that $\frac{sc(p,d^*)}{sc(x^*, d^*)}\geq \frac{sc(p,d)}{sc(x^*, d)}$. To this end, we will show that $sc(x^*,d^*)\leq sc(x^*,d)$ and $sc(x,d^*)-sc(x^*,d^*)\geq sc(x,d)-sc(x^*,d)$. This shows $\frac{sc(p,d^*)}{sc(x_{i^*}, d^*)}\geq \frac{sc(p,d)}{sc(x^*, d)}$ as demonstrated by the following inequality. 
	
	\begin{align*}
		\frac{sc(p,d^*)}{sc(x^*, d^*)}-1&=\frac{\sum_{x\in X_R} p(x)(sc(x,d^*)-sc(x^*, d^*))}{sc(x^*, d^*)}
		 \geq \frac{\sum_{x\in X_R} p(x)(sc(x,d)-sc(x^*, d))}{sc(x^*, d)}
	=\frac{sc(p,d)}{sc(x^*, d)}-1
	\end{align*}
	
	We first show that $sc(x^*,d^*)\leq sc(x^*,d)$. To this end, we observe (analogous to \citet{ChRa22a} in Proposition 5.2) that $d(x,x^*)\leq d(x,v)+d(v,x^*)\leq d(y, v)+d(v, x^*)\leq d(y, x^*)+2d(v, x^*)$ for all voters $v$ and alternatives $x$, $y$ with $x\succeq_v y$. Hence, $d(v,x^*)\geq \frac{1}{2}\max_{x,y\in X_R\colon x\succeq_v y} t(x)-t(y)=d^*(v,x^*)$. Clearly, this implies that $sc(x^*, d^*)\leq sc(x^*,d)$, thus proving our claim. Secondly, we need to prove that $sc(x,d^*)-sc(x^*,d^*)\geq sc(x,d)-sc(x^*,d)$ for all $x\in X_R$. Since the inequality clearly holds for $x^*$, we assume that $x\neq x^*$. Following again the ideas of \citet{ChRa22a}, we observe that $d(x,v)\leq d(y,v)\leq d(y,x^*)+d(x^*,v)$ for all voters $v$ and alternatives $x, y$ with $x\succeq_v y$. Hence, $d(x,v)-d(x^*,v)\leq \min_{y\in X_R\colon x\succeq_v y} d(y, x^*)=\min_{y\in X_R\colon x\succeq_v y} t(y)=d^*(x,v)-d^*(x^*,v)$. We thus conclude that $sc(x,d^*)-sc(x^*,d^*)\geq sc(x,d)-sc(x^*,d)$. Therefore, it follows indeed that $\frac{sc(p,d^*)}{sc(x^*, d^*)}\geq \frac{sc(p,d)}{sc(x^*, d)}$.
	
	We next proceed with a case distinction with respect to whether $sc(x^*,d^*)=0$ or $sc(x^*,d^*)>0$. First, we consider the case that $sc(x^*,d^*)>0$. In this case, we aim to construct a feasible solution $d_{LP}$, $t_{LP}$ of \ref{LP} with objective value $o_{LP}=\frac{sc(p,d^*)}{sc(x^*,d^*)}$. Now, to derive this solution, we first note that every biased metric $d\in D(R)$ (together with its inducing function $t$) satisfies the first four constraints of \ref{LP} by definition. Moreover, $d$ also satisfies the fifth constraint since $d(x,v)+d(x^*,v)=2d(x^*,v)+\min_{y\in X_r\colon x\succeq y} t(y)\geq t(x)$ for all $x\in X_R$, $v\in V_R$. The last inequality follows as $2(d^*,v)=\max_{x,y\in X_R\colon x\succeq_v y} t(x)-t(y)\geq t(x)-\min_{y\in X_r\colon x\succeq y} t(y)$. Furthermore, we note that, for every biased metric $d\in D(R)$, and $\ell\in \mathbb{R}_{>0}$, the function $t^\ell$ defined by $t^\ell(x)=\ell t(x)$ induces a biased metric $d^\ell\in D(R)$ with $sc(x^*,d^\ell)=\ell sc(x^*,d)$ and $sc(p,d^\ell)=\ell sc(p,d)$. Because $sc(x^*,d)> 0$, it is thus easy to check that the biased metric $d^\ell$ together with its defining function $t^\ell$ for $\ell=\frac{1}{sc(x^*,d)}$ defines a feasible solution to \ref{LP} with $o_{LP}=\frac{sc(p,d^\ell)}{sc(x^*,d^\ell)}=\frac{sc(p,d^*)}{sc(x^*,d^*)}$. Hence, $\mathit{dist}(p,R,x^*)=\frac{sc(p,d^*)}{sc(x^*,d^*)}=o_{LP}\leq o^*_{LP}$, where $o^*_{LP}$ denotes the optimal objective value of \ref{LP}. 
	
	For the second case, we suppose that $sc(x^*,d^*)=0$. For this case, we make a further case distinction with respect to whether $sc(p,d^*)=0$ or $sc(p,d^*)>0$. First, suppose that $sc(p,d^*)=0$, which means that $\mathit{dist}(p,R,x^*)=\frac{sc(p,d^*)}{sc(x^*,d^*}=1$. To show that $\mathit{dist}(p,R,x^*)\leq o_{LP}^*$, it thus suffices to construct a feasible solution of \ref{LP} with objective value $1$. To this end, consider the following solution: $d_{LP}(x,v)=\frac{1}{n_R}$ for all $x\in X_R$, $v\in V_R$ and $t_{LP}(x)=0$ for all $x\in X_R$. It is easy to check that this is indeed a feasible solution and that $\sum_{x\in X_R} p(x) \sum_{v\in V_R} d(x,v)=\sum_{x\in X_R} p(x)=1$, thus verifying our claim. 
	
	As the last case, we assume that $sc(x^*,d^*)=0$ and $sc(p,d^*)>0$, which means that $\mathit{dist}(p,R,x^*)=\infty$. We need to show that the optimal objective value of \ref{LP} is unbounded. Towards this end, we note that $d^*(x^*,v)=0$ for all voters $v\in V_R$ since $sc(x^*,d^*)=0$. Next, we consider again the function $t^\ell(x)=\ell\cdot t(x)$ for all $x\in X_R$, $\ell\in \mathbb{R}_{>0}$ and let $d^\ell$ denote the corresponding biased metric. Finally, we define the solutions $d_{LP}^\ell$, $t_{LP}^\ell$ to \ref{LP} by \emph{i)} $d_{LP}^\ell(x^*,v)=\frac{1}{n_R}$ for all $v\in V_R$, \emph{ii)} $d_{LP}^\ell(x,v)= d^\ell(x,v)$ for all $x\in X_R\setminus \{x^*\}$, $v\in V_R$, and \emph{iii)} $t_{LP}^\ell=t^\ell$. It can be checked that $d_{LP}^\ell$, $t_{LP}^\ell$ is a feasible solution to \ref{LP}: to this end, we recall that every biased metric satisfies the first five constraints of our LP. Now, to infer $d_{LP}^\ell$ from $d^\ell$, we only increase the distance $d_{LP}^\ell(x^*,v)$ to $\frac{1}{n_R}$ for all $v\in N_R$. Since there is no upper bound on $d_{LP}^\ell(x^*,v)$, this does not violate any of the first five constraints and ensures that the last one is true. Finally, we note that there is an alternative $y$ such that $p(y)>0$ and $sc(y,d^*)> 0$ as $sc(p,d^*)>0$. Consequently, the objective value of the solutions $d^\ell_{LP}$, $t^\ell_{LP}$ is lower bounded by $\ell p(y) sc(y,d^*)$. Letting $\ell$ go to infinity thus shows that the objective value of \ref{LP} is not bounded in this case. Hence, it holds in all cases that $\mathit{dist}(p,R,x^*)\leq o_{LP}^*$, where $o_{LP}^*$ denotes the optimal objective value of \ref{LP} if it is bounded and $\infty$ otherwise.  
\end{proof}

\section{Proofs from Section 5}

As last part of this paper, we will formally prove the statements about the expected metric distortion of RSCFs given in \Cref{sec:IC}. To this end, we let $n_\succ(R)=|\{v\in V_R\colon {\succ_v}={\succ}\}|$ denote the number of voters that report the preference relation $\succ$ in the profile $R$. Moreover, we will subsequently show three auxiliary lemmas: first, we investigate the metric distortion of every lottery on profiles where all preference relations are reported by the same number of voters (cf. \Cref{lem:optmetric,lem:distuni}). Under the IC model, we can expect that the output profile is similar to such a profile if the number of voters is large enough. We hence prove in \Cref{lem:merge} that we can bound the metric distortion of such a profile $R$ based on the metric distortion of the chosen lottery for a large subprofile. Finally, we use these claims to prove \Cref{thm:ML} and \Cref{prop:expecteddistRD}.

In more detail, in our first lemma, we will identify a class of metrics $d\in D(R)$ that satisfy $\mathit{dist}(p,R,x^*)=\frac{sc(p,d)}{sc(x^*,d)}$ for all profiles $R$ in which all preference relations appear equally often, all lotteries $p$, and all alternatives $x^*\in X_R$. Surprisingly, we show that we can focus on a single type of metrics for this maximization problem: it always suffices to consider the biased metric $d\in D(R)$ given by the function $t$ with $t(x^*)=0$ and $t(x)=2$ for all $x\in X_R\setminus \{x^*\}$. We note that this gives further evidence for the conjecture by \citet{ChRa22a} that this type of metric is the worst-case for \emph{all} profiles. 

\begin{lemma}\label{lem:optmetric}
	Assume $m\geq 3$ and let $R\in\mathcal{R}^*_m$ denote a profile such that $n_{\succ}(R)=n_{\succ'}(R)>0$ for all preference relations ${\succ},{\succ'}\in \mathcal{R}(X_R)$. It holds for all lotteries $p\in \Delta(X_R)$ and alternatives $x^*\in X_R$ that $\mathit{dist}(p,R,x^*)=\frac{sc(p,d^*)}{sc(x^*,d^*)}$, where $d^*$ denotes the biased metric induced by the function $t$ with $t(x^*)=0$ and $t(x)=2$ for all $x\in X_R\setminus \{x^*\}$
\end{lemma}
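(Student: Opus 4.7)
My plan is to invoke \Cref{prop:LPcorrect}, which identifies $\mathit{dist}(p,R,x^*)$ with the optimum of \ref{LP}, and then to show that the biased metric $d^*$ realizes this optimum.

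First, I would compute $\frac{sc(p,d^*)}{sc(x^*,d^*)}$ directly. Because $R$ is uniform, each of the $m!$ preferences is reported by exactly $n_R/m!$ voters, so a case analysis on the position of $x^*$ and $x$ in a voter's ranking yields $sc(x^*,d^*) = n_R(m-1)/m$ and $sc(x,d^*) = n_R(2m-1)/m$ for every $x\neq x^*$. Hence $\frac{sc(p,d^*)}{sc(x^*,d^*)} = \frac{p(x^*)(m-1) + (1-p(x^*))(2m-1)}{m-1}$, a value that depends on $p$ only through $p(x^*)$. Since $d^*\in D(R)$, this yields the easy direction $\mathit{dist}(p,R,x^*) \geq \frac{sc(p,d^*)}{sc(x^*,d^*)}$.

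Second, for the upper bound, \Cref{prop:LPcorrect} lets me restrict attention to biased metrics $d_t$ induced by a function $t$ with $t(x^*)=0$ and $t(x) \geq 0$. Writing $T_t(v) = \max_{x\succeq_v y}(t(x)-t(y))$ and $M_t(x,v) = \min_{y:\,x\succeq_v y} t(y)$, one has $sc(x^*,d_t) = \tfrac{1}{2}\sum_v T_t(v)$ and $sc(x,d_t) - sc(x^*,d_t) = \sum_v M_t(x,v)$. Since $M_t(x^*,v) = 0$ and the objective $sc(p,d_t)/sc(x^*,d_t)$ is scale-invariant in $t$, the claim reduces to the per-alternative bound $\phi_x(t)/\psi(t) \leq m/[2(m-1)]$ for every $x\neq x^*$ and every feasible $t$, where $\phi_x(t) = \sum_v M_t(x,v)$ and $\psi(t) = \sum_v T_t(v)$; multiplying by $p(x)$ and summing then yields the desired inequality.

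Third, to establish this per-alternative bound I would use a symmetrization argument. The function $\phi_x$ is concave in $t$ (a sum of minima of linear functionals) while $\psi$ is convex (a sum of maxima of linear functionals); moreover, the uniformity of $R$ makes both functions invariant under the $S_{m-2}$ action permuting $X_R\setminus\{x^*,x\}$. Averaging any feasible $t$ over $S_{m-2}$ thus weakly increases $\phi_x$ and weakly decreases $\psi$ by Jensen's inequality, so it weakly improves the ratio. The averaged $\bar t$ has the form $\bar t(x) = c'$ and $\bar t(y) = c$ for $y \neq x^*,x$, reducing the problem to optimizing a two-parameter function of $(c,c')$. A direct enumeration of preference types then shows that this two-parameter ratio is maximized at $c=c'$, recovering the tight bound $m/[2(m-1)]$ realized by $t^*$.

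The main obstacle I expect is the final two-parameter optimization: although the structure is clean, the explicit formulas for $\psi(t)$ and $\phi_x(t)$ in terms of $(c,c')$ require case splits on whether $c \lessgtr c'$ and on the rank of $x^*$ relative to $x$ and to the constant-valued alternatives. The Jensen-based symmetrization is crucial because it eliminates all but two degrees of freedom in $t$; without it the direct analysis of arbitrary $t$ would be considerably more intricate.
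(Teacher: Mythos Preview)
Your approach is correct and follows the same overall architecture as the paper's proof: reduce to the per-alternative bound $\mathit{dist}(p_{\hat x},R,x^*)\le 2+\tfrac{1}{m-1}$, symmetrize over permutations of $X_R\setminus\{x^*,\hat x\}$ to force $t$ into a two-parameter family, and then finish with a case analysis on those two parameters.

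The genuine difference is in how the symmetrization is executed. The paper works inside \ref{LP}: it averages an optimal LP solution over the $(m-2)!$ permuted copies of the profile, using linearity of the constraints to stay feasible with the same objective, and then spends a separate step (their Step~3) massaging the averaged LP solution back into an actual biased metric. You instead work directly with biased metrics, observing that $\phi_x(t)=\sum_v\min_{y:\,x\succeq_v y}t(y)$ is concave and $\psi(t)=\sum_v\max_{a\succeq_v b}(t(a)-t(b))$ is convex, so Jensen applied to the $S_{m-2}$-average of $t$ can only increase $\phi_x/\psi$. This is a cleaner route to the same two-parameter endpoint: it bypasses the LP detour and the somewhat ad hoc reconstruction of a biased metric from an LP solution. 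The price is that your inequality is only weak rather than an equality, but for an upper bound that is all you need. The final two-parameter computation you flag as the obstacle is essentially identical to the paper's Step~4, with the same $\ell_1\lessgtr\ell_2$ case split. One small remark: the reduction to biased metrics is really the Charikar--Ramakrishnan result rather than \Cref{prop:LPcorrect} per se, so you should cite that directly.
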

\begin{proof}
	Let $R$ denote a profile as specified by the lemma and consider a lottery $p$ and an alternative $x^*$. If $p(x^*)=1$, then $\frac{sc(p,d)}{sc(x^*,d)}=1$ for every metric $d\in D(R)$, so we assume that $p(x^*)<1$. In this case, let $\hat d\in D(R)$ denote the biased metric given by the function $\hat t$ with $\hat t(x^*)=0$ and $\hat t(x)=\ell$ for all $x\in X_R\setminus \{x^*\}$, where $\ell$ is chosen such that $sc(x^*, \hat d)=1$. First, $\hat d$ is indeed a valid metric in $D(R)$ due to Proposition 5.1 of \citet{ChRa22a}. Next, we note that $\frac{sc(p,\hat d)}{sc(x^*,\hat d)}=\frac{sc(p,d^*)}{sc(x^*,d^*)}$ for the metric $d^*$ stated in the lemma as $\hat d=\alpha d^*$ for some $\alpha \in \mathbb{R}_{>0}$. Hence, we aim to show that $\mathit{dist}(p,R,x^*)=\frac{sc(p,\hat d)}{sc(x^*, \hat d)}$. For this, we will prove that $\mathit{dist}(p_x,R, x^*)=\frac{sc(p_x,\hat d)}{sc(x^*,\hat d)}$ for every alternative $x\in X_R$ and lottery $p_x$ with $p_x(x)=1$. This implies the lemma because 
	\begin{align*}
		\frac{sc(p,\hat d)}{sc(x^*,\hat d)}
		&\leq \mathit{dist}(p,R,x^*)\
		\leq \sum_{x\in X_R} p(x) \mathit{dist}(p_x,R, x^*)
		= \sum_{x\in X_R} p(x) \frac{sc(x,\hat d)}{sc(x^*, \hat d)}
		=\frac{sc(p,\hat d)}{sc(x^*, \hat d)}.
	\end{align*}
	
	Now, we first note that the claim trivially follows for the lottery $p_{x^*}$ as $\frac{sc(p_{x^*},d)}{sc(x^*,d)}=1$ for every metric $d\in D(R)$. We thus focus on an alternative $\hat x\in X_R\setminus \{x^*\}$. In more detail, we will show that $\hat d$, $\hat t$ correspond to an optimal solution of \ref{LP} for $\mathit{dist}(p_{\hat x},R, x^*)$ as \Cref{prop:LPcorrect} then implies that $\mathit{dist}(p_{\hat x}, R, x^*)=\frac{sc(p_{\hat x}, \hat d)}{sc(x^*,\hat d)}$. We therefore observe that it is easy to show that $\hat d$, $\hat t$ are a feasible solution for this linear program, so we will subsequently only prove that our solution is also optimal.\medskip
	
	\textbf{Step 1:} Since we want to reason about the optimal solutions of \ref{LP} (for $\mathit{dist}(p_{\hat x},R, x^*)$), we first prove that the optimal objective value of this linear program is bounded. To this end, let $d_{LP}$, $t_{LP}$ denote a feasible solution to \ref{LP}. We first note that $\sum_{v\in V_R} d_{LP}(x^*,v)=1$ and hence $d_{LP}(x^*,v)\leq 1$ for all $v\in V_R$. Moreover, since every preference relation appears at least once in $R$, there is a voter $v$ such that $\hat x\succ_v x^*$ and we can conclude by the first and third constraints that $1\geq d_{LP}(x^*,v)\geq \frac{1}{2} (t_{LP}(\hat x)-t_{LP}(x^*))=\frac{1}{2}t_{LP}(\hat x)$. Hence, it holds that $t_{LP}(\hat x)\leq 2$. By the fourth constraint, we can next conclude that $d_{LP}(\hat x,v)\leq d_{LP}(x^*,v)+t(\hat x)\leq 1+2=3$ for all $v\in V_R$. Finally, we can now compute that the objective value of any solution is at most $\sum_{x\in X_R}p_{\hat x}(x)\sum_{v\in V_R} d(x,v)=\sum_{v\in V_R} d(\hat x,v)\leq 3n_R$. Since this holds for every feasible solution of \ref{LP}, its optimal objective value is indeed bounded.\medskip
	
	\textbf{Step 2:} Let $d_{LP}^0$, $t_{LP}^0$ denote an optimal solution of \ref{LP} and let $o_{LP}^0$ denote its objective value. Our next goal is to construct an optimal solution $d_{LP}^1$, $t_{LP}^1$ of \ref{LP} such that $t_{LP}^1(x)=t_{LP}^1(y)$ for all $x,y\in X_R\setminus\{x^*, \hat x\}$. For this, we denote by $\Pi$ the set of permutations $\pi:X_R\rightarrow X_R$ such that $\pi(x^*)=x^*$ and $\pi(\hat x)=\hat x$. Moreover, given a permutation $\pi\in \Pi$, we let $R^\pi$ denote the profile defined by $x\succ^\pi_v y$ iff $\pi(x)\succ_v\pi(y)$ for all $x,y\in X_R$ and $v\in V_R$. Finally, we define $d^\pi(x,v)=d^0_{LP}(\pi(x),v)$ and $t^\pi(x)=t^0_{LP}(\pi(x))$ for all $x\in X_R$ and $v\in V_R$. Since $R^\pi$, $d^\pi$, and $t^\pi$ are all derived from $R$, $d^0_{LP}$, and $t^0_{LP}$ by renaming the alternatives according to $\pi$, it can be checked that $d^\pi$ and $t^\pi$ constitute a feasible solution of \ref{LP} for $\mathit{dist}(p_x, R^\pi, x^*)$ with objective value $o_{LP}^\pi=o_{LP}^0$. In particular, it is important here that $\pi(\hat x)=\hat x$ and $\pi(x^*)=x^*$ as these ensure that $t^{\pi}(x^*)=0$ and $d^{\pi}(\hat x,v)=d_{LP}^0(\hat x,v)$ for all $v\in V_R$. Next, since all preference relations appear equally often in the profile $R$, the profile $R^\pi$ equals $R$ up to renaming the voters. Hence, there is another bijection $\tau:V_{R}\rightarrow V_{R^\pi}$ such that ${\succ}_v={\succ_{\tau(v)}^\pi}$ for all voters $v\in V_R$. Based on this permutation, we define the functions $\bar d^\pi$ and $\bar t^\pi$ by $\bar d^\pi(x,v)=d^{\pi}(x,\tau(v))$ and $\bar t^\pi(x)=t^\pi(x)$ for all $x\in X_R$ and $v\in V_R$. Since we essentially only rename variables in this step, it follows that $\bar d^\pi$, $\bar t^\pi$ are a feasible solution to \ref{LP} for $\mathit{dist}(p_{\hat x}, R, x^*)$. Moreover, the objective value of this solution is $\bar o_{LP}^\pi=o_{LP}^\pi=o_{LP}^0$. 
	
	Next, we define the solution $d_{LP}^1$, $t_{LP}^1$ by $d_{LP}^1(x,v)=\frac{1}{(m-2)!}\sum_{\pi \in \Pi} \bar d^{\pi}(x,v)$ and $t_{LP}^1(x)=\frac{1}{(m-2)!}\sum_{\pi \in \Pi} \bar t^{\pi}(x)$ for all $x\in X_R$ and $v\in V_R$. Since $d_{LP}^1$, $t_{LP}^1$ is a convex combination of feasible solutions of \ref{LP}, it is itself again feasible. Furthermore, for every $\pi\in \Pi$, it holds that $\bar o_{LP}^\pi=o_{LP}^0$, so the objective value of our new solution is $o_{LP}^1=o_{LP}^0$. In particular, this means that $d_{LP}^1$, $t_{LP}^1$ is an optimal solution to \ref{LP} (for $\mathit{dist}(p_{\hat x},R, x^*)$). Finally, we note that $t^1_{LP}(x)=\frac{1}{(m-2)!} \sum_{\pi \in \Pi} \bar t^{\pi}(x)=\frac{1}{m-2} \sum_{z\in X_R\setminus \{\hat x, x^*\}} t^0_{LP}(z)=\frac{1}{(m-2)!} \sum_{\pi \in \Pi} \bar t^{\pi}(y)=t^1_{LP}(y)$ for all $x,y\in X_R\setminus \{x^*, \hat x\}$. Thus, our new solution satisfies all our requirements.\medskip
	
	\textbf{Step 3:} As third step, we will show that there is a  biased metric $\bar d$ defined by a function $\bar t$ with $\bar t(x)=\bar t(y)$ for all $x,y\in X_R\setminus \{\hat x, x^*\}$ that constitutes an optimal solution to \ref{LP}. For this, let $d^1_{LP}$, $t^1_{LP}$ denote the optimal solution constructed in the last step. First, we note that for all $x\in X_R\setminus \{x^*\}$, $v\in V_R$ with $d^1_{LP}(x,v)<d^1_{LP}(x^*,v)+\min_{y\in X_R\colon x\succeq_v y} t^1_{LP}(y)$, we can simply increase the value of $d^1_{LP}$ to $d^1_{LP}(x^*,v)+\min_{y\in X_R\colon x\succeq_v y} t(y)$ without violating any constraints. Moreover, increasing the value of $d^1_{LP}(x,v)$ does not reduce the objective value, so there is another optimal solution $d^2_{LP}$, $t^2_{LP}$ with $d^2_{LP}(x^*,v)=d^1_{LP}(x^*,v)$ for all $v\in V_R$, $t^2_{LP}(x)=t^1_{LP}(x)$ for all $x\in X_R$, and $d^2_{LP}(x,v)=d^2_{LP}(x^*,v)+\min_{y\in X_R\colon x\succeq_v y} t^2_{LP}(y)$ for all $x\in X_R\setminus \{x^*\}$, $v\in V_R$. 
	
	Next, we want to ensure that $d^2_{LP}(x^*,v)=\frac{1}{2}\max_{x,y\in X_R\colon x\succeq_v} t^2_{LP}(x)-t^2_{LP}(y)$. To this end, we assume that there is a voter $v^*$ such that $d^2_{LP}(x^*,v^*)>\frac{1}{2}\max_{x,y\in X_R\colon x\succeq_{v^*} y} t^2_{LP}(x)-t^2_{LP}(y)$. In this case, we define $\delta=d^2_{LP}(x^*,v^*)-\frac{1}{2}\max_{x,y\in X_R\colon x\succeq_{v^*} y} t^2_{LP}(x)-t^2_{LP}(y)$ and observe that $\delta<1$ as $d^2_{LP}(x^*,v)<1$ for all voters $v\in V_R$. In more detail, we first note here that the third constraint of \ref{LP} implies that $d^2_{LP}(x^*,v)\geq \frac{1}{2}(t^2_{LP}(x^*)-t^2_{LP}(x^*))=0$. So, if $d^2_{LP}(x^*,v)\geq 1$ for some voter $v\in V_R$, then $d^2_{LP}(x^*,v')=0$ for all $v'\in V_R\setminus \{v\}$. Since $m\geq 3$ and $X_R$ contains every preference relation equally often (and therefore at least once), there is for every alternative $x\in X_R\setminus \{x^*\}$ a voter $v'\in V_R\setminus \{v\}$ such that $x\succ_v x^*$. Since $d^2_{LP}(x^*,v')=t^2_{LP}(x^*)=0$, we can infer from the second and third conditions that $t^2_{LP}(x)=0$ for all $x\in X_R$. Moreover, the fourth condition then implies that $d^2_{LP}(x,v')\leq 0$ for all $x\in X_R$, $v'\in V_R\setminus \{v\}$ and that $d^2_{LP}(\hat x, v)\leq 1$, so the optimal objective value is at most $1$. However, the biased metric $\hat d$ corresponds to a feasible solution with a higher objective value, so $d^2_{LP}(x^*,v)<1$ for all $v\in V_R$. 
	
	Now, consider the solution $\tilde{d}$, $\tilde{t}$ derived from $d^2_{LP}$ and $t^2_{LP}$ by setting $\tilde{d}(x,v^*)=d^2_{LP}(x,v^*)-\delta$ for all $x\in X_R$. We first note that $\tilde d$, $\tilde t$ still satisfies the first four constraints of \ref{LP}. Moreover, it holds for all $x\in X_R$ that $\tilde{d}(x,v^*)=\tilde{d}(x^*,v^*)+\min_{y\in X_R\colon x\succeq_{v^*} y} \tilde{t}(y)$, so $\tilde{d}(x,v^*)+\tilde{d}(x^*,v^*)=2\tilde{d}(x^*,v^*)+\min_{y\in X_R\colon x\succeq_{v^*} y} \tilde{t}(y)\geq \tilde{t}(x)$ because $2\tilde{d}(x^*,v^*)\geq \tilde t(x)-\min_{y\in X_R\colon x\succeq_{v^*} y} \tilde{t}(y)$. Hence, our new solution only violates the normalization condition of \ref{LP}, and we can restore this by scaling all variables by the value $\frac{1}{1-\delta}$, i.e., $\tilde{d}'(x,v)=\frac{1}{1-\delta} \tilde{d}(x,v)$ and $\tilde{t}'(x)=\frac{1}{1-\delta} \tilde{t}(x)$ for all $x\in X_R$ and $v\in V_R$ while leaving the remaining conditions intact. Finally, we compute the objective value of our new solution $\tilde{d'}$, $\tilde{t}'$: 
	
	\begin{align*}
		\sum_{v\in V_R} \tilde{d'}(\hat x, v)
		&=\frac{1}{1-\delta}\sum_{v\in V_R} \tilde{d}(\hat x, v)\\
		&=\frac{1}{1-\delta}\sum_{v\in V_R} \left(d^2_{LP}(\hat x, v)\right) - \frac{\delta}{1-\delta}\\
		&=\frac{1}{1-\delta}\sum_{v\in V_R} \left(d^2_{LP}(x^*, v) + \!\!\!\min_{y\in X_R\colon \hat x\succeq_v y}\!\!t^2_{LP}(y)\right) - \frac{\delta}{1-\delta}\\
		&=\frac{1}{1-\delta} \sum_{v\in V_R} \left(\min_{y\in X_R\colon \hat x\succeq_v y} t^2_{LP}(y)\right) + \frac{1}{1-\delta}-\frac{\delta}{1-\delta}\\
		&\geq \sum_{v\in V_R} \left(d^2_{LP}(x^*,v)+\min_{y\in X_R\colon \hat x\succeq_v y} t^2_{LP}(y)\right)\\
		&=o_{LP}^2.
	\end{align*}
	
	Here, the first two inequalities use the definitions of $\tilde{d'}$ and $\tilde{d}$ respectively. Next, we apply that $d^2_{LP}(\hat x,v)=d^2_{LP}(x^*,v)+\min_{y\in X_R\colon \hat x\succeq_v y} t^2_{LP}(y)$ for all $v\in V_R$. In the third step, we then use that $\sum_{v\in V_R} d^2_{LP}(x^*,v)=1$. The remaining steps are simple arithmetic changes. This inequality proves that our new solution $\tilde{d'}$, $\tilde{t}'$ is an optimal solution to \ref{LP}. 
	
	Finally, we can repeat this step until we arrive at an optimal solution $d^3_{LP}$, $t^3_{LP}$ such that \emph{i)} $d^3_{LP}(x^*,v)=\frac{1}{2}\max_{x,y\in X_R\colon x\succeq_v y} t^3_{LP}(x)-t^3_{LP}(y)$ for all $v\in V_R$, \emph{ii)} $d^3_{LP}(x,v)=d^3_{LP}(x^*,v)+\min_{y\in X_R\colon x\succeq_v y} t^3_{LP}(y)$ for all $x\in X_R$, $v\in V_R$, \emph{iii)} $t^3_{LP}(x)=t^3_{LP}(y)$ for all $x,y\in X_R\setminus \{x^*,\hat x\}$, and \emph{iv)} $t^3_{LP}(x^*)=0$ and $t^3_{LP}(\hat x)\geq 0$. In particular, for the last points, we note that we only scale the values $t^1_{LP}$ by some constants during our constructions, so we directly inherit this insight from $d^1_{LP}$. Therefore, $d^3_{LP}$ is the biased metric $\bar d$ defined by $\bar t(x)=t^3_{LP}(x)$ for all $x\in X_R$.\medskip
	
	\textbf{Step 4:} As last step, we will show that $sc(\hat x, \hat d)\geq sc(\hat x, \bar d)$ for the metric $\bar d$ constructed during the last step. This completes the proof of this lemma since it means that $\hat d$, $\hat t$ are an optimal solution to \ref{LP}. To this end, we recall that the function $\hat t$ that defines $\hat d$ is specified by a single value $\ell\in \mathbb{R}_{>0}$: $\hat t(x^*)=0$ and $\hat t(x)=\ell$ for all $x\in X_R\setminus \{x^*\}$. Moreover, the function $\bar t$ that defines $\bar d$ is specified by two values $\ell_1$ and $\ell_2$: $\bar t(x^*)=0$, $\bar t(\hat x)=\ell_1$ and $\bar t(x)=\ell_2$ for all $x\in X_R\setminus \{\hat x, x^*\}$. If $\ell_1=\ell_2>0$, we are done and we thus suppose that $\ell_1\neq\ell_2$. 
	
	First suppose that $\ell_1\leq \ell$. In this case, we first note that $sc(x^*,\hat d)=sc(x^*,\bar d)=1$ by construction, so we will aim to show that $sc(\hat x,\hat d)-sc(x^*,\hat d)\geq sc(\hat x,\bar d)-sc(x^*,\bar d)$. To this end, we observe that 
	\begin{align*}
		sc(\hat x,\hat d)-sc(x^*,\hat d)=\sum_{v\in V_R} \min_{y\in X_R\colon x\succeq_v y} \hat t(y)=\frac{n_R}{2} \ell
	\end{align*}
	because half of the voters prefer $\hat x$ to $x^*$ (which means that $\min_{y\in X_R\colon \hat x\succeq_v y} \hat t(y)=0$) and the other half of the voters prefers $x^*$ to $\hat x$ (which means that $\min_{y\in X_R\colon \hat x\succeq_v y} \hat t(y)=\ell$). An analogous argument shows that $sc(\hat x,\bar d)-sc(x^*,\bar d)\leq \frac{n_R}{2} \ell_1$. Finally, since $\ell_1\leq \ell$, this implies that $sc(\hat x, \hat d)-sc(x^*, \hat d)\geq sc(\hat x, \bar d)-sc(x^*, \bar d)$, which shows that the lemma holds in this case.
	
	We thus suppose next that $\ell_1> \ell$, which implies that $\ell_2<\ell$. Indeed, if $\ell_1>\ell$ and $\ell_2\geq \ell$, then $\hat d(x^*,v)\leq \bar d(x^*,v)$ for all $v\in V_R$, and the inequality is strict for all voters that rank $x^*$ below $\hat x$. In more detail, it holds that $\hat d(x^*,v)=0\leq \bar d(x^*,v)$ for all voters $v$ that top-rank $x^*$ and $\hat d(x^*,v)=\frac{\ell}{2}\leq\frac{\min(\ell_1,\ell_2)}{2}\leq \bar d(x^*,v)$ for all other voters. Hence, $sc(x^*,\bar d)>sc(x^*,\hat d)=1$, which contradicts that $sc(x^*,\bar d)=1$. So, we derive indeed that $\ell_2<\ell$. 
	
	We thus suppose that $\ell_2<\ell<\ell_1$ and assume for contradiction that $sc(\hat x, \hat d)< sc(\hat x, \bar d)$. Since $sc(x^*, \hat d)=sc(x^*, \bar d)=1$, this assumption implies that $sc(\hat x, \hat d)-sc(x^*,\hat d)<sc(\hat x, \bar d)-sc(x^*,\bar d)$. We will thus compute the values of these differences and therefore recall that $sc(\hat x,\hat d)-sc(x^*,\hat d)=\frac{n_R}{2} \ell$. Moreover, $\bar d(\hat x, v)=\bar d(x^*,v)$ for all voters $v\in V_R$ with $\hat x\succ_v x^*$, $\bar d(\hat x, v)=\bar d(x^*,v)+\ell_1$ for all voters $v\in V_R$ that bottom-rank $\hat x$, and $\bar d(\hat x, v)=\bar d(x^*,v)+\ell_2$ for all remaining voters as these prefer $\hat x$ to some alternative $x\neq x^*$. Since $\frac{n_R}{2}$ voters prefer $\hat x$ to $x^*$, $\frac{n_R}{m}$ voters bottom-rank $\hat x$ in $R$, there are $n_R(\frac{1}{2}-\frac{1}{m_R})$ voters in the last case. Consequently, 
	\begin{align*}
		sc(\hat x, \bar d)-sc(x^*, \bar d)=\frac{n_R}{m}\ell_1+(\frac{n_R}{2}-\frac{n_R}{m})\ell_2.
	\end{align*}
	
	Because $sc(\hat x, \hat d)-sc(x^*,\hat d)<sc(\hat x, \bar d)-sc(x^*,\bar d)$, we conclude that 
	\begin{align}
		&\frac{n_R}{2}\ell<\frac{n_R}{m}\ell_1+(\frac{n_R}{2}-\frac{n_R}{m})\ell_2\nonumber\\
		\iff &(\frac{n_R}{2}-\frac{n_R}{m})(\ell-\ell_2)<\frac{n_R}{m}(\ell_1-\ell)\nonumber\\
		\iff &\frac{m-2}{2}(\ell-\ell_2)<\ell_1-\ell.\label{eq3}
	\end{align}
	
	To derive a contradiction, we next use that $sc(x^*,\hat d)=sc(x^*,\bar d)$. We hence observe that 
	\begin{align*}
		1=sc(x^*,\hat d)=n_R\cdot \frac{m-1}{m}\cdot \frac{\ell}{2}
	\end{align*}
	as the $\frac{n_R}{m}$ voters who top-rank $x^*$ satisfy $\hat d(x^*,v)=\max_{x,y\in X_R\colon x\succeq_v y} \hat t(x)-\hat t(y)=0$ and all other voters have $d(x^*,v)=\frac{\ell}{2}$. 
	
	Furthermore, to compute $sc(x^*,\bar d)$, we will determine (lower bounds on) $\bar d(x^*,v)$ for every voter $v\in V_R$. To verify the subsequent values, it suffices to identify the pair of alternatives $x,y\in X_R$ with $x\succeq_v y$ that maximizes $\frac{1}{2} (\bar t(x)-\bar t(y))$ due to the definition of $\bar d(x^*,v)$. 
	\begin{itemize}
		\item $\bar d(x^*,v)\geq 0$ for all voters $v$ top-rank $x^*$. There are $\frac{n_R}{m}$ such voters.
		\item $\bar d(x^*,v)=\frac{\ell_1}{2}$ for all voters $v$ that bottom-rank $x^*$. There are $\frac{n_R}{m}$ such voters.
		\item $\bar d(x^*,v)=\frac{\ell_1}{2}$ for all voters $v$ that neither top-rank nor bottom-rank $x^*$ and that prefer $\hat x$ to $x^*$. We note that there are $n_R\frac{m-2}{m}$ voters that neither top-rank nor bottom-rank $x^*$ and exactly half of them prefer $\hat x$ to $x^*$. Hence, there are $\frac{n_R(m-2)}{2m}$ such voters.
		\item $\bar d(x^*,v)\geq \frac{\ell_2}{2}$ for all voters $v$ that neither top-rank nor bottom-rank $x^*$ and that prefer $x^*$ to $\hat x$. The central observation for this is that these voters prefer an alternative $x$ with $\bar t(x)=\ell_2$ to $x^*$. Analogous to the last case, there are $\frac{n_R(m-2)}{2m}$ such voters.
	\end{itemize}
	
	Finally, we can now lower bound $sc(x^*, \bar d)$:
	\begin{align*}
		sc(x^*, \bar d)&=\sum_{v\in V_R} \bar d(x^*, v)
		\geq \frac{n_R}{2}\left(\frac{1}{m} 0 + \frac{1}{m} \ell_1 + \frac{m-2}{2m}\ell_1 + \frac{m-2}{2m} \ell_2\right)
		=\frac{n_R}{2m}\left(\frac{m}{2}\ell_1 + \frac{m-2}{2}\ell_2\right)
	\end{align*}
	
	On the other side, we have $sc(x^*,\bar d)=sc(x^*, \hat d)=1$. Since $sc(x^*,\hat d)=\frac{n_R(m-1)}{2m}\ell$, we derive that 
	\begin{align}
		\frac{n_R(m-1)}{2m}\ell&\geq \frac{n_R}{2m}\left(\frac{m}{2}\ell_1 + \frac{m-2}{2}\ell_2\right)\nonumber\\
		\iff \frac{m-2}{2}(\ell-\ell_2)&\geq \frac{m}{2}(\ell_1-\ell)\nonumber\\
		\iff\frac{m-2}{m}(\ell-\ell_2)&\geq \ell_1 -\ell.\label{eq4}
	\end{align}
	
	Finally, we get from Equations \ref{eq3} and \ref{eq4} that $\frac{m-2}{2}(\ell-\ell_2)<\ell_1-\ell\leq \frac{m-2}{m}(\ell-\ell_2)$. This is a contradiction as $m\geq 3$, so the initial assumption that $\frac{sc(\hat x, \hat d)}{sc(x^*, \hat d)}< \frac{sc(\hat x, \bar d)}{sc(x^*,\bar d)}$ must have been wrong. We have now exhausted all cases and thus conclude that $sc(\hat x, \hat d)\geq sc(\hat x, \bar d)$, which finally proves the lemma.
\end{proof}

Due to \Cref{lem:optmetric}, we can now compute the metric distortion of every lottery on a profile $R$ with $n_{\succ}(R)=n_{\succ'}(R)$ for all ${\succ},{\succ'}\in \mathcal{R}(X_R)$.

\begin{lemma}\label{lem:distuni}
	Assume $m\geq 3$ and let $R\in\mathcal{R}^*_m$ denote a profile such that $n_{\succ}(R)=n_{\succ'}(R)>0$ for all preference relations ${\succ},{\succ'}\in \mathcal{R}(X_R)$. It holds for every lottery $p\in \Delta(X_R)$ that $\mathit{dist}(p,R)=2+\frac{1}{m-1}-\frac{m}{m-1}\min_{x\in X_R} p(x)$. 
\end{lemma}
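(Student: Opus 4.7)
The plan is to use the previous lemma to reduce the computation of $\mathit{dist}(p,R)$ to evaluating $\frac{sc(p,d^*)}{sc(x^*,d^*)}$ for the single biased metric $d^*$ induced by $t(x^*)=0$, $t(x)=2$ for $x\neq x^*$, and then optimize over the choice of $x^*\in X_R$. Since $\mathit{dist}(p,R)=\max_{x^*\in X_R}\mathit{dist}(p,R,x^*)$, and since the lemma gives $\mathit{dist}(p,R,x^*)=\frac{sc(p,d^*)}{sc(x^*,d^*)}$ for each $x^*$, everything reduces to an explicit arithmetic computation based on how the $\frac{n_R}{m!}$ copies of each preference relation partition the electorate with respect to $x^*$.

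Fix $x^*\in X_R$. Using the defining formulas of the biased metric, I would first compute $d^*(x^*,v)$: for voters top-ranking $x^*$, every pair $x\succeq_v y$ gives $t(x)-t(y)\leq 0$, so $d^*(x^*,v)=0$; for voters not top-ranking $x^*$, some $x\neq x^*$ satisfies $x\succ_v x^*$, yielding $d^*(x^*,v)=\tfrac12(2-0)=1$. Because each preference relation appears equally often, exactly $\frac{n_R}{m}$ voters top-rank $x^*$, so $sc(x^*,d^*)=\frac{(m-1)n_R}{m}$. Next, for any $x\neq x^*$ and voter $v$, the rule $d^*(x,v)=d^*(x^*,v)+\min_{y:x\succeq_v y} t(y)$ gives exactly three voter types: those top-ranking $x^*$ (contributing $0+2$), those preferring $x^*$ to $x$ without top-ranking $x^*$ (contributing $1+2$), and those preferring $x$ to $x^*$ (contributing $1+0$). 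By the uniform-frequency assumption their counts are $\frac{n_R}{m}$, $\frac{n_R}{2}-\frac{n_R}{m}$, and $\frac{n_R}{2}$ respectively, which yields $sc(x,d^*)=\frac{(2m-1)n_R}{m}$ independent of the choice of $x\neq x^*$.

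Combining these with $sc(p,d^*)=p(x^*)sc(x^*,d^*)+(1-p(x^*))\cdot\frac{(2m-1)n_R}{m}$ and dividing by $sc(x^*,d^*)$, a short simplification gives
\[
\mathit{dist}(p,R,x^*)=\frac{(m-1)p(x^*)+(2m-1)(1-p(x^*))}{m-1}=2+\frac{1}{m-1}-\frac{m}{m-1}p(x^*).
\]
Since this is a decreasing linear function of $p(x^*)$, the maximum over $x^*\in X_R$ is attained at an alternative minimizing $p(x^*)$, giving the claimed value $2+\frac{1}{m-1}-\frac{m}{m-1}\min_{x\in X_R}p(x)$. The only real obstacle here is careful accounting in the three-type voter split; everything else is routine substitution once the previous lemma is invoked.
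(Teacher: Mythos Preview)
Your proof is correct and follows essentially the same approach as the paper: invoke the preceding lemma to reduce to the single biased metric, compute $sc(x^*,d^*)=\frac{(m-1)n_R}{m}$ and $sc(x,d^*)=\frac{(2m-1)n_R}{m}$ for $x\neq x^*$ via a voter-type case split, and then maximize the resulting expression over $x^*$. The only cosmetic difference is that the paper uses a four-case split (separating voters who bottom-rank $x^*$ from other voters preferring $x$ to $x^*$) where you use three, but the arithmetic and logic are identical.
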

\begin{proof}
	Let $R$ denote a profile such that $n_{\succ}(R)=n_{\succ'}(R)>0$ for all preference relations ${\succ},{\succ'}\in \mathcal{R}(X_R)$ and consider an arbitrary lottery $p$. We will next compute $\mathit{dist}(p,R,x^*)$ for every alternative $x^*\in X_R$. To this end, we use that, by \Cref{lem:optmetric}, $\mathit{dist}(p,R,x^*)=\frac{sc(p,d)}{sc(x^*,d)}$ for the biased metric $d$ defined by the function $t$ with $t(x^*)=0$ and $t(x)=2$ for all $x\in X_R\setminus \{x^*\}$. Next, we observe that $\frac{sc(p,d)}{sc(x^*,d)}=\sum_{x\in X_R} p(x) \frac{sc(x,d)}{sc(x^*,d)}$. We will thus compute the social cost of every alternative. 
	
	For $x^*$, we first note that $d(x^*,v)=0$ for all voters that top-rank $x^*$ and $d(x^*,v)=1$ for all other voters. Hence, it is easy to infer that $sc(x^*,d)=\frac{n_R(m-1)}{m}$. By contrast, to compute the social cost of an alternative $x\in X_R\setminus \{x\}$, we need a more elaborate analysis of the distances $d(x,v)$: 
	\begin{itemize}
		\item $d(x,v)=2$ for all voters who top-rank $x^*$. There are $\frac{n_R}{m}$ such voters. 
		\item $d(x,v)=1$ for all voters who bottom-rank $x^*$. There are $\frac{n_R}{m}$ such voters. 
		\item $d(x,v)=1$ for all voters who do neither top-rank nor bottom-rank $x^*$ and prefer $x$ to $x^*$. There are $\frac{n_R (m-2)}{m}$ voters who do neither top-rank nor bottom-rank $x^*$ and precisely half of them prefer $x$ to $x^*$. Thus, there are $\frac{n_R (m-2)}{2m}$ such voters.
		\item $d(x,v)=3$ for all voters who do neither top-rank nor bottom-rank $x^*$ and prefer $x^*$ to $x$. There are again $\frac{n_R (m-2)}{2m}$ such voters.
	\end{itemize}
	
	We can hence compute that 
	\begin{align*}
		sc(x,d)& =\sum_{v\in V_R} d(x,v)
		=n_R\left(\frac{2}{m}+\frac{1}{m}+\frac{m-2}{2m}+\frac{3(m-2)}{2m}\right)
		=\frac{n_R}{m}(2m-1).
	\end{align*}
	
	It hence follows that $\frac{sc(x^*,d)}{sc(x^*,d)}=1$ and $\frac{sc(x,d)}{sc(x^*,d)}=\frac{2m-1}{m-1}=2+\frac{1}{m-1}$. Moreover, we can now compute that $\frac{sc(p,d)}{sc(x^*,d)}=(1-p(x^*))(2+\frac{1}{m-1})+p(x^*)=2+\frac{1}{m-1}-\frac{m}{m-1}p(x^*)$. Clearly, this function is decreasing in $p(x^*)$, so we derive that $\mathit{dist}(p,R)=2+\frac{1}{m-1}-\frac{m}{m-1}\min_{x\in X_R} p(x)$.
\end{proof}

We note that, by \Cref{lem:distuni}, the optimal lottery $p$ for a profile $R$ with $n_{\succ}(R)=n_{\succ'}(R)>0$ for all ${\succ},{\succ'}\in \mathcal{R}(X_R)$, assigns probability $p(x)=\frac{1}{m}$ to all $x\in X_R$. In particular, this lottery achieves a metric distortion of $2$ for $R$. By contrast, every lottery that assigns $0$ to some alternative has a metric distortion of $2+\frac{1}{m-1}$ in $R$. 

To be able to use \Cref{lem:distuni} in the analysis of the expected metric distortion of RSCFs, we observe that each preference relation will appear roughly equally often with high probability in a preference profile drawn from the IC distribution if the number of voters $n$ is sufficiently large. However, we cannot expect to get precisely a profile where every preference relation appears equally often, and we thus give next a lemma that allows to bound the metric distortion of a lottery $p$ in a profile $R$ based on a large subprofile of~$R$.

\begin{lemma}\label{lem:merge}
	Let $R$ be a profile and let $p\in \Delta(X_R)$ denote a lottery. Moreover, let $R'$ denote a profile derived from $R$ by choosing a subset of the voters $V_{R'}\subsetneq V_R$ and setting ${\succ_v'}={\succ_v}$ for all $v\in V_{R'}$, and define $\alpha=1-\frac{|V_{R'}|}{|V_R|}$. If $\mathit{dist}(p,R)<\infty$ and $\mathit{dist}(p,R')<\infty$, then $\mathit{dist}(p,R)\leq \mathit{dist}(p,R')+\alpha(\mathit{dist}(p,R)+1)$.
\end{lemma}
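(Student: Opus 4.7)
The plan is to fix an arbitrary metric $d\in D(R)$ and a cost-minimizing alternative $x^*\in \argmin_{x\in X_R} sc_R(x,d)$, and then to carefully split the social cost along the voter partition $V_R = V_{R'}\cup (V_R\setminus V_{R'})$. Writing $sc_{R'}$ and $sc_{R''}$ for the social costs that sum only over $V_{R'}$ and over $V_R\setminus V_{R'}$ respectively, we have $sc_R(\cdot,d)=sc_{R'}(\cdot,d)+sc_{R''}(\cdot,d)$, so it suffices to bound each piece separately and then divide by $sc_R(x^*,d)$.

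For the first piece, the key observation is that restricting $d$ to $V_{R'}\cup X_R$ yields a metric consistent with $R'$, since the preferences of the retained voters are unchanged. Hence $sc_{R'}(p,d)\leq \mathit{dist}(p,R')\cdot\min_{x}sc_{R'}(x,d)\leq \mathit{dist}(p,R')\cdot sc_{R'}(x^*,d)$. Combining this with the trivial bound $sc_{R''}(x^*,d)\leq sc_R(x^*,d)-sc_{R'}(x^*,d)$ and using $\mathit{dist}(p,R')\geq 1$ produces $\mathit{dist}(p,R')\cdot sc_{R'}(x^*,d)+sc_{R''}(x^*,d)\leq \mathit{dist}(p,R')\cdot sc_R(x^*,d)$, which will be the ``clean'' term in the final inequality.

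For the second piece, a single triangle inequality $d(v,x)\leq d(v,x^*)+d(x^*,x)$ summed over $v\in V_R\setminus V_{R'}$ gives $sc_{R''}(x,d)\leq sc_{R''}(x^*,d)+\alpha n_R\cdot d(x^*,x)$, so that $sc_{R''}(p,d)\leq sc_{R''}(x^*,d)+\alpha n_R\sum_x p(x)\,d(x^*,x)$. The main obstacle is now to bound the inter-alternative distances $d(x^*,x)$ by quantities that already appear in the inequality we are trying to prove; the trick is to average the triangle inequality over the \emph{full} electorate: for every $v\in V_R$, $d(x^*,x)\leq d(x^*,v)+d(v,x)$, and averaging over $v\in V_R$ yields $d(x^*,x)\leq (sc_R(x^*,d)+sc_R(x,d))/n_R$. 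Substituting this bound and taking the $p$-weighted sum gives $\alpha n_R\sum_x p(x)\,d(x^*,x)\leq \alpha(sc_R(x^*,d)+sc_R(p,d))$.

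Adding the two pieces yields $sc_R(p,d)\leq \mathit{dist}(p,R')\cdot sc_R(x^*,d)+\alpha\bigl(sc_R(x^*,d)+sc_R(p,d)\bigr)$. Dividing by $sc_R(x^*,d)$ gives $\frac{sc_R(p,d)}{sc_R(x^*,d)}\leq \mathit{dist}(p,R')+\alpha\bigl(1+\frac{sc_R(p,d)}{sc_R(x^*,d)}\bigr)$, and taking the supremum over $d\in D(R)$ produces $\mathit{dist}(p,R)\leq \mathit{dist}(p,R')+\alpha(\mathit{dist}(p,R)+1)$, as required. The only remaining care is the degenerate case $sc_R(x^*,d)=0$: here $\mathit{dist}(p,R)<\infty$ forces $sc_R(p,d)=0$ as well (using the convention $\frac{0}{0}=1$), so the ratio is trivially $1$, which is bounded by the right-hand side since $\mathit{dist}(p,R')\geq 1$.
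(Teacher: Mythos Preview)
Your proof is correct and follows essentially the same approach as the paper's: both arguments split the social cost along $V_{R'}$ versus $V_R\setminus V_{R'}$, bound the contribution of the extra voters via $d(v,x)\leq d(v,x^*)+d(x^*,x)$, control $d(x^*,x)$ by averaging the triangle inequality over the full electorate, and absorb the leftover $sc_{R''}(x^*,d)$ term using $\mathit{dist}(p,R')\geq 1$. Your presentation is in fact slightly tidier: you work with an arbitrary $d\in D(R)$ and take the supremum at the end (the paper tacitly assumes the supremum is attained), and you explicitly dispose of the degenerate case $sc_R(x^*,d)=0$.
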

\begin{proof}
	Let $R$ and $R'$ denote two profiles as defined by the lemma and let $\alpha=1-\frac{|V_{R'}|}{|V_R|}$. Moreover, consider an arbitrary lottery $p$, let $d$ denote the metric $d\in D(R)$ that maximizes $\frac{sc(p,d)}{\min_{x\in X_R} sc(x,d)}$, and let $x^*$ denote an alternative with $sc(x^*,d)=\min_{x\in X_R} sc(x,d)$. Finally, we define the set $\bar V_{R}=V_R\setminus V_{R'}$ and note that $\alpha n_R=|\bar V_R|$. Our main goal is to bound $\sum_{v\in \bar V_R} d(x,v)$ for every alternative $x\in X_R$. To this end, we first note that $d(x,v)\leq d(x,x^*)+d(x^*,v)$ for every voter $v\in V_R$. Moreover, $d(x,x^*)\leq d(x,v)+d(v,x^*)$ for every voter $v\in V_R$, so $d(x, x^*)\leq \frac{1}{n_R}(sc(x,d)+sc(x^*,d))$. Combining these insights means that 
	\begin{align*}
		\sum_{v\in \bar V_R} d(v,x)&\leq |\bar V_R| d(x,x^*)+\sum_{v\in \bar V_R} d(x^*,v)
		\leq \alpha (sc(x,d)+sc(x^*,d))+ \sum_{v\in \bar V_R} d(x^*,v).
	\end{align*}
	
	Hence, we now compute that 
	\begin{align*}
		\mathit{dist}(p,R)&=\sum_{x\in X_R}p(x) \frac{ sc(x,d)}{sc(x^*,d)}\\
		&= \sum_{x\in X_R } p(x) \frac{\sum_{v\in V_{R'}} d(x,v)+\sum_{v\in \bar V_{R}} d(x,v)}{sc(x^*,d)}\\
		&\leq \sum_{x\in X_R } p(x) \frac{\sum_{v\in V_{R'}} d(x,v)+\sum_{v\in \bar V_{R}} d(x^*,v)}{sc(x^*,d)} + \sum_{x\in X_R } p(x)\frac{\alpha(sc(x,d)+sc(x^*,d))}{sc(x^*,d)}\\
		&=\sum_{x\in X_R } p(x) \frac{\sum_{v\in V_{R'}} d(x,v)}{\sum_{v\in V_{R'}} d(x^*,v)} \cdot \frac{\sum_{v\in V_{R'}} d(x^*,v)}{sc(x^*,d)} + \frac{sc(x^*,d)-\sum_{v\in V_{R'}} d(x^*,v)}{sc(x^*,d)} + \alpha \left(\frac{sc(p,d)}{sc(x^*,d)} +1\right)\\
		&\leq \mathit{dist}(p,R') \frac{\sum_{v\in V_{R'}} d(x^*,v)}{sc(x^*,d)} + \alpha(\mathit{dist}(p,R)+1)  + \mathit{dist}(p,R')\frac{sc(x^*,d)-\sum_{v\in V_{R'}} d(x^*,v)}{sc(x^*,d)} \\
		&= \mathit{dist}(p,R')+ \alpha (\mathit{dist}(p,R)+1).
	\end{align*}
	
	The first two equalities merely employ definitions. The next step uses our previously deduced upper bound for $\sum_{v\in\bar V_R} d(x,v)$. In the fifth step, we substitute $\frac{\sum_{v\in V_{R'}} d(x,v)+\sum_{v\in \bar V_{R}} d(x^*,v)}{sc(x^*,d)}$ by $\frac{\sum_{v\in V_{R'}} d(x,v)}{\sum_{v\in V_{R'}} d(x^*,v)}\cdot \frac{{\sum_{v\in V_{R'}} d(x^*,v)}}{sc(x^*,d)}+\frac{sc(x^*,d)-\sum_{v\in V_{R'}} d(x^*,v)}{sc(x^*,d)}$ and simplify the last summand. Next, we use that $1\leq \mathit{dist}(p,R')$ and $\frac{\sum_{v\in V_{R'}} d(x,v)}{\sum_{v\in V_{R'}} d(x^*,v)}\leq \mathit{dist}(p,R')$, and the last step is just a simple transformation. This completes the proof of this lemma. 
\end{proof}

Based on analysis so far, we now prove \Cref{thm:ML}. 

\ML*
\begin{proof}
	Fix some number of voters and alternatives $m$ and $n$ such that $n$ is significantly larger than $m!$ (i.e., such that all subsequent terms are well-defined) and consider an arbitrary RSCF $f$ with $\mathit{dist}_m(f)<\infty$. We will give lower and upper bounds on $\mathbb{E}_{R\sim IC(m,n)}[dist(f(R),R)]$ that converge to $2+\frac{z}{m-1}$ and $2+\frac{1}{m-1}$ respectively. To facilitate the proof, we let $R$ denote a random variable which is distributed according to $IC(m,n)$, and set $\alpha=\frac{1}{\sqrt[3]{n}}$ and $y=\mathit{dist}_m(f)$. Moreover, we define $T^\alpha$ as the set of profiles $R'$ such that $n_{\succ}(R')>(1-\alpha)\frac{n}{m!}$ for all ${\succ}\in \mathcal{R}(X_m)$, and $S$ as the set of profiles $R'$ such that $f(R',x)=0$ for some $x\in X_R$.
	By the law of total probability, we first observe that
	\begin{align*}
		\mathbb{E}[\mathit{dist}(f(R),R)|]
		&=\mathbb{P}[R\not\in T^\alpha]\cdot \mathbb{E}[\mathit{dist}(f(R),R)|R\not\in T^\alpha]\\
		& \quad+ \mathbb{P}[R\in T^\alpha]\cdot\mathbb{E}[\mathit{dist}(f(R),R)|R\in T^\alpha].
	\end{align*}
	
	Based on this insight, we will now compute our upper and lower bounds. 
	\medskip
	
	\textbf{Upper bound}: For our upper bound, we fist note that $\mathbb{E}[\mathit{dist}(f(R),R)|R\not\in T^\alpha]\leq y$ as $\mathit{dist}(f(R),R)\leq y$ for all profiles $R$ on $m$ alternatives. Moreover, it holds for every preference relation ${\succ_1}$ that 
	\begin{align*}
		\mathbb{P}[R\not\in T^\alpha]&=\mathbb{P}[\exists\, {\succ}\in \mathcal{R}(X_R)\colon n_{\succ}(R)\leq (1-\alpha)\frac{n}{m!}]\\
		&\leq m!\cdot \mathbb{P}[n_{\succ_1}(R)\leq (1-\alpha)\frac{n}{m!}]\\
		&\leq m!e^{-\frac{\alpha^2}{2}\cdot\frac{n}{m!}}\\
		&=m!e^{-\frac{\sqrt[3]{n}}{2m!}}.
	\end{align*}
	
	Here, the first inequality is the union bound and the second one a standard Chernoff bound, where we use that $\frac{n}{m!}$ is the expectation of $n_{\succ_1}(R)$. This means that 
	\begin{align*}
		&\mathbb{E}[\mathit{dist}(f(R),R)|]\leq ym!e^{-\frac{\sqrt[3]{n}}{2m!}} + (1-m!e^{-\frac{\sqrt[3]{n}}{2m!}}) \mathbb{E}[\mathit{dist}(f(R),R)|R\in T^\alpha]. 
	\end{align*}
	
	We next aim to bound $\mathbb{E}[\mathit{dist}(f(R),R)|R\in T^\alpha]$. Towards this, we note that every profile $R\in T^\alpha$ has a subprofile $R'$ such that $n_{\succ}(R')=\lceil (1-\alpha)\frac{n}{m!}\rceil$ for every ${\succ}\in\mathcal{R}(X_m)$. Now, by \Cref{lem:distuni}, it follows that $\mathit{dist}(f(R), R')\leq 2+\frac{1}{m-1}$. Applying \Cref{lem:merge} then shows that 
	\begin{align*}
		\mathit{dist}(f(R),R))&\leq 2+\frac{1}{m-1} + (1-\frac{|V_{R'}|}{n}) (y+1)
		\leq  2+\frac{1}{m-1}+\alpha (y+1).
	\end{align*}
	
	Hence, we can now conclude that 
	\begin{align*}
		&\mathbb{E}[\mathit{dist}(f(R),R)|]\leq ym!e^{-\frac{\sqrt[3]{n}}{2m!}} + (1-m!e^{-\frac{\sqrt[3]{n}}{2m!}})(2+\frac{1}{m-1}+\alpha(y+1)).
	\end{align*}

	Since $\alpha=\frac{1}{\sqrt[3]{n}}$, taking the limit shows that \[\limsup_{n\rightarrow\infty} \mathbb{E}_{R\sim IC(m,n)}[\mathit{dist}(f(R),R)]\leq 2+\frac{1}{m-1}.\]
	
	\textbf{Lower bound:} For the lower bound, we note that $\mathbb{E}[\mathit{dist}(f(R),R)|R\not\in T^\alpha]\geq 1$ because $\mathit{dist}(f(R),R)\geq 1$ for every profile $R$. Hence, we derive that
	\begin{align*}
		\mathbb{E}[\mathit{dist}(f(R),R)]
		&= \mathbb{P}[R\not\in T^\alpha] \cdot\mathbb{E}[\mathit{dist}(f(R),R)|R\not\in T^\alpha]\\
		&\quad + \mathbb{P}[R\in T^\alpha\setminus S]\cdot \mathbb{E}[\mathit{dist}(f(R),R)|R\in T^\alpha\setminus S]\\
		&\quad + \mathbb{P}[R\in T^\alpha\cap S] \cdot\mathbb{E}[\mathit{dist}(f(R),R)|R\in T^\alpha\cap S]\\
		&\geq \mathbb{P}[R\in T^\alpha\setminus S]\cdot \mathbb{E}[\mathit{dist}(f(R),R)|R\in T^\alpha\setminus S]\\
		&\quad + \mathbb{P}[R\in T^\alpha\cap S] \cdot\mathbb{E}[\mathit{dist}(f(R),R)|R\in T^\alpha\cap S].
	\end{align*}

	Next, it is simple to see that $\mathbb{P}[R\in T^\alpha\cap S]\geq 1- \mathbb{P}[R\not\in T^\alpha]-\mathbb{P}[R\not\in S]=\mathbb{P}[R\in S]-\mathbb{P}[R\not\in T^\alpha]$. Moreover, as shown for the upper bound, it holds that $\mathbb{P}[R\not\in T^\alpha]\leq m!e^{-\frac{\sqrt[3]{n}}{2m!}}$, so we have that 
	\begin{align*}
		&\mathbb{P}[R\in T^\alpha\cap S]\geq \mathbb{P}[R\in S]-m!e^{-\frac{\sqrt[3]{n}}{2m!}}.
	\end{align*}
	
	Subsequently, we will derive lower bounds on our expectations and first analyze $\mathbb{E}[\mathit{dist}(f(R),R)|R\in T^\alpha\cap S]$. To this end, we first fix a profile $R\in T^\alpha\cap S$ and investigate $\mathit{dist}(f(R),R)$. Moreover, let $x^*$ denote an alternative with $f(R,x^*)=0$ (which exists as $R\in S$) and consider the biased metric $d\in D(R)$ given by the function $t$ with $t(x^*)=0$ and $t(x)=2$ for all $x\in X_R\setminus \{x^*\}$. We observe again that $R$ has a subprofile $R'$ such that $n_\succ(R')=\lceil (1-\alpha)\frac{n}{m!}\rceil$ (because $R\in T^\alpha$). 
	Similar to the proof of \Cref{lem:optmetric}, it is easy to show for all $x\in X_R\setminus \{x^*\}$ that $\sum_{v\in V_{R'}} d(x,v)=\frac{n_{R'}(2m-1)}{m}$ and that $\sum_{v\in V_{R'}} d(x^*,v)=\frac{n_{R'}(m-1)}{m}$. Since $f(R,x^*)=0$, we can compute that 
	\begin{align*}
		sc(f(R),d)&=\sum_{x\in X_R}f(R,x)\sum_{v\in V_R} d(v,x)
		\geq \sum_{x\in X_R}f(R,x)\sum_{v\in V_{R'}} d(v,x)
		= \frac{n_{R'}(2m-1)}{m}
		\geq (1-\alpha)n \frac{2m-1}{m}.
	\end{align*}
	
	By contrast, we can infer that $sc(x^*,R)\leq \frac{n_{R'}(m-1)}{m}+(n-n_{R'})\leq (1-\alpha)n\frac{m-1}{m}+\alpha n$. In particular, we note for this inequality that $n_{R'}=m!\lceil(1-\alpha)\frac{n}{m}!\rceil\geq (1-\alpha)n$ and that $d(v,x^*)\leq 1$ for all $v\in V_R$. We can now derive that 
	\begin{align*}
		\frac{sc(f(R),d)}{sc(x^*,d)}&\geq \frac{(1-\alpha)n \frac{2m-1}{m}}{(1-\alpha)n\frac{m-1}{m}+\alpha n}
		=\frac{\frac{2m-1}{m-1}}{1+\frac{\alpha m}{(1-\alpha)(m-1)}}
		=(2+\frac{1}{m-1})\cdot \frac{1}{1+\frac{m}{m-1}\cdot\frac{1}{\sqrt[3]{n}-1}}.
	\end{align*}
	
	Finally, we conclude that $\mathit{dist}(f(R),R)\geq \frac{sc(f(R),R)}{sc(x^*,d)}\geq (2+\frac{1}{m-1})\cdot \frac{1}{1+\frac{m}{m-1}\cdot\frac{1}{\sqrt[3]{n}-1}}$ for all $R\in T^\alpha\cap S$. As a consequence, $\mathbb{E}[\mathit{dist}(f(R),R)|R\in T^\alpha\cap S]\geq (2+\frac{1}{m-1})\cdot \frac{1}{1+\frac{m}{m-1}\cdot\frac{1}{\sqrt[3]{n}-1}}$, too. 
	
	Next, we will bound $\mathbb{E}[\mathit{dist}(f(R),R)|R\in T^\alpha\setminus S]$. To this end, let $R\in T^\alpha\setminus S$, let $x^*\in X^R$ denote an alternative that minimizes $f(R,x^*)$, and let $d$ denote the same biased metric as before. Since $R\in T^\alpha$, there is a subprofile $R'$ that contains every ballot precisely $n_{\succ}(R')=\lceil(1-\alpha) \frac{n}{m!}\rceil$ times. Since $f(R,x^*)\leq\frac{1}{m}$, we can compute that 
	\begin{align*}
		sc(f(R), d)&\geq \sum_{v\in V_{R'}} \sum_{x\in X_R} f(R,x) d(x,v)\\
		&=(1-f(R,x^*)) \frac{n_{R'}(2m-1)}{m}+f(R,x^*)\frac{n_{R'}(m-1)}{m}\\
		&\geq 2n_{R'}\frac{m-1}{m}\\
		& \geq 2(1-\alpha) n\frac{m-1}{m}.
	\end{align*} 
	
	Moreover, by our previous analysis, $sc(x^*,d)\leq (1-\alpha) n\frac{m-1}{m}+\alpha n$. Hence, we derive that 
	\begin{align*}
		\mathit{dist}(f(R),R)&\geq \frac{sc(f(R),d)}{sc(x^*,d)}\\
		&\geq \frac{2(1-\alpha) n\frac{m-1}{m}}{(1-\alpha)\frac{m-1}{m} n+\alpha n}\\
		&\geq 2\frac{1}{1+\frac{m}{m-1}\cdot\frac{\alpha}{1-\alpha}}\\
		&\geq 2\frac{1}{1+\frac{m}{m-1}\cdot\frac{1}{\sqrt[3]{n}-1}}.
	\end{align*}
	
	Since this holds for every $R\in T^\alpha$, we infer that $\mathbb{E}[\mathit{dist}(f(R),R)|R\in T^\alpha\setminus S]\geq 2\frac{1}{1+\frac{1}{\sqrt[3]{n}-1}}$. Finally, we can now put everything together: 
	\begin{align*}
		\mathbb{E}[\mathit{dist}(f(R),R)] &\geq \mathbb{P}[R\in T^\alpha\setminus S]\cdot \mathbb{E}[\mathit{dist}(f(R),R)|R\in T^\alpha\setminus S]
		+ \mathbb{P}[R\in T^\alpha\cap S] \cdot\mathbb{E}[\mathit{dist}(f(R),R)|R\in T^\alpha\cap S]\\
		&\geq \mathbb{P}[R\in T^\alpha\setminus S]\cdot 2\cdot \frac{1}{1+\frac{m}{m-1}\cdot\frac{1}{\sqrt[3]{n}-1}}
		 + \mathbb{P}[R\in T^\alpha\cap S]\cdot(2+\frac{1}{m-1})\cdot \frac{1}{1+\frac{m}{m-1}\cdot\frac{1}{\sqrt[3]{n}-1}}\\
		&= \mathbb{P}[R\in T^\alpha]\cdot 2\cdot \frac{1}{1+\frac{m}{m-1}\cdot\frac{1}{\sqrt[3]{n}-1}}
		+\mathbb{P}[R\in T^\alpha\cap S]\cdot \frac{1}{m-1}\cdot \frac{1}{1+\frac{m}{m-1}\cdot\frac{1}{\sqrt[3]{n}-1}}\\
		&\geq \frac{1}{1+\frac{m}{m-1}\cdot\frac{1}{\sqrt[3]{n}-1}} \cdot 2\cdot (1-m!e^{-\frac{\sqrt[3]{n}}{2m!}})
		+ \frac{1}{1+\frac{m}{m-1}\cdot\frac{1}{\sqrt[3]{n}-1}}\cdot (\mathbb{P}[R\in S]- m!e^{-\frac{\sqrt[3]{n}}{2m!}})\cdot \frac{1}{m-1}.
	\end{align*}
	
	Now, it is easy to verify that 
	\begin{align*}
		&\lim_{n\rightarrow\infty} \frac{1}{1+\frac{m}{m-1}\cdot\frac{1}{\sqrt[3]{n}-1}}=1 \text{    and}\\
		&\lim_{n\rightarrow\infty} m!e^{-\frac{\sqrt[3]{n}}{2m!}}\cdot (2+\frac{1}{m-1})\cdot \frac{1}{1+\frac{m}{m-1}\cdot\frac{1}{\sqrt[3]{n}}}=0.
	\end{align*}
	
	Since $z=\liminf_{n\rightarrow \infty} \mathbb{P}[R\in S]$, it hence follows that $
	\liminf_{n\rightarrow \infty} \mathbb{E}_{R\sim IC(m,n)}[\mathit{dist}(f(R),R)]\geq 2 + \frac{z}{m-1}.
	$
\end{proof}

Finally, we present the proof of \Cref{prop:expecteddistRD}.

\expectedRD*
\begin{proof}
	We will prove each of our three claims separately.\medskip
	
	\textbf{Claim (1): Uniform Random Dictatorship.}\\
	We first will show that $\lim_{n\rightarrow\infty} \mathbb{E}_{R\sim IC(m,n,)}[\mathit{dist}(f_\mathit{RD}(R), R)]=2$. To this end, we note that $z=\liminf_{n\rightarrow n} \mathbb{P}[\exists x\in X_R\colon f(R,x)=0]\geq 0$ because we consider a probability here. Hence, we can simplify the lower bound in \Cref{thm:ML} to $\liminf_{n\rightarrow \infty} \mathbb{E}_{R\sim IC(m,n)}[\mathit{dist}(f_{RD}(R),R)]\geq 2$. We will next show a matching upper bound on this limit, i.e., that $\limsup_{n\rightarrow \infty} \mathbb{E}_{R\sim IC(m,n)}[\mathit{dist}(f_{RD}(R),R)]\leq 2$, which then implies our claim. In more detail, we will infer an upper bound on $\mathbb{E}_{R\sim IC(m,n)}[\mathit{dist}(f_{RD}(R),R)]$ for every fixed $n$ that will converge to $2$. For this, we denote by $R$ a random variable that is distributed according to $IC(m,n)$ and set $\alpha=\frac{1}{\sqrt[3]{n}}$. We furthermore define by $T^\alpha$ the set of profiles on $n$ voters and $m$ alternatives such that $n_{\succ}(R)>(1-\alpha) \frac{n}{m!}$ for all ${\succ}\in \mathcal{R}(X_R)$ and note that, by the law of total probability, it holds that
	\begin{align*}
		\mathbb{E}[\mathit{dist}(f_{RD}(R),R)|]&=\mathbb{P}[R\not\in T^\alpha]\cdot \mathbb{E}[\mathit{dist}(f_{RD}(R),R)|R\not\in T^\alpha]\\
		& + \mathbb{P}[R\in T^\alpha]\cdot \mathbb{E}[\mathit{dist}(f_{RD}(R),R)|R\in T^\alpha].
	\end{align*}
	
	For our upper bound, we note that $\mathbb{E}[\mathit{dist}(f_{RD}(R),R)|R\not\in T^\alpha]\leq 3$ as $\mathit{dist}(f_{RD}(R),R)\leq 3$ for all profiles $R$. Moreover, just in the proof of \Cref{thm:ML}, we can infer that 
	\begin{align*}
		\mathbb{P}[R\not\in T^\alpha]
		&\leq m!e^{-\frac{\sqrt[3]{n}}{2m!}}.
	\end{align*}
	
	We next aim to bound $\mathbb{E}[\mathit{dist}(f_{RD}(R),R)|R\in T^\alpha]$. For this, we observe that $f_\mathit{RD}(R,x)> (1-\alpha) \frac{1}{m}$ for all $x\in X_R$ and $R\in T^\alpha$. Next, let $R'$ denote the subprofile of $R$ such that each preference relation appears $\lceil(1-\alpha) \frac{n}{m!}\rceil$ times; such a subprofile exists as $R\in T^\alpha$. By \Cref{lem:distuni}, we hence have that $\mathit{dist}(f_{RD}(R), R')\leq 2+\frac{1}{m-1}-\frac{m}{m-1}\cdot (1-\alpha) \frac{1}{m}=2+\frac{\alpha}{m-1}$. By \Cref{lem:merge} and the fact that $\mathit{dist}(f_{RD}(R),R)\leq 3$ for all profiles $R$, we furthermore conclude for all $R\in T^\alpha$ that 
	\begin{align*}
		\mathit{dist}(f_{RD}(R),R)&\leq \mathit{dist}(f_{RD}(R), R')+\frac{n-|V_{R'}|}{n}(1+\mathit{dist}(f_{RD}(R), R))\\
		&\leq 2+\frac{\alpha}{m-1}+4\frac{n-m!\lceil (1-\alpha)\frac{n}{m!}\rceil}{n}\\
		&\leq 2+\frac{\alpha}{m-1} + 4\alpha\\
		&=2+\frac{1}{\sqrt[3]{n}}(4+\frac{1}{m-1}).
	\end{align*}
	
	Since $2+\frac{1}{\sqrt[3]{n}}(4+\frac{1}{m-1})\leq 3$ for sufficiently large $n$, we can now compute $\mathbb{E}[\mathit{dist}(f(R),R)]$: 
	\begin{align*}
		&\mathbb{E}[\mathit{dist}(f_{RD}(R),R)|]
		\leq 3m!e^{-\frac{\sqrt[3]{n}}{2m!}} + (1-m!e^{-\frac{\sqrt[3]{n}}{2m!}})(2+\frac{1}{\sqrt[3]{n}}(4+\frac{1}{m-1})). 
	\end{align*}
	
	Finally, it is easy to check that this bound indeed converges to $2$ as $n$ goes to infinity, thus demonstrating that $\limsup_{n\rightarrow\infty} \mathbb{E}_{R\sim IC(m,n)}[\mathit{dist}(f_{\mathit{RD}}(R),R)]\leq 2$. 
	\medskip
		
		\textbf{Claim (2): Randomized Plurality-Veto.}\\
		For the randomized Plurality-Veto rule, we first note that $\limsup_{n\rightarrow\infty}\mathbb{E}_{R\sim IC(m,n)}[\mathit{dist}(f_\mathit{PRV}(R),R)]\leq 2+\frac{1}{m-1}$ by \Cref{thm:ML}. We thus only need to improve the lower bound. To this end, we will prove that $\liminf_{n\rightarrow\infty} \mathbb{P}[\exists x\in X\colon f_\mathit{PRV}(R,x)=x]=1$ as the Claim (2) in \Cref{thm:ML} then shows that $\liminf_{n\rightarrow\infty}\mathbb{E}_{R\sim IC(m,n)}[\mathit{dist}(f_\mathit{PRV}(R),R)]\geq 2+\frac{1}{m-1}$.
		
		Since $f_\mathit{PRV}$ randomizes over the set of Plurality-Veto winners, we start our proof by reasoning about when every alternatives is in the set of Plurality-Veto winners $\mathit{PV}(R)$. To this end, we recall that $t_R(x)$ is the number of voters that top-rank alternative $x$ in the profile $R$, and we define $b_R(x)$ as the number of voters that bottom-rank alternative $x$ in $R$. We then claim that $\mathit{PV}(R)=X_R$ holds if and only if $b_R(x)=t_R(x)>0$ for all $x\in X$. We start by showing the direction from left to right and assume for contradiction that there is a profile $R$ such that $\mathit{PV}(R)=X_R$ even though there is an alternative $x$ such that $b_R(x)\neq t_R(x)$. (We note that, if $t_R(x)=0$, then it follows trivially that $x\not\in \mathit{PV}(R)$). By a simple counting argument, it follows that if $b_R(x)<t_R(x)$ for some $x\in X_R$, then there is also an alternative $y$ such that $b_R(y)>t_R(y)$. Because of this, we will assume that $b_R(x)>t_R(x)$. Now, in order for $x$ to be chosen by Plurality-Veto, the plurality scores $t_R(y)$ of every other alternative $y\neq x$ must be decreased to $0$. This requires at least $\sum_{y\in X_R\setminus \{x\}} t_R(y)=|V_R|-t_R(x)$ voters as every voter only decreases the score of an alternative by $1$. However, the voters who bottom-rank $x$ will not reduce the score of any other alternative if $x$ is the winner. Hence, there are only $|V_R|-b_R(x)<|V_R|-t_R(x)$ voters that can decrease the scores of the alternatives in $X_R\setminus \{x\}$. This contradicts that all these scores hit $0$, so the assumption that $x\in\mathit{PV}(R)$ is wrong. For the other direction, we assume that $t_R(x)=b_R(x)>0$ for all $x\in X_R$. Then, it is easy to see that $x\in \mathit{PV}(R)$ by ordering the voters by their bottom-ranked alternatives and ensuring that those who bottom-rank $x$ are last in our sequence. This completes the proof of our auxiliary claim  
		
		By this insight, it holds for the random variable $R\sim IC(m,n)$ (for some fixed $m$ and $n$), and an arbitrary alternative $x^*$ that 
		\begin{align*}
			\mathbb{P}[\exists x\in X\colon f_\mathit{PRV}(R,x)=0]&=1-\mathbb{P}[\forall x\in X\colon f_\mathit{PRV}(R,x)>0]\\
			&=1-\mathbb{P}[\forall x\in X_R\colon t_R(x)=b_R(x)>0]\\
			&\geq 1-\mathbb{P}[t_R(x^*)=b_R(x^*)].
		\end{align*}

		Now, define $Z(R)=t_R(x^*)-b_R(x^*)$ and define $r_i$ as the random variable such that $r_i(R)=1$ if $x^*$ is the favorite alternative of voter $i$ in $R$, $r_i(R)=-1$ if $x^*$ is the least preferred alternative of voter $i$ in $R$, and $r_i(R)=0$ otherwise. It is straightforward to verify that $Z(R)=\sum_{i\in V_R} r_i(R)$. Moreover, by the definition of the impartial culture distribution, all random variables $r_i(R)$ are independent of each other (because the preference relation of each voter is sampled independently from that of all other voters). We can now apply the central limit theorem to infer that $\mathbb{P}[Z(R)=x]$ converges to a normal distribution with mean $0$ (as the mean of $Z(R)$ is easily shown to be $0$). In principle, this is sufficient to infer that $\mathbb{P}[t_R(x^*)=b_R(x^*)]$ converges to $0$ as $n$ goes to $\infty$. We will, however, make this more precise by using the Berry-Esseen theorem \citep{Berr41a,Esse42a}, which bounds the error of the central limit theorem for a finite $n$. To apply this result, we first observe that $\mathbb{E}[r_i(R)]=0$, $\sigma=\sqrt{\mathbb{E}[r_i(R)^2]}=\sqrt{\frac{2}{m}}$, and $\varphi=\mathbb{E}[|t_i(R)|^3]\leq 1$. Now, let $\Pi(x)=\frac{1}{\sqrt{2\pi}}\int_{-\infty}^x e^{-t^2/2} dt$ denote the cumulative distribution  of the standard normal distribution. The Berry-Esseen theorem \citep{Berr41a,Esse42a} then states that there is a constant $C\leq 1$ such that for every $n$ and every $x$
		
		\begin{align*}
			|\mathbb{P}[\frac{Z(R)}{\sqrt{2n/m}}\leq x]-\Phi(x)|\leq \frac{C \varphi}{\sigma^3 \sqrt n}.
		\end{align*}
		
		We will use this bound for $x=0$, for which $\mathbb{P}[\frac{Z(R)}{\sqrt{2n/m}}\leq 0]=\mathbb{P}[Z(R)\leq 0]$ and $\Phi(0)=\frac{1}{2}$. Moreover, by defining $C'=\frac{C\varphi}{\sigma^3}$, we infer from this inequality that $\mathbb{P}[Z(R)\leq 0]\leq \frac{1}{2}+\frac{C'}{\sqrt n}$. Finally, we note that $\mathbb{P}[Z(R)\leq 0]=\mathbb{P}[Z(R)\geq 0]$ because if $Z(R)\leq 0$ for some profile $R$, then $Z(R')\geq 0$ for the profile $R'$ where we reversed the preferences of all voters. This means that 
		\begin{align*}
			\mathbb{P}[Z(R)=0]&=\mathbb{P}[Z(R)\leq 0]+\mathbb{P}[Z(R)\geq 0]-1
			=2\mathbb{P}[Z(R)\leq 0]-1
			\leq \frac{2C'}{\sqrt n}.
		\end{align*}
		
		Combining this with our first inequality shows that, for every $n\in\mathbb{N}$, that
		
		\begin{align*}
			\mathbb{P}[\exists x\in X\colon f_\mathit{PRV}(R,x)=0]\geq 1-\frac{2C'}{\sqrt n}.
		\end{align*}
		
		It is now straightforward to see that the right side converges to $1$, which finally proves our claim. 
	\end{proof}

\end{document}